\renewcommand{\citet}[1]{\citeauthor{#1}~\citep{#1}}
\newcommand{\email}[1]{\href{mailto:#1}{\nolinkurl{#1}}}
\newtheorem{theorem}{Theorem}
\newtheorem{corollary}{Corollary}
\newtheorem{lemma}{Lemma}
\newcommand{\secref}[1]{Section~\ref{#1}}
\newcommand{\thmref}[1]{Theorem~\ref{#1}}
\newcommand{\lemref}[1]{Lemma~\ref{#1}}
\newcommand{\figref}[1]{Figure~\ref{#1}}
\newcommand{\appref}[1]{Appendix~\ref{#1}}
\newcommand{\corref}[1]{Corollary~\ref{#1}}
\newcommand{\tagthis}{\stepcounter{equation}\tag{\theequation}}
\newcommand{\ie}{i.e.,\xspace}
\newcommand{\eg}{e.g.,\xspace}
\newcommand{\R}{\mathbb{R}}
\DeclareMathOperator*{\E}{\mathbb{E}}
\newcommand{\midd}{:}
\newcommand{\rgeq}[1]{\R_{\smash\geq}^{#1}}
\newcommand{\rgeqk}{\rgeq{k}}
\newcommand{\vmax}{\bar{v}}
\newcommand{\aGSP}{$\alpha$-GSP\xspace}
\newcommand{\aVCG}{$\alpha$-VCG\xspace}
\newcommand{\aGFP}{$\alpha$-GFP\xspace}
\newcommand{\alphenumi}{\renewcommand{\theenumi}{({\alph{enumi}})}
    \renewcommand{\labelenumi}{\theenumi}}
\def\clap#1{\hbox to 0pt{\hss#1\hss}}
\def\mathllap{\mathpalette\mathllapinternal} \def\mathrlap{\mathpalette\mathrlapinternal} \def\mathclap{\mathpalette\mathclapinternal}
\def\mathllapinternal#1#2{\llap{$\mathsurround=0pt#1{#2}$}}
\def\mathrlapinternal#1#2{\rlap{$\mathsurround=0pt#1{#2}$}}
\def\mathclapinternal#1#2{\clap{$\mathsurround=0pt#1{#2}$}}
\def\clapstack#1{\mathclap{\substack{#1}}}
\begin{document}

\title{Non-Truthful Position Auctions Are More Robust to Misspecification\thanks{Valuable feedback from Dirk Bergemann, Allan Borodin, the anonymous referees, and seminar participants at Google, Facebook, Universit\"at Z\"urich, and the University of Glasgow is gratefully acknowledged. The second author was supported by the Einstein Foundation Berlin.}}

\author{%
	Paul D\"utting\thanks{%
		Department of Mathematics,
		London School of Economics,
		Houghton Street,
		London WC2A 2AE, UK.
		Email: \email{p.d.duetting@lse.ac.uk}.}
	\and 
	Felix Fischer\thanks{%
		School of Mathematical Sciences,
		Queen Mary University of London,
		Mile End Road,
		London E1 4NS, UK.
		Email: \email{felix.fischer@qmul.ac.uk}.}
	\and
	David C.~Parkes\thanks{%
		John A.~Paulson School of Engineering and Applied Sciences,
		Harvard University,
		33 Oxford Street,
		Cambridge, MA 02138, USA.
		Email: \email{parkes@eecs.harvard.edu}.}
}

%\date{June 19, 2018}
\date{}

\maketitle

\begin{abstract}
	In the standard single-dimensional model of position auctions, bidders agree on the relative values of the positions and each of them submits a single bid that is interpreted in terms of these values. Motivated by current practice in sponsored search we consider a situation where the auctioneer uses estimates of the relative values, which may be imprecise, and show that under both complete and incomplete information a non-truthful mechanism is able to support an efficient outcome in equilibrium for a wider range of these estimates than the VCG mechanism. We thus exhibit a property of the VCG mechanism that may help explain the surprising rarity with which it is used even in settings with unit demand, a relative lack of robustness to misspecification of the bidding language. The result for complete information concerns the generalized second-price mechanism and lends additional theoretical support to the use of this mechanism in practice. Particularly interesting from a technical perspective is the result for incomplete information, which is driven by a surprising connection between equilibrium bids in the VCG mechanism and the generalized first-price mechanism.
	
	\medskip
	
	%% JEL classification:
	
	%% D Microeconomics
	%% D4 Market Structure, Pricing, and Design
	%% D44 Auctions
	
	%% L Industrial Organization
	%% L8 Industry Studies: Services
	%% L81 Retail and Wholesale Trade; e-Commerce
	
	%% M Business Administration and Business Economics; Marketing; Accounting; Personnel Economics
	%% M3 Marketing and Advertising
	%% M37 Advertising
	
	\noindent\emph{JEL classification:} D44; L81; M37\\
	\noindent\emph{Keywords:} Position auctions; Sponsored search; Model misspecification; Robustness
	\end{abstract}

% \begin{keyword}
% 	Auction Theory, Sponsored Search, Advertising Auction, Generalized Second-Price Auction, Generalized First-Price Auction, VCG Auction
% \end{keyword}

\section{Introduction}

The Vickrey-Clarke-Groves (VCG) mechanism stands as one of the pillars of mechanism design theory, but in the real world is used with surprising rarity. Past work has attributed this mismatch between theory and practice to a number of properties that affect the mechanism in certain settings, like susceptibility to collusion or prohibitive computational costs~\citep{AuMi06a,Roth07a}. Here we identify another property of the mechanism that may be problematic in practice, a relative lack of robustness to misspecification of the bidding language. Unlike most of the known deficiencies it applies already in settings with unit demand.

%% HERE: This is what Milgrom [2010] writes
%% [...] the most devastating objections apply only when bidders can buy multiple items. In sponsored-search auctions, each bidder acquires at most one position, so the main objections of the ealier literature have no force. 

\subsection{The Model}

Our point of departure is the standard position auction model of \citet{EOS07a} and \citet{Vari07a}, where~$n$ bidders compete for the assignment of~$k$ positions. Bidders have unit demand and the valuation of bidder~$i$ for position~$j$ is given by $\beta_j\cdot v_i$, where $v_i$ is a bidder-specific value and a non-increasing vector $\beta=(\beta_1,\dots,\beta_k)$ %, \textcolor{blue}{which is the same for all bidders,} 
describes the relative values of the positions. The most prominent application of this model is to sponsored search, which contributed a significant fraction to Google's advertising revenue of around \$95 billion in~2017~\citep{abc}. What is rather curious is that the most successful sponsored search services to date have used non-truthful auction mechanisms rather than the VCG mechanism.
%\footnote{A notable exception is Yandex, which announced in August 2015 that it will start using the VCG mechanism . \cite{}.}
GoTo, later re-branded as Overture and acquired by Yahoo, was the first company to provide such a service and used a generalized first-price (GFP) mechanism. Google and Microsoft use a generalized second-price (GSP) mechanism.
% \footnote{They do so despite the fact that the most serious disadvantages of the truthful VCG mechanism pointed out by \citeauthor{AuMi06a} and by \citeauthor{Roth07a} do not apply in unit-demand settings like ours, where a bidder cares only about the most valuable item it is assigned~\citep{Milg10a,MiSe14a}. Notably, in our setting, the VCG mechanism can be implemented in polynomial time.}
%% Facebook and Google do use the VCG mechanism to place ads, but not in the context of sponsored search and not in a position auction~\citep{VaHa14a}. 
Facebook does use the VCG mechanism to place ads, but not in the context of sponsored search and not in a position auction.\footnote{The ad selection and pricing problem faced by Facebook is indeed very different from a position auction, as ads come in different formats and can be placed flexibly within the news feed or next to it. The situation is similar for contextual ads in sponsored search~\citep{VaHa14a}.}
It is hard to say in retrospect what led to the selection of the non-truthful mechanisms, and changing the mechanism at this point would clearly come with huge financial risks. We will see, however, that under certain assumptions choosing the non-truthful mechanisms may have been wise even if it was not entirely deliberate.

\enlargethispage{1\baselineskip}

\citeauthor{EOS07a}, \citeauthor{Vari07a}, and much of the subsequent literature have considered mechanisms that from each bidder~$i$ solicit a single bid~$b_i$, which is then interpreted as a vector of bids $\beta_j\cdot b_i$, one for each position~$j$, to determine a one-to-one assignment of bidders to positions and a monetary payment for each bidder. The conflation of a vector of bids into a single bid mirrors practical mechanisms~\citep[\eg][]{Bing15a,Goog15a}. It is a deliberate design decision that greatly increases ease of use for the bidders and has also been shown across mechanisms to eliminate certain outcomes that are undesirable from the auctioneer's point of view~\citep{Milg10a}. %It is also without loss as long as auctioneer and bidders agree on the relative values of the positions.

Any real-world use of a mechanism requires a choice of bidding language and bears a risk of misspecification.
 In the case of sponsored search the assumption that may be problematic is that the auctioneer knows the exact value of~$\beta$, which in practice is inferred using techniques from statistical machine learning~\cite{GraepelCBH10,McMahanHSYEGNPDGCLWHBK13}. While these techniques are capable of producing fairly accurate estimates, they will typically not produce an estimate that is entirely exact.%
 \footnote{It is worth pointing out here that search engines do observe when an ad is clicked. Thus, if the relative values of the positions were exactly proportional to the relative number of clicks, all auctions we consider could be implemented without any knowledge of~$\beta$ \citep{VaHa14a}. 
 %% HEREX 
%\textcolor{blue}{because the ranking rule does not depend on $\beta$, and the role that $\beta$ plays
%in defining payments can be handled by conditioning payments on realized
%clicks~\citep{VaHa14a}.}
%% 
 %\textcolor{blue}{For GSP and GFP this is because the ranking of bidders does not depend on $\beta$, and the correct payments can be achieved by conditioning payments on realized clicks.  For VCG it is less obvious how to achieve the correct payments on the fly as the payment for position $j$ depends on $\beta_{j'}$ for all $j' \geq j$, but this can be handled by debiting and crediting the bidders whenever a click occurs~\citep{VaHa14a}.} 
 There are, however, good reasons why the value of a position may depend on other factors besides the probability of a click. \citet{Milg10a} for example considers a situation with two types of users of a search engine, one of them genuinely interested in the products being advertised and one merely curious. The two types come with different rates at which clicks on an advertisement result in a purchase, but are indistinguishable from the point of view of the search engine.} 
The result will be a slight misspecification of the bidding language, such that the way in which a single bid is extrapolated to bids on individual positions is not completely aligned with the actual relative values of the positions. Our goal will be to analyze how robust different mechanisms are to this kind of misspecification.

%\textcolor{blue}{To this end, we consider a variation on the standard model where the bidding language is
%incorrectly specified, and projects bid $b_i$ to positions via $\alpha_j
%\cdot b_i$,}
%% HERE X
To this end, we consider a generalization of the standard model in which mechanisms work with bids $\alpha_j\cdot b_i$, 
where $\alpha=(\alpha_1,\dots,\alpha_k)$ is a non-increasing vector and generally $\alpha\neq\beta$. We then compare the performance of different mechanisms as $\alpha$ and~$\beta$ vary, and ask specifically for which values of~$\alpha$ and~$\beta$ each mechanism possesses an efficient equilibrium.
We follow the usual approach of analyzing bidder behavior via game-theoretic reasoning, under the assumption that each bidder tries to maximize the value of the position it is assigned minus its payment.\footnote{Implicit in this approach is the assumption that a bidder can evaluate its utility and hence the relative values of the positions. That bidders but not the auctioneer are aware of the relative values seems reasonable in sponsored search, 
%\textcolor{blue}{in the case that the value for an advertiser is observable by that
%advertiser but not by the search engine.}
because the generation of value for an advertiser could be observable by that advertiser but not by the search engine. 
In the model of \citet{Milg10a} involving two types of users of a search engine, for example, advertisers would be able to observe purchase decisions on their websites but could be wary of sharing this information with the search engine due to trust concerns.}
% In the VCG mechanism in particular it is an equilibrium for each bidder to bid its true valuation, and we refer to the resulting assignment and payments in a given situation as the truthful VCG outcome. The assignment is efficient, \ie it has maximum overall value, or social welfare, among all assignments of positions to bidders.
%%
In addition to the standard model of auction theory, where bidders only have incomplete information about one another's valuations, it has become common to analyze position auctions also in a complete-information model where valuations are common knowledge among the bidders. This is motivated by practical auctions that provide bidders with aggregate statistics of others' bids and thus enable best-response bidding, by empirical support that has been given for a family of Nash equilibria~\citep{Vari07a,EOS07a}, and by cyclic bidding behavior observed in the absence of pure Nash equilibria~\citep{EOS07a}. %\footnote{In sponsored search an auction is carried out for a specific search term. Intuitively a complete-information model seems more appropriate for common search terms, an incomplete-information model for those that are rare.} 
%% HEREF: EdOs07a for cyclic bidding behavior
% Under both complete and incomplete information the VCG mechanism stands out by its truthfulness, by making it optimal for bidders to reveal their true values.
Like the vast majority of work on position auctions we focus on the maximization of welfare, which in practice can be seen as a maximization of customer satisfaction to ensure long-term success.
%\footnote{On a more technical level, the focus on welfare allows for a unified treatment of complete- and incomplete-information environments.} 
Extending our results to the maximization of revenue nevertheless is an interesting direction for future work.
% The truthful VCG outcome is a natural point of reference for at least two reasons. Under complete information it coincides with the bidder-optimal envy-free outcome, which happens to be unique~\citep{Leon83a}. Under incomplete information it is the unique outcome compatible with the maximization of social welfare~\citep{Myer81a}.

\subsection{Results}

We show that under both complete and incomplete information, a non-truthful mechanism is able to support an efficient outcome in equilibrium for a strictly larger set of values of~$\alpha$ and~$\beta$ than the VCG mechanism. Failure of the VCG mechanism to produce an efficient outcome can in fact occur already when~$\alpha$ is very close to~$\beta$.

The result for complete information follows from a comparison with the GSP mechanism and is obtained in \secref{sec:complete} via the combination of two results. Existence of an efficient equilibrium in the VCG mechanism is first shown to imply the existence of an efficient outcome satisfying the stronger property of envy-freeness. Since envy-freeness is independent of the underlying mechanism, it then suffices to show that the GSP mechanism is able to produce an outcome with the same assignment and payments.
%%Through direct arguments that combine the equilibrium conditions and the structure of allocation and payments in the truthful VCG outcome we obtain precise characterizations, for both the VCG and the GSP mechanism, of those values of $\alpha$ and $\beta$ that enable this outcome in equilibrium. We then exploit the recursive nature of VCG payments to show that any violation of the characterization for the GSP mechanism necessarily leads to a violation of the characterization for the VCG mechanism.

In \secref{sec:incomplete} the same type of result is shown to hold under incomplete information, but here the VCG mechanism is compared to the GFP mechanism and a more elaborate argument is required to establish superiority of the latter. We begin by using a standard technique for equilibrium characterization that equates the expected payments in an efficient equilibrium as given by Myerson's Lemma with the respective payments in the two mechanisms. This gives us a candidate equilibrium bidding function for each of the two mechanisms, and each of these functions constitutes an equilibrium if and only if it is strictly increasing almost everywhere. In the case of the VCG mechanism we encounter an ordinary differential equation, which we solve by appealing to a combinatorial equivalence. Even with the bidding functions for the VCG and GFP mechanisms at hand it is not trivial to show that the latter is increasing for a larger set of values of~$\alpha$ and~$\beta$, and we exploit a surprising connection between the two functions to show that this is indeed the case.

Strict superiority of the non-truthful mechanisms, and potential failure of the VCG mechanism when~$\alpha$ is very close to~$\beta$, is shown in Sections~\ref{sec:complete-example} and~\ref{sec:incomplete-example} by means of two examples. In these examples efficient equilibria cease to exist when mechanisms underestimate the value of less valuable positions. This makes the less valuable positions more attractive by reducing their associated payments, incentivizing bidders to shade their bid and do so more strongly as their value increases. The relatively lower payments in the VCG mechanism, for a fixed profile of bids, only magnify this effect and cause it to fail for smaller discrepancies between~$\alpha$ and~$\beta$. In settings with sufficiently many positions the failure can also occur when mechanisms overestimate the quality of some positions but underestimate the quality of others.
%%Here it seems to be the relative simplicity of payments that gives the GSP and GFP mechanisms an advantage. In the VCG mechanism a particular bid may simultaneously affect the payments of many bidders, setting the correct payments thus becomes impossible more quickly as~$\alpha$ and~$\beta$ move out of alignment. An orthogonal requirement that favors different non-truthful mechanisms under complete and incomplete information is that a bidder's ability to control its own payment must match the degree of knowledge it has of other bidders' valuations. It is well known, for example, that the GFP mechanism may not possess a pure Nash equilibrium under complete information~\citep{EOS07a} and that the GSP mechanisms may not possess a welfare-maximizing Bayes-Nash equilibrium under incomplete information~\cite{GoSw09a}, even when~$\alpha=\beta$.
%%
More generally, the GSP and GFP mechanisms seem to benefit from the relative simplicity of their payments, which for a given position depend only on one bid. In the VCG mechanism a particular bid may simultaneously affect the payments of many bidders, setting the correct equilibrium payments thus becomes impossible more quickly as~$\alpha$ and~$\beta$ move out of alignment.\footnote{An orthogonal requirement for equilibrium existence that favors different non-truthful mechanisms under complete and incomplete information is that a bidder's ability to control its own payment must match the degree of knowledge it has of other bidders' valuations. It is well known, for example, that the GFP mechanism may not possess a pure Nash equilibrium under complete information~\citep{EOS07a} and that the GSP mechanisms may not possess an efficient Bayes-Nash equilibrium under incomplete information~\cite{GoSw09a}, even when~$\alpha=\beta$.}

% \enlargethispage{\baselineskip}
The focus of our analysis is on standard mechanisms for position auctions and their robustness to misspecification regarding the relative values of the positions. The investigation of additional mechanisms and parameters, of more general settings, and of the interaction between auction mechanisms and learning algorithms used to infer the parameters, provide ample scope for future work.

\subsection{Related Work}

The risk of misspecification inherent in the choice of a bidding language recalls a common aphorism in statistics, first formulated in this form by George Box~\citep{Box76a}, that ``all models are wrong.'' To Box, the interesting question was not whether a model is an exact representation of the real world, but whether it is close enough to the truth to be useful. The role of model misspecification in mechanism design was recently highlighted by \citet{MaPr17a}, who designed a mechanism for the screening problem whose performance deteriorates gracefully in the distance between assumed and true preferences. We consider instead the effect of misspecification of the bidding language of an auction, and ask how much imprecision different mechanisms can tolerate while still producing an efficient outcome.

%Design simplicity and its positive impact on robustness are a reoccuring and intensifying theme in both theoretical computer science and economics. In theoretical computer science there is an increasing body of work which emphasizes the benefits of simplicity. Probably the most compelling examples come from Algorithmic Game Theory, where the advantages and limitations of simple mechanisms are an important research topic. Recent examples include work on mechanisms with near-optimal revenue \cite{Hartline09,Alaei13,Babaioff14,Yao15} or near-optimal welfare such as the smoothness framework \cite{Rough09,Syrgk13}, work that quantifies the price of anarchy in certain settings such as position auctions \cite{Leme10,Carag11} or combinatorial auctions \cite{Christ08,Bhawa11,Feld13}, work that identifies whole classes of algorithms whose approximation guarantees translate into equilibrium guarantees \cite{Lucier10,Lucier15,Duetting15}, and work that points out computational barriers to near-optimal equilibria \cite{Cai14,Rough14}.

An increased robustness of non-truthful mechanisms for position auctions in the sense we discuss here was first suggested by \citet{Milg10a} and investigated further by~\citet{DFP11a}.
%The effect of mispecification of the bidding language on equilibria of the GSP mechanism was studied also by \citet{AGV07a} and by \citet{BHN08a}.
The effect of misspecification of the bidding language on equilibria of the GSP mechanisms was studied also by \citet{AGV07a} and by \citet{BHN08a}.
% ~\citeauthor{DFP17a}~\citep{DFP11a,DFP17a}.
% , who compared the GSP and VCG mechanisms under complete information. Robust mechanisms for both complete- and incomplete-information settings were then given
% \citeauthor{Milg10a}'s result was then strengthened by \citeauthor{DFP14a}, who showed that a particular choice of~$\alpha$ guarantees the existence of an efficient equilibrium for all values of~$\beta$, for the GSP mechanism under complete information and the GFP mechanism under incomplete information~\citep{DFP11a}, while a GFP mechanism with more expressive bids achieves the same under both complete and incomplete information~\citep{DFP14a}.
%%
% This result was then strengthened by \citeauthor{DFP14a}, who identified a single value of~$\alpha$ with which the GSP and GFP mechanisms respectively obtain the truthful VCG outcome under complete and incomplete information for all values of~$\beta$~\citep{DFP11a}, as well as a GFP mechanism with more expressive bids that achieves the same under both complete and incomplete information~\citep{DFP14a}.
%%
What dis\-tinguishes our results from this past work is that we elucidate, for any given value of~$\beta$, which values of~$\alpha$ enable the existence of an efficient equilibrium in different mechanisms that use $\alpha$ as an estimate of $\beta$.
%They thus apply in particular to mechanisms currently in use, which work with an estimate of~$\beta$.
They thus apply to mechanisms currently in use, and allow us to draw conclusions about the relative robustness of these mechanisms to misspecification.

The performance of the VCG mechanism and that of alternative, non-truthful mechanisms has already been compared in the standard position auction model, where $\alpha=\beta$, and a number of authors have noted certain advantages or lack of disadvantages of the alternative mechanisms. Under complete information the GSP mechanism obtains the truthful VCG outcome in a locally envy-free equilibrium, and payments that at least match those of the truthful VCG outcome in any locally envy-free equilibrium~\citep{EOS07a,Vari07a}. While the former is true for a whole class of mechanisms that rank bidders in order of their bids, the GSP mechanism is the simplest mechanism within this class~\citep{BaRo10a}. Under incomplete information the GFP mechanism admits a unique Bayes-Nash equilibrium, which in expectation yields the truthful VCG outcome~\citep{ChHa13a}. Each of the two mechanisms has severe disadvantages in the respective other setting, such as non-existence of a pure Nash equilibrium or of an efficient Bayes-Nash equilibrium~\citep{EOS07a,GoSw09a}. In cases where equilibria exist, however, the worst-case welfare loss is bounded in the sense of a small price of anarchy~\citep{CKK+15a}.\footnote{In the language of the price-of-anarchy literature we essentially seek to characterize those values of~$\alpha$ and~$\beta$ for which the price of stability is one. Arguments similar to the ones used to establish the price-of-anarchy guarantees for the standard model also apply to the more general setting we study here, providing welfare guarantees that degrade gracefully in~$\alpha$ and~$\beta$.} When $\alpha=\beta$, and other things being equal, the VCG mechanism of course has the advantage of truthfulness across complete- and incomplete-information environments.
% . Our results concern a more general and arguably more realistic model and apply in a very similar way to both complete- and incomplete-information environments.
%Our results show that with misspecification this truthfulness of VCG is not only lost, but also that it loses the ability to support an efficient equilibrium more quickly than its non-truthful rivals.

Our work highlights one advantage of non-truthful mechanisms for position auctions. A concurrent line of work has identified additional advantages of non-revelation mechanisms, 
such as amenability to statistical inference~\citep{ChHa14} and guaranteed revenue in dynamic settings~\citep{HJW13a} and across complete- and incomplete-information environments~\citep{DFP17a}.
% , and the ability to approximately maximize welfare in general single-dimensional environments~\citep{HaTa16a}.

Our results finally fit more generally into an increasing body of work that emphasizes robustness and simplicity in economic and algorithmic design. Relevant examples of this type of work in economics include a theory of mechanisms with more robust knowledge assumptions~\citep{BeMo05a,BBM16a,Carr17a} 
% and has in fact identified robustness properties that directly lead to simpler mechanisms as an important direction
and the use of a greedy mechanism in the FCC Incentive Auction to achieve computational and strategic simplicity~\citep{MiSe14,DGR16a,DRT17a}. Additional examples come from algorithmic game theory, where recent work has obtained simple mechanisms with near-optimal revenue~\citep{Hartline09,CHMS10a,Babaioff14} or welfare~\citep{Christ08,Bhawa11,Feld13,DHS13,FGL15a,DFKL17a}, but has also pointed out computational barriers to near-optimal equilibria~\citep{Cai14,Rough14}.

%% HERE: This is a cool citation
%% From: http://www.princeton.edu/~smorris/pdfs/Morris-AnIntroductiontoRobustMechanismDesign.pdf (page 173)
% A natural theoretical question then is to ask whether it is possible to explicitly model the robustness considerations in such a way that stronger solution concepts and/or simpler mechanisms emerge endogenously.

\section{Preliminaries}

%% HERES?
%% ADD THESE TO THE DEFINITIONS
%% we assume n >= k
%% we assume alpha_k+1 = 0
%% define v_(i) as the ith highest value and b_(i) as the ith highest bid

% {\bf xxx additional definitions start here xxx}
%% HERE: DOUBLE CHECK THIS IS ALL WE NEED
%% HERE: MENTION COMMON DISTRIBUTION FUNCTIONS THAT SATISFY THESE REQUIREMENTS
%% HERE: WE MAY WANT TO MOVE THE ASSUMPTION THAT THE PDF IS TWICE DIFF TO THE RESPECTIVE THEOREM
% Suppose valuations are drawn from cumulative distribution function $F(v)$ with support $v \in [0,\bar{v}]$ and probability density function $f(v) > 0$. We assume that the probability density function $f$ is twice differentiable with finite first and second derivative.

	We study the standard setting of position auctions with~$k$ \emph{positions} ordered by quality and~$n\geq k$ \emph{bidders} with \emph{unit demand} and \emph{single-dimensional valuations} for the positions.\footnote{The assumption that $n\geq k$ is without loss of generality, as in a setting with $n<k$ bidders the $k-n$ lowest-valued positions would never be assigned by the mechanisms we consider.} Denote by $\rgeqk=\{x\in\R^k\midd x_j>0,\text{$x_{j}\geq x_{j'}$ if $j<j'$}\}$ the set of $k$-dimensional vectors whose entries are positive and non-increasing. Given $\beta\in\rgeqk$, which we assume to be common knowledge among the bidders, the valuation of a particular bidder~$i$ can then be represented by a scalar $v_i\in\R$, such that $\beta_j v_i\geq 0$ is the bidder's value for position~$j$. We will use the notational convention that $\beta_j=0$ when $j>k$.
	
	A \emph{mechanism} in this setting receives a profile $b\in\R^n$ of bids, assigns positions to bidders in a one-to-one fashion, and charges each bidder a non-negative payment. It can be represented by a pair $(g,p)$ of an \emph{allocation rule} $g:\R^n\rightarrow S_n$ and a \emph{payment rule} $p:\R^n\rightarrow\R^n$, such that for each $i\in\{1,\dots,n\}$, $g_i(b)=j$ for $j\in\{1,\dots,k\}$ means that bidder~$i$ is assigned position~$j$ and $p_i(b)$ is the payment charged to bidder~$i$. We will be concerned exclusively with mechanisms that assign positions in non-increasing order of bids, and henceforth denote by~$g$ an allocation rule that does so and breaks ties in an arbitrary but consistent manner.
% An allocation rule $g$ is called \emph{efficient} if $\sum_{i=1}^n\beta_{g_i(b)}b_i$ is maximal among all allocation rules, where we use the convention that $\beta_j=0$ if $j>k$. An efficient allocation can be obtained by assigning positions in non-increasing order of bidders' bids.	
% The sum of the payments resulting from a given profile of bids will also be referred to as the \emph{revenue} under this bid profile.
% Allocation rules and mechanisms are called \emph{efficient} if they maximize $\sum_{i=1}^n=\beta_{g_i(b)}v_i$, where we use the convention that $\beta_j=0$ if $j>k$. An efficient allocation can be obtained by assigning positions in non-increasing order of bidders' valuations. We will be concerned exclusively with mechanisms that are efficient with respect to the bids, and henceforth denote by $g$ an efficient allocation rule that breaks ties in an arbitrary but consistent manner.
	% The goal is of course to obtain an assignment that is efficient with respect to values rather than bids when bidders behave strategically.
	The role of payments is to incentivize bids resulting in an \emph{efficient assignment}, \ie one where positions are assigned in order of valuations and \emph{social welfare} $\sum_{i=1}^n\beta_{g_i(b)}v_i$ is maximized.

	In reasoning about strategic behavior we make the usual assumption of quasi-linear preferences and consider two different models of information regarding the preferences of other bidders. Under quasi-linear preferences, the \emph{utility} $u_i(b,v_i)$ of bidder~$i$ with value~$v_i$, in a given mechanism and for a given bid profile~$b$, is equal to its valuation for the position it is assigned minus its payment, \ie $u_i(b,v_i)=\beta_{g_i(b)}v_i-p_i(b)$. In the \emph{complete information} model the values $v_i$ are common knowledge among the bidders. A bid profile $b$ is a \emph{Nash equilibrium} of a given mechanism if no bidder has an incentive to change its bid assuming that the other bidders don't change their bids, \ie if for every $i\in N$, 
\[
	u_i\bigl(b,v_i\bigr) = \max_{x\in\R} u_i\bigl((b_{-i},x),v_i\bigr),
\]
where $(b_{-i},x)=(b_1,\dots,b_{i-1},x,b_{i+1},\dots,b_n)$. A Nash equilibrium $b$ is \emph{efficient} if for all $i,j\in\{1,\dots,n\}$, $b_i>b_j$ whenever $v_i>v_j$.

In the \emph{incomplete information} model values $v_i$ are drawn independently from a continuous distribution with density function~$f$, cumulative distribution function $F$, and support $[0,\bar{v}]$ for some finite $\bar{v}\in\R_+$ we assume to be common knowledge among the bidders.\footnote{An analytical characterization of equilibria in the case of non-identical distributions is, unfortunately, well beyond the state of the art even for very simple settings. For a single item and two bidders with values drawn uniformly from distinct intervals, for example, this question was posed by \citet{Vick61a} and answered only recently, almost half a century later, by \citet{Kaplan12}.}
%% HERE: by definition the support is precisely the set of values~$v$ with $f(v)>0$, so this automatically implies that~$F$ is strictly increasing on $[0,\bar{v}]$
Our results in addition require existence and boundedness of the first three derivatives of~$F$. 
Since valuations are independent and identically distributed, an efficient assignment for all value profiles can only be obtained from a symmetric profile $(b,\dots,b)$ for some bidding function $b:\R\rightarrow\R$. 
%% HERES: vector vs. individual bidding function
The quantity of interest for strategic considerations under incomplete information is the \emph{expected utility} $u_i^b(x,v_i)$ of bidder~$i$ with value $v_i$ given that it bids $x\in\R$ and all other bidders use bidding function $b$, which is given by
\[
	\textstyle u_i^b(x,v_i) = \E_{v_j\sim F, j\neq i} \Bigl[ u_i\Bigl(v_i,\bigl(b(v_1),\dots,b(v_{i-1}),x,b(v_{i+1}),\dots,b(v_n)\bigr)\Bigr) \Bigr] .
\]
Bidding function~$b$ then is a \emph{Bayes-Nash equilibrium} if no bidder has an incentive to change its bid, \ie if for all $i\in\{1,\dots,n\}$ and $v_i\in[0,\bar{v}]$,
\begin{align}
	u_i^b(b(v_i),v_i) = \max_{x\in\R} u_i^b(x,v_i) .  \label{eq:nash}
\end{align}
A Bayes-Nash equilibrium $b$ is \emph{efficient} if it is increasing almost everywhere.
%% given our assumptions on the distribution the following are equivalent: increasing except on a finite set of values; increasing except on a set of values that have measure zero under~$F$.

A mechanism that achieves efficiency in both Nash and Bayes-Nash equilibrium is the Vickrey-Clarke-Groves (VCG) mechanism. It uses allocation rule~$g$ and a payment rule~$p^\beta$ that charges each bidder its externality on the other bidders, which is equal to the additional utility bidders assigned lower positions would obtain by moving up one position. Denoting by $b_{(i)}$ the $(n-i+1)$st order statistic of $b$, such that $b_{(1)}\geq\dots\geq b_{(n)}$, and using the convention that $b_{(i)}=0$ when $i>n$, \vspace*{-1ex}
\[
	p^\beta_i(b) %= \sum_{\mathclap{j=g_i(b)}}^{k}(\beta_j-\beta_{j+1}) g^{-1}_{j+1}(b) 
	= \sum_{\mathclap{j=g_i(b)}}^{k}(\beta_j-\beta_{j+1}) b_{(j+1)} .
\]
It is well known and not difficult to see that the VCG mechanism makes it optimal for each bidder to bid its true valuation irrespective of the bids of others, which is a stronger property than those required of a Nash or Bayes-Nash equilibrium. The resulting assignment is efficient. The resulting outcome of assignment and payments is in fact the bidder-optimal core outcome, and we will refer to it by that name.

While computation of payments in the VCG mechanism requires knowledge of the vector~$\beta$ of relative values, we will be interested instead in the ability of mechanisms to support an efficient outcome in equilibrium when only an inaccurate estimate~$\alpha\in\rgeqk$ of~$\beta$ is available to the auctioneer. To this end we consider parameterized variants of the three mechanisms that have been used and studied most extensively: the \aVCG mechanism, the \aGFP mechanism, and the \aGSP mechanism. The three mechanisms all use allocation rule~$g$, and their payment rules $p^V\!\!$, $p^F\!\!$, and $p^S$ respectively charge a bidder its externality, its bid on the position it is assigned, and the next-lower bid on that position. Using the convention that $\alpha_{j}=0$ when $j>k$,
\vspace*{-2ex}
\begin{align*}
	p^V_i(b) &= \sum_{\mathclap{j=g_i(b)}}^{k}(\alpha_j-\alpha_{j+1}) b_{(j+1)},\\
	p^F_i(b) &= \alpha_{g_i(b)} b_i, \quad\text{and} \\
	p^S_i(b) &= \alpha_{g_i(b)} b_{(g_i(b)+1)}.
\end{align*}
We will sometimes drop superscripts when the mechanism we are referring to is clear from the context.

%% SECTION: COMPLETE INFORMATION %%
\section{Complete Information}
\label{sec:complete}

We begin our analysis with the complete-information case. Here, when $\alpha=\beta$, the \aVCG mechanism has a truthful equilibrium, the \aGSP mechanism has an equilibrium that yields the bidder-optimal core outcome~\citep{EOS07a,Vari07a}, and the \aGFP mechanism may not have any equilibrium~\citep{ChHa13a}. When $\alpha\neq\beta$ the \aVCG mechanism loses its truthfulness, and it makes sense to ask under what conditions the \aVCG mechanism and the \aGSP mechanism possess an efficient equilibrium. 
To build intuition we first look at the special case with three positions and three bidders, before moving on to the general case.

\subsection{Three Positions and Three Bidders}
\label{sec:complete-example}
In the special case, valuations are given by vectors $v\in\R^3$ and $\beta\in\mathbb{R}^3_{\ge}$ while mechanisms use a vector $\alpha\in\mathbb{R}^3_{\ge}$ that may differ from~$\beta$. Our goal will be to understand which combinations of~$\alpha$ and~$\beta$ allow for the existence of a bid profile $b\in\R^3$ that is an equilibrium and leads to an efficient assignment. 
Assuming without loss of generality that $v_1\geq v_2\geq v_3>0$, efficiency requires that
\begin{equation} \label{eq:ci_eff}
	b_1\geq b_2\geq b_3.
\end{equation}
For~$b$ to be an equilibrium, none of the bidders may benefit from raising or lowering their respective bid and being assigned a different position. For the \aVCG mechanism this means that
\begin{align}
	\beta_1 v_1 - (\alpha_1-\alpha_2) b_2 - (\alpha_2-\alpha_3) b_3 &\geq \beta_2 v_1 - (\alpha_2-\alpha_3) b_3 , \label{eq:vcg1} \\
	\beta_1 v_1 - (\alpha_1-\alpha_2) b_2 - (\alpha_2-\alpha_3) b_3 &\geq \beta_3 v_1 , \label{eq:vcg2} \\
	\beta_2 v_2 - (\alpha_2-\alpha_3) b_3 &\geq \beta_1 v_2 - (\alpha_1-\alpha_2) b_1 - (\alpha_2-\alpha_3) b_3, \label{eq:vcg3} \\
	\beta_2 v_2 - (\alpha_2-\alpha_3) b_3 &\geq \beta_3 v_2 , \label{eq:vcg4} \\
	\beta_3 v_3 &\geq \beta_1 v_3 - (\alpha_1-\alpha_2) b_1 - (\alpha_2-\alpha_3) b_2 , \label{eq:vcg5} \\
	\beta_3 v_3 &\geq \beta_2 v_3 - (\alpha_2-\alpha_3) b_2 . \label{eq:vcg6}
 \end{align}
There is no upper bound on~$b_1$ and no lower bound on~$b_3$ except $b_3\geq 0$, and setting~$b_1$ to a large value and $b_3=0$ satisfies \eqref{eq:ci_eff}, \eqref{eq:vcg3}, \eqref{eq:vcg4}, and \eqref{eq:vcg5}. With this choice of $b_3$, and since $\beta_2v_1\geq\beta_3v_1$, \eqref{eq:vcg2} is implied by \eqref{eq:vcg1}. The \aVCG mechanism thus possesses an efficient equilibrium if and only if there exists a bid~$b_2$ such that
%\eqref{eq:vcg1} and \eqref{eq:vcg6}
\begin{align}
	&(\alpha_1-\alpha_2) b_2 \leq (\beta_1 - \beta_2) v_1 , \label{eq:vcga}\\ 
	&(\alpha_2-\alpha_3) b_2 \geq (\beta_2-\beta_3) v_3. \label{eq:vcgb}
\end{align}

For the \aGSP mechanism the equilibrium conditions require that
\begin{align}
	\beta_1 v_1 - \alpha_1 b_2 &\geq \beta_2 v_1 - \alpha_2 b_3 , \label{eq:gsp1} \\
	\beta_1 v_1 - \alpha_1 b_2 &\geq \beta_3 v_1 , \label{eq:gsp2} \\
	\beta_2 v_2 - \alpha_2 b_3 &\geq \beta_1 v_2 - \alpha_1 b_1, \label{eq:gsp3} \\
	\beta_2 v_2 - \alpha_2 b_3 &\geq \beta_3 v_2 , \label{eq:gsp4} \\
	\beta_3 v_3 &\geq \beta_1 v_3 - \alpha_1 b_1 , \label{eq:gsp5} \\
	\beta_3 v_3 &\geq \beta_2 v_3 - \alpha_2 b_2 . \label{eq:gsp6}
\end{align}
There is again no upper bound on~$b_1$, and setting~$b_1$ to a large value satisfies \eqref{eq:gsp3} and \eqref{eq:gsp5}. It is, moreover, not difficult to see that~\eqref{eq:gsp2} is implied by~\eqref{eq:gsp1} and~\eqref{eq:gsp4}: by \eqref{eq:gsp4}, $\alpha_2 b_3\leq(\beta_2-\beta_3)v_2$, so~\eqref{eq:gsp1} implies that $\beta_1 v_1 - \alpha_1 b_2 \geq \beta_2 v_1 - (\beta_2 - \beta_3) v_2$; since $v_1\geq v_2$, this in turn implies \eqref{eq:gsp2}. The \aGSP mechanism thus possesses an efficient equilibrium if and only if there exist bids~$b_2\geq b_3$ such that
\begin{align}
	% b_2 &\geq b_3, \\
	\alpha_1 b_2 &\leq (\beta_1-\beta_2) v_1 + \alpha_2 b_3, \label{eq:gspa} \\
	\alpha_2 b_3 &\leq (\beta_2-\beta_3) v_2, \label{eq:gspb} \\
	\alpha_2 b_2 &\geq (\beta_2-\beta_3) v_3. \label{eq:gspc}
\end{align}

To see that the constraints for the \aGSP mechanism are generally weaker than those for the \aVCG mechanism, note that the former can be satisfied even under the additional restriction that $b_2=b_3$ if there exists a bid~$b_2$ such that
\begin{align}
	(\alpha_1-\alpha_2) b_2 &\leq (\beta_1-\beta_2) v_1, \label{eq:gspx} \\
	\alpha_2 b_2 &\geq (\beta_2-\beta_3) v_3. \label{eq:gspy}
\end{align}
Indeed, any such bid satisfies~\eqref{eq:gspa} and~\eqref{eq:gspc}, while the smallest such bid satisfies~\eqref{eq:gspb} as well. The claim now follows because~\eqref{eq:gspx} is identical to~\eqref{eq:vcga} and \eqref{eq:gspy} easier to satisfy than~\eqref{eq:vcgb}. The latter comparison will in fact be strict when $\alpha_3>0$.

When $\alpha_3=0$ there is no strict separation and the two mechanisms are in fact identical, but this is a viable design choice only in the absence of a fourth bidder, when the payment for the last position is always zero. When there is a fourth bidder, then for both mechanisms $\alpha_3>0$ becomes a necessary condition for the existence of an efficient equilibrium and the separation between the mechanisms is strict. A formal treatment of the case with three positions and four bidders is given in \appref{app:three_four}. This treatment also suggests that the analysis becomes significantly more difficult as the number of positions and bidders increases and can no longer be solved by a straightforward comparison of the respective equilibrium conditions.

To further illustrate the separation between the two mechanisms, note that when $\beta_i\neq\beta_j$ and $\alpha_i\neq\alpha_j$ for all $i \neq j$, \eqref{eq:vcga} and \eqref{eq:vcgb} can be satisfied if and only if
\begin{align*}
	&\alpha_2 \geq \frac{\alpha_1(\beta_2-\beta_3)v_3 + \alpha_3(\beta_1-\beta_2)v_1}{(\beta_1-\beta_2)v_1+(\beta_2-\beta_3)v_3}, \\
	\intertext{and~\eqref{eq:gspa},~\eqref{eq:gspb},~and \eqref{eq:gspc} if and only if}
	&\alpha_2 \geq \frac{\alpha_1(\beta_2-\beta_3)v_3}{(\beta_1-\beta_2)v_1+(\beta_2-\beta_3)v_3}.
\end{align*}
We compare these bounds in \figref{fig:complete33}. The bounds suggest that only an underestimation of~$\beta$ is problematic, while efficient equilibria are preserved by both mechanisms when $\alpha\geq\beta$. The analysis in \appref{app:three_four} shows that this, also, is an artifact of the case with three positions and three bidders and ceases to hold when there is an additional bidder.
\begin{figure}[tb]
\centering
% \scalebox{0.9}{%  %% ECONOMETRICA HERE
%% ARXIV HERE
\includegraphics{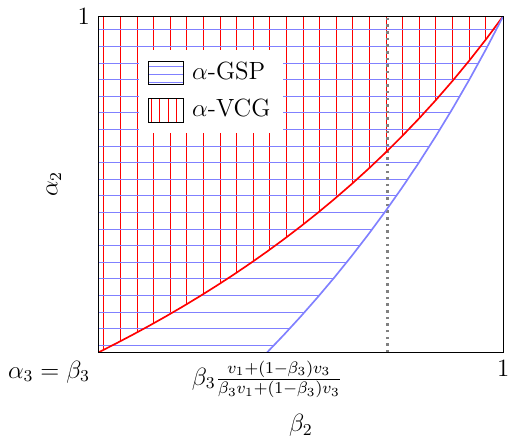}
\caption{Comparison of the \aGSP and \aVCG mechanisms under complete information, for a setting with three positions and three bidders where $\beta_1=\alpha_1=1$ and $0<\beta_3=\alpha_3<1$. The hatched areas indicate the combinations of $\alpha_2$ and $\beta_2$ for which the two mechanisms respectively possess an efficient equilibrium. The dotted line illustrates the performance of the mechanisms for a particular value of $\beta_2$. When $v_1=10$, $v_3=6$, and $\alpha_3=\beta_3=0.3$, this line would lie at $\beta_2=0.8$ and would intersect the curve for the \aGSP mechanism at $\alpha_2=0.6$ and that for the \aVCG mechanism at $\alpha_2=0.72$. Any point on the dotted line between these two intersection points corresponds to a value of $\alpha_2$ for which the \aGSP mechanism possesses an efficient equilibrium and the \aVCG mechanism does not.}
\label{fig:complete33}
\end{figure}

\subsection{The General Case}
\label{sec:complete-general}

%% HERE: For the following sequence of results we need to define alpha_k+1 = 0 in the preliminaries

We proceed to show that superiority of the \aGSP mechanism over the \aVCG mechanism in preserving efficient equilibria also holds in general. The following result establishes a weak superiority for arbitrary numbers of bidders and positions and arbitrary valuations. Examples in which only the \aGSP mechanism preserves an efficient equilibrium are straightforward to construct, and indeed we have already done so for a specific setting.
\begin{theorem}  \label{thm:complete}
	Let $\alpha,\beta\in\R_\ge^k$, $v\in\R^n$. Then the \aGSP mechanism possesses an efficient Nash equilibrium for valuations given by~$\beta$ and~$v$ whenever the \aVCG mechanism does.
\end{theorem}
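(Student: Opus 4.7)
The plan is to reduce the theorem to the comparison of two clean monotonicity conditions and then close the gap with a telescoping argument.

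First I would characterize, for each of the two mechanisms, the bid profiles that produce the truthful VCG outcome in a Nash equilibrium. In both cases, requiring the mechanism's payments to match $p_j^\beta(v)$ uniquely determines the bids of the agents assigned to positions $2,\dots,k{+}1$. For the \aGSP mechanism the equation $\alpha_j b_{(j+1)} = p_j^\beta(v)$ yields $b_{(j+1)}^S = p_j^\beta(v)/\alpha_j$. For the \aVCG mechanism, subtracting consecutive truthful-payment equations and using $p_j^\beta(v)-p_{j+1}^\beta(v) = (\beta_j-\beta_{j+1})v_{(j+1)}$ yields $b_{(j+1)}^V = (\beta_j-\beta_{j+1})v_{(j+1)}/(\alpha_j-\alpha_{j+1})$. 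The top bid $b_{(1)}$ can be taken equal to $b_{(2)}$ without loss. I would then verify the Nash conditions on these profiles: a downward deviation yields the mechanism's payment at the lower position, which equals the truthful VCG payment there, so the condition collapses to the standard envy-freeness inequality that the truthful VCG outcome satisfies automatically; an upward deviation incurs extra payment, and a term-by-term calculation shows that this extra payment is covered precisely by the monotonicity of the induced bid sequence (together with $v_{(j)}\leq v_{(\ell+1)}$ for $\ell<j$ in the \aVCG case). The conclusion is that, for both mechanisms, the truthful VCG outcome arises in a Nash equilibrium if and only if $b_{(2)},b_{(3)},\dots,b_{(k+1)}$ is non-increasing.

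With these characterizations in hand, the theorem reduces to showing that monotonicity of $b_{(j+1)}^V$ implies monotonicity of $b_{(j+1)}^S$. I would first rewrite $b_{(j+1)}^S\geq b_{(j+2)}^S$ as
\[
  \alpha_{j+1}\,(\beta_j-\beta_{j+1})\,v_{(j+1)} \;\geq\; (\alpha_j-\alpha_{j+1})\sum_{\ell=j+1}^{k}(\beta_\ell-\beta_{\ell+1})\,v_{(\ell+1)} .
\]
Iterating the \aVCG monotonicity $b_{(j+1)}^V \geq b_{(\ell+1)}^V$ for each $\ell\geq j+1$ gives the termwise bound
\[
  (\beta_\ell-\beta_{\ell+1})\,v_{(\ell+1)} \;\leq\; \frac{\alpha_\ell-\alpha_{\ell+1}}{\alpha_j-\alpha_{j+1}}\,(\beta_j-\beta_{j+1})\,v_{(j+1)}.
\]
Summing across $\ell=j{+}1,\dots,k$ telescopes via $\sum_{\ell=j+1}^{k}(\alpha_\ell-\alpha_{\ell+1}) = \alpha_{j+1}$ (with the convention $\alpha_{k+1}=0$), and the result is exactly the inequality above.

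The hard part is not the core implication but the first step: carefully verifying the Nash conditions for upward deviations in the \aVCG mechanism, where the displaced agents reshuffle the order statistics of the bids below the deviator, and handling degeneracies such as $\alpha_j=\alpha_{j+1}$ or $\beta_j=\beta_{j+1}$, the former making the payment equation solvable only if the corresponding VCG increment vanishes and leaving $b_{(j+1)}^V$ indeterminate. Once the two characterizations are in place, the one-paragraph telescoping calculation closes the argument.
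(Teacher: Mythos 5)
Your proposal is correct, and its first half --- characterizing, for each mechanism, the bid vectors that support the truthful VCG outcome as an explicitly determined sequence $b_{(2)},\dots,b_{(k+1)}$ that must be non-increasing, with the equilibrium verification reducing to envy-freeness of the VCG payments --- is exactly the content of the paper's Lemmas~\ref{lem:milgrom} and~\ref{lem:milgrom_vcg}. Where you genuinely diverge is in the final comparison. The paper argues by contradiction: it picks the largest position $i$ at which the \aGSP monotonicity condition $p_i^\beta(v)/\alpha_i\ge p_{i+1}^\beta(v)/\alpha_{i+1}$ fails and the smallest $j>i$ with $\alpha_j\neq\alpha_{j+1}$, and chains three inequalities to conclude $b^V_{(i+1)}<b^V_{(j+1)}$, violating \aVCG monotonicity. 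You instead prove the implication directly: from $b^V_{(j+1)}\ge b^V_{(\ell+1)}$ you extract the termwise bound $(\beta_\ell-\beta_{\ell+1})v_{(\ell+1)}\le\frac{\alpha_\ell-\alpha_{\ell+1}}{\alpha_j-\alpha_{j+1}}(\beta_j-\beta_{j+1})v_{(j+1)}$ and sum over $\ell=j+1,\dots,k$, where $\sum_\ell(\alpha_\ell-\alpha_{\ell+1})=\alpha_{j+1}$ telescopes to give precisely $\alpha_{j+1}p_j^\beta(v)\ge\alpha_j p_{j+1}^\beta(v)$. This is a clean and arguably more transparent finish --- it shows each \aGSP constraint is a nonnegative combination of \aVCG constraints, rather than locating a single witness pair --- at the cost of needing all the termwise bounds rather than two well-chosen indices. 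Both routes require the same care with the degenerate cases: when $\alpha_\ell=\alpha_{\ell+1}$ your termwise bound should be read as $0\le 0$ (the \aVCG characterization forces $(\beta_\ell-\beta_{\ell+1})v_{(\ell+1)}=0$ there), the outer division by $\alpha_j-\alpha_{j+1}$ needs the separate observation that $\alpha_j=\alpha_{j+1}$ makes the \aGSP condition hold with equality, and $\alpha_j=0$ with $p_j^\beta(v)>0$ defeats both mechanisms; you flag these, and they all resolve as you expect.
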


Rather than by a direct comparison of the equilibrium conditions of the two mechanisms, as we have done in the special case, we prove the theorem by appealing to the stronger requirement of envy-freeness, which has played a significant role also in earlier work on VCG and GSP position auctions~\citep{Vari07a,EOS07a}. 
We first show, in \lemref{lem:envy-free}, that existence of an efficient Nash equilibrium in the \aVCG mechanism implies the existence of an efficient bid profile satisfying envy-freeness. Here, bid profile~$b$ is called envy-free if no bidder prefers a different position to the one it is currently assigned at the \emph{current} payment for the former, \ie if for all $i\in\{1,\dots,n\}$,
\begin{align}  \label{eq:ef}
	\beta_{g_i(b)}v_i-p_i(b) = \max\nolimits_{j\in\{1,\dots,n\}} \beta_{g_j(b)}v_i-p_j(b) .
\end{align}
For both the \aVCG mechanism and the \aGSP mechanism envy-freeness implies the equilibrium condition because the current payment is a lower bound on the actual payment of the bidder if by either mechanism it was assigned that position. Moreover, and in contrast to the equilibrium condition, envy-freeness can be viewed as a requirement that depends only on the allocation and payments and not on the underlying mechanism. We can thus complete the proof by establishing the existence of a mapping from bid profiles in the \aVCG mechanism to bid profiles in \aGSP mechanism that preserves allocation and payments, which we do in \lemref{lem:equilibrium}.

Assume without loss of generality that $v_1\geq v_2\geq\dots\geq v_n$ and that in an efficient allocation, for $1\leq i\leq\min\{n,k\}$, bidder $i$ is assigned position $i$.
%%
% We begin by showing that the existence of an efficient equilibrium of \aVCG implies the existence of an efficient equilibrium of \aVCG that satisfies an additional envy freeness property.
%%
Specializing and rearranging \eqref{eq:nash}, a bid profile~$b$ with $b_1 \geq\dots\geq b_n$ is a Nash equilibrium of the \aVCG mechanism if for all $i,j\in\{1,\dots,n\}$,
\begin{align}
% \text{for all }& i, j <i: \notag\\
% &\beta_i v_i - \sum_{t=i}^{k}(\alpha_t-\alpha_{t+1}) b_{t+1} \geq \beta_j v_i - \sum_{t=j}^{i-1}(\alpha_t-\alpha_{t+1}) b_{t} - \sum_{t=i}^{k}(\alpha_t-\alpha_{t+1}) b_{t+1}\notag\\
% \Leftrightarrow~
(\alpha_j-\alpha_{j+1}) b_j &\geq (\beta_j-\beta_i) v_i - \sum_{\mathclap{t=j+1}}^{i-1}(\alpha_t-\alpha_{t+1})b_t & \text{if $j<i$,} \label{eq:vcg-eq-up} \\
% \text{for all }& i, j > i: \notag\\
% &\beta_i v_i - \sum_{t=i}^{k}(\alpha_t-\alpha_{t+1}) b_{t+1} \geq \beta_jv_i - \sum_{t=j}^{k} (\alpha_t-\alpha_{t+1}) b_{t+1} \notag\\
% \Leftrightarrow~
(\alpha_i-\alpha_{i+1}) b_{i+1} &\leq (\beta_i-\beta_j)v_i - \sum_{\mathclap{t=i+1}}^{j-1} (\alpha_t-\alpha_{t+1})b_{t+1} & \text{if $j>i$.} \label{eq:vcg-down}
\end{align}
These conditions constrain the utility of bidder~$i$ if instead of position~$i$ it was assigned a position~$j$ that is respectively above or below~$i$. Note that in the latter case the payment of bidder~$i$ for position~$j$ is equal to the current payment for this position, where in the former case it may be higher.
Specializing~\eqref{eq:ef}, a bid profile~$b$ with $b_1 \geq\dots\geq b_n$ is envy-free if for all $i,j\in\{1,\dots,n\}$, in addition to~\eqref{eq:vcg-down} and instead of~\eqref{eq:vcg-eq-up},\footnote{Condition~\eqref{eq:vcg-ef-up} is, unlike~\eqref{eq:vcg-eq-up}, symmetric to~\eqref{eq:vcg-down}. \citet{Vari07a} therefore refers to bid profiles satisfying~\eqref{eq:vcg-ef-up} and~\eqref{eq:vcg-down} as \emph{symmetric equilibria}.}
%
%
% A strengthening of these conditions asks that no bidder wants a higher or lower position at the price faced by the bidder currently holding that position. Formally, we will call a bid profile $b$ such that $b_1 \geq \dots \geq b_n$ an envy-free equilibrium of \aVCG if instead of \eqref{eq:vcg-eq-up} we have
\begin{align}
% \text{for all }&i, j <i: \notag\\
% &\beta_i v_i - \sum_{t=i}^{k}(\alpha_t-\alpha_{t+1}) b_{t+1} \geq \beta_j v_i - \sum_{t=j}^{k}(\alpha_t-\alpha_{t+1}) b_{t+1}\notag\\
% &\Leftrightarrow~
(\alpha_j-\alpha_{j+1}) b_{j+1} &\geq (\beta_j-\beta_i) v_i - \sum_{\mathclap{t=j+1}}^{i-1}(\alpha_t-\alpha_{t+1})b_{t+1}  \qquad\text{if $j<i$.} \label{eq:vcg-ef-up}
\end{align}

Envy-freeness is a stronger requirement than that of being an equilibrium, but we will see that it comes for free in the sense that existence of an efficient equilibrium automatically implies existence of an efficient equilibrium satisfying envy-freeness. 
% The following result establishes the existence of an efficient equilibrium with a certain structure, which we will then show to be envy-free.
%%
% \begin{lemma}  \label{lem:envy-free}
% 	Let $\alpha,\beta\in\R_\ge^k$ and $v\in\R^n_\ge$, and assume that the \aVCG mechanism possesses an efficient equilibrium. Then there exists an efficient equilibrium with maximum revenue, and this equilibrium is envy-free.
% \end{lemma}
%%
\begin{lemma}  \label{lem:envy-free}  %%\label{lem:one-step-to-global}
	% Let $\alpha,\beta\in\R_\ge^k$ and $v\in\R^n_\ge$, and let $b$ be an efficient equilibrium of the \aVCG mechanism with maximum revenue. Then~$b$ is envy-free.
	Let $\alpha,\beta\in\R_\ge^k$ and $v\in\R^n_\ge$, and assume that the \aVCG mechanism possesses an efficient equilibrium. Then the \aVCG mechanism possesses an efficient equilibrium that is envy-free.
\end{lemma}
%%
% \begin{lemma}  \label{lem:revenue-max}
% 	Let $\alpha,\beta\in\R_\ge^k$ and $v\in\R^n_\ge$, and assume that the \aVCG mechanism possesses an efficient equilibrium. Then it possesses an efficient equilibrium~$b$ such that for all $i\in\{1,\dots,n-1\}$,
% 	\begin{align*}
% 		b_{i+1} = \begin{cases}
% 			\min\left(b_i,\frac{(\beta_i-\beta_{i+1})v_i}{\alpha_i-\alpha_{i+1}}\right) & \text{if $\alpha_i\neq\alpha_{i+1}$}, \\
% 			b_i & \text{otherwise.}
% 		\end{cases}
% 	\end{align*}
% \end{lemma}
%%
\begin{proof}
	We will show existence and envy-freeness of a particular type of efficient equilibrium that we will call bidder-pessimal, in which each of the bids $b_2,\dots,b_n$ is maximal among all efficient equilibria.\footnote{In the case where $\alpha=\beta$, the truthful equilibrium of the \aVCG mechanism is bidder-\emph{optimal} among all envy-free outcomes, \ie its bids and payments are minimal~\citep{Leon83a}.}
	
	First note that~\eqref{eq:vcg-eq-up} only imposes lower bounds on the bids and remains satisfied when bids are increased.
	For~\eqref{eq:vcg-down}, the case where $j=i+1$ implies all other cases, because 
% 	We further claim that for each $i, j > i$ the strongest constraint in (\ref{eq:vcg-down}) is the one where $j = i+1$. That is, we claim that for all $i, j > i$,
% \[
% (\beta_i-\beta_{i+1})v_i \leq
% (\beta_i-\beta_j)v_i - \sum_{t=i+1}^{j-1}(\alpha_t-\alpha_{t+1})b_{t+1}.
% \]
%
% This is because
\begin{align*}
	(\beta_i-\beta_j)v_i - \sum_{\mathclap{t=i+1}}^{j-1}(\alpha_t-\alpha_{t+1})b_{t+1}
	&\geq (\beta_i-\beta_j)v_i - \sum_{\mathclap{t=i+1}}^{j-1}(\beta_t-\beta_{t+1})v_{t} \\
	&\geq (\beta_i-\beta_j)v_i - \sum_{\mathclap{t=i+1}}^{j-1}(\beta_t-\beta_{t+1})v_{i} \\
	&= (\beta_i-\beta_{i+1})v_i,
\end{align*}
where the first inequality holds because, by the fact that~$b$ is an efficient equilibrium and hence by~\eqref{eq:vcg-down}, $(\alpha_t-\alpha_{t+1})b_{t+1} \leq (\beta_t-\beta_{t+1})v_t$ when $i+1\leq t\leq j-1$, and the second inequality because, by efficiency, $v_t\leq v_i$ for all such $t$.  %and the final step uses that the sum is telescoping.
	Bid~$b_i$, for $i\in\{2,\dots,n\}$, thus is subject to only two upper bounds on, $b_i\leq b_{i-1}$ by efficiency and $(\alpha_{i-1}-\alpha_i)b_i\leq(\beta_{i-1}-\beta_i)v_{i-1}$ by~\eqref{eq:vcg-down}. Increasing each of these bids as much as possible yields a bid profile~$b$ such for all $i\in\{2,\dots,n\}$,
	\begin{align}  \label{eq:ef_bids}
		b_i = \begin{cases}
			\min\left(b_{i-1},\frac{(\beta_{i-1}-\beta_i)v_{i-1}}{\alpha_{i-1}-\alpha_i}\right) & \text{if $\alpha_{i-1}\neq\alpha_i$}, \\
			b_{i-1} & \text{otherwise.}
		\end{cases}
	\end{align}
	%% HEREF: do we need to say anything about b_1? what if b_1 gets in the way of increasing b_2, but we could increase both of them further?
	
% Since payments in the \aVCG mechanism are increasing in each bid, for an efficient equilibrium with maximum revenue it must hold that
% \begin{align*}
% 	b_{i+1} &= \min\left(b_i,\frac{(\beta_i-\beta_{i+1})v_i}{\alpha_i-\alpha_{i+1}}\right) &&\text{if $\alpha_i \neq \alpha_{i+1}$}, \\
% 	b_{i+1} &= b_i &&\text{otherwise,}
% \end{align*}
% as claimed.
% \end{proof}

% \begin{lemma}  \label{lem:envy-free}  %%\label{lem:one-step-to-global}
% 	% Let $\alpha,\beta\in\R_\ge^k$ and $v\in\R^n_\ge$, and let $b$ be an efficient equilibrium of the \aVCG mechanism with maximum revenue. Then~$b$ is envy-free.
% 	Let $\alpha,\beta\in\R_\ge^k$ and $v\in\R^n_\ge$, and assume that the \aVCG mechanism possesses an efficient equilibrium. Then there exists an efficient equilibrium with maximum revenue that is envy-free.
% \end{lemma}
% %%
% \begin{proof}
	We now claim that~$b$ satisfies \eqref{eq:vcg-ef-up} and begin by showing this for the special case where $j=i-1$, which requires that for all $i\in\{2,\dots,n\}$, 
	\begin{align}  \label{eq:ef_up_local}
		(\alpha_{i-1}-\alpha_i) b_i \geq (\beta_{i-1}-\beta_i)v_i.
	\end{align}
	By~\eqref{eq:ef_bids} it suffices to distinguish two cases.
	If $\alpha_{i-1}\neq\alpha_i$ and $b_i=(\beta_{i-1}-\beta_i)v_{i-1}/(\alpha_{i-1}-\alpha_i)$, then
	\[
		(\alpha_{i-1}-\alpha_i) b_i = (\beta_{i-1}-\beta_i)v_{i-1} \geq (\beta_{i-1}-\beta_i)v_i,
	\]
	where the inequality holds because $v_i\geq v_{i+1}$. 
	If instead $b_i = b_{i-1}$, then 
	\[
		(\alpha_{i-1}-\alpha_i) b_i = (\alpha_{i-1}-\alpha_i) b_{i-1} \geq (\beta_{i-1}-\beta_i)v_i,
	\]
	where the inequality holds by~\eqref{eq:vcg-eq-up}.
	
	For the general case let $i,j\in\{1,\dots,n\}$ with $j<i$. Then
	\begin{align*}
		(\beta_j-\beta_i)v_i - \sum_{t=j+1}^{i-1}(\alpha_t-\alpha_{t+1})b_{t+1}
		&\leq (\beta_j-\beta_i)v_i - \sum_{\mathclap{t=j+1}}^{i-1}(\beta_t-\beta_{t+1})v_{t+1} \\
		&\leq (\beta_j-\beta_i)v_i - \sum_{\mathclap{t=j+1}}^{i-1}(\beta_t-\beta_{t+1})v_i \\
		&= (\beta_j-\beta_{j+1})v_{i}\\
		&\leq (\beta_j-\beta_{j+1})v_{j+1}\\
		&\leq (\alpha_j-\alpha_{j+1}) b_{j+1},
	\end{align*}
	where the first and last inequality hold because, by \eqref{eq:ef_up_local}, $(\alpha_t-\alpha_{t+1})b_{t+1} \geq (\beta_t-\beta_{t+1})v_{t+1}$ for $t=j+1,\dots,i-1$ and $(\beta_j-\beta_{j+1})v_{j+1}\leq (\alpha_j-\alpha_{j+1}) b_{j+1}$, and the second and third inequality because $v_{t+1}\geq v_i$ when $t+1\leq i$ and $v_i\leq v_j$ when $j<i$.\footnote{In extending the claim from the special to the general case we have in fact shown that, subject to efficiency, \emph{local} envy-freeness with regard to the position directly above implies envy-freeness with regard to all higher positions. Similar results have appeared in prior work~\citep[\eg][Fact~5]{Vari07a}, but only for the case where $\alpha=\beta$.}
\end{proof}

We proceed to show that any bid profile in the \aVCG mechanism can be mapped to a bid profile that in the \aGSP mechanism yields the same allocation and payments. Applying this mapping to an efficient envy-free bid profile like the one identified by \lemref{lem:envy-free}, and noting that envy-freeness implies the equilibrium condition, then shows \thmref{thm:complete}.

%  Specializing~\eqref{eq:nash}, a bid profile~$b$ with $b_1\geq\dots\geq b_n$ is a Nash equilibrium of the \aGSP mechanism if for all $i,j\in\{1,\dots,n\}$,
% \begin{align}
% 	% \text{for all }& i, j <i: \notag \\
% 	\beta_i v_i - \alpha_i b_{i+1} &\geq \beta_j v_i - \alpha_j b_{j}  & \text{if $j<i$,} \label{eq:gsp-eq-up}\\
% 	% \text{for all }& i, j > i: \notag \\
% 	\beta_i v_i - \alpha_i b_{i+1} &\geq \beta_j v_i - \alpha_j b_{j+1}  & \text{if $j<i$,}\label{eq:gsp-down}
% \end{align}
% and we call it envy-free if for all $i,j\in\{1,\dots,n\}$, in addition to~\eqref{eq:gsp-down} and instead of \eqref{eq:gsp-eq-up},
% \begin{align}
% 	% \text{for all }& i, j <i: \notag \\
% 	\beta_i v_i - \alpha_i b_{i+1} &\geq \beta_j v_i - \alpha_j b_{j+1}  & \text{if $j<i$.} \label{eq:gsp-ef-up}
% \end{align}

\begin{lemma}  \label{lem:equilibrium}  %%\label{lem:mapping}
	Let $\alpha,\beta\in\R_\ge^k$ and $v,b\in\R^n_\ge$. Let $b^S\in\R^n$ such that for all $i\in\{1,\dots,n\}$,
	\begin{align*}
		b_{i}^S = \begin{cases}
		b_2^S & \text{if $i=1$,} \\[.6ex]
		\frac{p_{i-1}^V(b)}{\alpha_{i-1}} &\text{if $i\in\{2,\dots,k+1\}$ and $\alpha_{i-1}>0$}, \\[.5ex]
		0 & \text{otherwise.}
		\end{cases}
	\end{align*}
	Then $b^S_1\geq\dots\geq b^S_n$ and for all $i\in\{1,\dots,n\}$, $p_i^S(b^S)=p_i^V(b)$.
	% Let $\alpha,\beta\in\R_\ge^k$ and $v\in\R^n_\ge$ be such that $v_1 \geq v_2 \geq \ldots \geq v_n$.
	% Consider some $b$ such that $b_1 \geq \ldots \geq b_n$ and denote the corresponding payments in the \aVCG mechanism by $p^V_1, \dots, p^V_n$. Define
	% \[
	% 	b_{i}^S = \begin{cases}
	% 	b_2^S &\text{for $i = 1$,}\\
	% 	\frac{p_{i-1}^V}{\alpha_{i-1}} &\text{for $i \in \{2,\dots,k+1\}$ if $\alpha_{i-1}>0$},\\
	% 	0&\text{for $i \in \{2,\dots,k+1\}$ if $\alpha_{i-1} = 0$ and for $i > k+1$}.
	% 	\end{cases}
	% \]
	% Then $b_1^S \geq b_2^S \geq \ldots \geq b^S_{n}$ and the corresponding payments $p_1^S, \dots, p_n^S$ in the \aGSP mechanisms satisfy $p_i^S = p_i^V$ for all $i$.	
\end{lemma}
\begin{proof}
	Note that $b_1^S=b_2^S$. Let $j=\min\{i\midd\alpha_i=0\}$ and note that $b_i^S=0$ when $i>j$. For the first part of the claim it thus suffices to show that $b_i^S \geq b_{i+1}^S$ for $i=2,\dots,j-1$, which we do in two steps. First, for all $i\in\{2,\dots,j\}$, 
\begin{equation}  \label{eq:eff1}
% \begin{aligned}
	b_i^S 
	= \frac{p^V_{i-1}}{\alpha_{i-1}}
	= \frac{\sum_{t=i-1}^{k}(\alpha_t-\alpha_{t+1})b_{t+1}}{\alpha_{i-1}}
	\leq \frac{\sum_{t=i-1}^{k} (\alpha_t-\alpha_{t+1})b_i}{\alpha_{i-1}}
	= \frac{(\alpha_{i-1}-\alpha_{k+1})b_i}{\alpha_{i-1}} = b_i,
% \end{aligned}
\end{equation}
where the first two equalities respectively hold by definition of $b_i^S$ and $p^V_{i-1}$, the inequality because $b_1\geq\dots\geq b_n$, and the last equality because, by convention,~$\alpha_{k+1}=0$. Then, for all $i\in\{2,\dots,j-1\}$,
\begin{align*}
	b_i^S 
	= \frac{p^V_{i-1}}{\alpha_{i-1}}
	&= \frac{(\alpha_{i-1}-\alpha_{i})b_{i}+p_i^V}{\alpha_{i-1}}\\
	&= \frac{(\alpha_{i-1}-\alpha_{i})b_{i}+\alpha_i b^S_{i+1}}{\alpha_{i-1}}
	\geq \frac{(\alpha_{i-1}-\alpha_{i})b^S_{i+1}+\alpha_i b^S_{i+1}}{\alpha_{i-1}}
	=b_{i+1}^S,
\end{align*}
where the first and third equalities hold by definition of $b_i^S$, the second equality exploits the recursive nature of the definition of $p^V_{i-1}$, and the inequality uses that $b_i\geq b_{i+1}$ and that, by~\eqref{eq:eff1}, $b_{i+1}\geq b^S_{i+1}$.

The second part of the claim is satisfied for $i<j$ because $p_i^S=\alpha_{i}b_{i+1}^S=\alpha_{i}p_i^V/\alpha_{i}=p_i^V$, and for $i\geq j$ because $\alpha_i=0$ for all $i\geq j$ and thus $p_i^S=p_i^V=0$.
\end{proof}

The above analysis in fact shows that any envy-free equilibrium of the \aVCG mechanism is preserved by the \aGSP mechanism. Since the bidder-optimal core outcome is envy-free~\citep{Leon83a}, we thus have the following.
\begin{corollary}  \label{cor:complete}
	Let $\alpha,\beta\in\R_\ge^k$, $v\in\R^n$. Then the \aGSP mechanism obtains the bidder-optimal core outcome in a Nash equilibrium for valuations given by~$\beta$ and $v$ whenever the \aVCG mechanism does.
\end{corollary}

%% HEREF
%% ** is the relation we are using element-wise comparison of bids? a bounded lattice has a unique maximum as well as minimum, but this lattice is only bounded from below
%% ** Wouldn't our intuition be that envy-freeness is easier to satisfy as payments increase, such that the truthful VCG outcome is the most difficult one to achieve?
% More generally the set of envy-free outcomes forms a lattice, of which the truthful VCG outcome is the unique bidder-optimal element~\citep{TODO}. One may therefore be tempted to think that the \aVCG possesses a truthful equilibrium whenever it obtains some envy-free outcome in an equilibrium. This turns out to be false, as we show in \appref{app:bidder-optimal}.

%% SECTION: INCOMPLETE INFORMATION %%

\section{Incomplete Information}
\label{sec:incomplete}

We now turn to incomplete-information environments, where bidders only possess probabilistic information regarding one another's valuations. Here the \aGSP mechanism may fail to possess an efficient equilibrium even when $\alpha=\beta$.\footnote{\citet{GoSw09a} gave a characterization of those values of~$\alpha$ that enable equilibrium existence in this case. The result can be strengthened in our setting to show that for some values of~$\beta$ no choice of~$\alpha$ leads to an efficient equilibrium.} When $\alpha=\beta$, the \aVCG mechanism of course maintains its truthful dominant-strategy equilibrium. Another good mechanism in this case is the \aGFP mechanism, which differs from the \aGSP mechanism in its use of first-price rather than second-price payments. While sharing the latter's non-truthfulness it possesses a unique Bayes-Nash equilibrium for any value of $\alpha$, and this equilibrium yields the bidder-optimal core outcome~\citep{ChHa13a}.

Given these results it is quite natural to ask how successful the \aVCG and \aGFP mechanisms are in maintaining an efficient equilibrium outcome when $\alpha\neq\beta$. The answer to this question is strikingly similar to the complete-information case in that the non-truthful mechanism is again more robust, for arbitrary values of $\alpha$ and $\beta$ and independent and identically distributed valuations according to any distribution satisfying mild technical conditions.
Our analysis uses \citeauthor{Myer81a}'s classical characterization of possible equilibrium bids to identify, for either of the two mechanisms, conditions on $\alpha$ and $\beta$ that are necessary and sufficient for equilibrium existence. The conditions for the \aVCG mechanism turn out to be more demanding. Just as we did for complete-information environments, we begin by considering a special case, this time with two positions, three bidders, and valuations drawn uniformly at random from the unit interval. The special case is used to build intuition, and introduce the necessary machinery, for the general result.

\subsection{Two Positions and Three Bidders}
\label{sec:incomplete-example}

Let $v_1,v_2,v_3$ be drawn independently from the uniform distribution on $[0,1]$. Let $\alpha,\beta\in\R_{\ge}^2$ with $\alpha_2,\beta_2>0$, and assume without loss of generality that $\alpha_1=\beta_1=1$. Our goal will again be to characterize the values of $\alpha$ and $\beta$ for which given mechanisms of interest, in this case the \aGFP and \aVCG mechanisms, admit an efficient equilibrium. Behavior under incomplete information can be described by a profile of bidding functions, one for each bidder, that map the bidder's value to its bid. It is clear that in a symmetric setting like ours efficient outcomes can only result from symmetric bidding functions, so we will be interested in functions $b^F:\R\rightarrow\R$ that yield an efficient equilibrium in the \aGFP mechanism and functions $b^V:\R\rightarrow\R$ that achieve the same in the \aVCG mechanism.

The standard technique for equilibrium analysis under incomplete information uses a seminal result of \citeauthor{Myer81a} that characterizes the expected allocation and payments in equilibrium in terms of the allocation probabilities induced by a mechanism and bidders' bidding functions. The result was originally formulated for truthful mechanisms, but equivalent conditions exist for arbitrary bidding functions that instead of being in equilibrium provide a best response among values in their range. The latter is obviously a necessary condition for equilibrium, and can be turned into a sufficient condition by arguing that no better response exists outside the range. For our setting and notation we have the following result.
\begin{lemma}[\citet{Myer81a}]  \label{lem:myerson}
	Consider a position auction for an environment with~$n$ bidders,~$k$ positions, and $\beta\in\R^k_\ge$. Assume that bidders use a bidding function~$b$ with range~$X$, and that a bidder with value~$v$ is consequently assigned position $s\in\{1,\dots,k\}$ with probability $P_{s}(v)$. Then $u(b(v),v)=\max_{x\in X}u(x,v)$ for all $v\in[0,\vmax]$ if and only if the following holds:
	%% HEREF: use z (or w) instead of t consistently?
	\begin{enumerate} \alphenumi
		\item the expected allocation $\sum_{s=1}^{k}P_s(v)\beta_s$ is non-decreasing in~$v$, and  \label{itm:monotone}
		\item the payment function $p$ satisfies  \label{itm:envelope}
		\begin{equation}  \label{eq:eff}
			\E[p(v)] = p(0) + \sum_{s=1}^{k} \beta_s \int_{0}^{v} \frac{d P_s(z)}{dz} z \,dz.
		\end{equation}
		% where $p(0)=0$.
	\end{enumerate}
\end{lemma}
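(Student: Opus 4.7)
The plan is to reduce the statement to the classical envelope characterization of Bayes--Nash incentive compatibility, adapted to the situation where deviations are restricted to the range~$X$ of the symmetric bidding function~$b$. Write $Q(v)=\sum_{s=1}^{k}\beta_s P_s(v)$ for the expected allocation, $\pi(v)=\E[p(v)]$ for the expected payment, and $U(v,v')=vQ(v')-\pi(v')$ for the expected utility of an agent with value~$v$ who submits the bid $b(v')$. Since every $x\in X$ equals $b(v')$ for some $v'\in[0,\vmax]$, the best-response-in-range condition in the lemma is equivalent to $U(v,v)\ge U(v,v')$ for all $v,v'\in[0,\vmax]$.

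For the forward direction, I would first combine $U(v,v)\ge U(v,v')$ with its companion $U(v',v')\ge U(v',v)$; adding the two inequalities yields $(v-v')(Q(v)-Q(v'))\ge 0$, so $Q$ is non-decreasing and (a) holds. Next, note that $U(v):=U(v,v)$ equals $\sup_{v'}\bigl(vQ(v')-\pi(v')\bigr)$, a supremum of affine functions and therefore convex. A standard envelope argument (or a direct sandwich of $(U(v)-U(v'))/(v-v')$ between $Q(v')$ and $Q(v)$ using the two deviation inequalities at nearby values) gives $U'(v)=Q(v)$ almost everywhere. Combined with $U(v)=vQ(v)-\pi(v)$, the boundary value $U(0)=-\pi(0)$, and integration by parts, this rearranges to
\[
\pi(v)=\pi(0)+\int_0^v z\,dQ(z),
\]
which after expanding $Q=\sum_{s}\beta_s P_s$ is precisely~(b).

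For the reverse direction, I assume (a) and (b) and compute directly
\[
U(v,v)-U(v,v')=v\bigl(Q(v)-Q(v')\bigr)-\bigl(\pi(v)-\pi(v')\bigr).
\]
Substituting $\pi(v)-\pi(v')=\int_{v'}^{v}z\,dQ(z)$ from~(b) and integrating by parts rewrites the right-hand side as $\int_{v'}^{v}\bigl(Q(z)-Q(v')\bigr)\,dz$. Monotonicity of~$Q$ from~(a) makes this integral nonnegative whether $v>v'$ (the integrand is nonnegative on $[v',v]$) or $v<v'$ (the integrand is nonpositive on $[v,v']$, while the orientation of the integral flips), yielding $U(v,v)\ge U(v,v')$ as required.

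The main technical point that needs care is the envelope step in the forward direction: I must know that $Q$ is differentiable almost everywhere and that $\int_0^v z\,dQ(z)=\int_0^v zQ'(z)\,dz$, so that the conclusion can be written in the integral form appearing in~(b). Monotonicity of $Q$ already yields almost-everywhere differentiability, and the smoothness hypotheses on~$F$ recalled in the preliminaries (existence and boundedness of the first three derivatives) propagate to each $P_s$ via its expression as a probability over order statistics of the other $n-1$ draws, making the Stieltjes-to-Riemann identification routine. Everything else is standard calculus, and no new ideas are needed beyond properly tracking that the deviations considered are exactly the values $b(v')$ for $v'\in[0,\vmax]$.
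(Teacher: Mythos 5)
Your proposal is correct: it is the standard envelope-theorem proof of Myerson's characterization, adapted exactly as the lemma requires to deviations restricted to the range of the symmetric bidding function (monotonicity from adding the two incentive inequalities, the payment identity from convexity of $U(v)=\sup_{v'}(vQ(v')-\pi(v'))$ plus integration by parts, and the converse by the sandwich $\int_{v'}^{v}(Q(z)-Q(v'))\,dz\ge 0$). The paper offers no proof of its own for this lemma --- it is attributed to \citet{Myer81a}, with only a remark that the truthful-mechanism formulation extends to best responses within the range of an arbitrary bidding function --- and your argument is precisely the routine verification that remark relies on.
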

All mechanisms we consider set $p(0)=0$ and use an efficient allocation rule, for which
\begin{align*}
	P_s(v) &= \binom{n-1}{s-1}(1-F(v))^{s-1}(F(v))^{n-s}
\end{align*}
and~\ref{itm:monotone} is satisfied. Together with our assumptions on~$F$, efficiency mandates further that~$b$ must increase almost everywhere.

%Condition~\ref{itm:monotone} is satisfied by an efficient allocation, for which
%\begin{align*}
%	P_s(v) &= \binom{n-1}{s-1}(1-F(v))^{s-1}(F(v))^{n-s}.
%\end{align*}
%In our setting~$F$ increases strictly almost everywhere, so~\ref{itm:monotone} implies that any symmetric equilibrium bidding function must do so as well.

In the special case with two positions and three bidders with values distributed uniformly on the unit interval we have that $P_1(v)=F^2(v)=v^2$ and $P_2(v)=\binom{2}{1}F(v)(1-F(v))=2v(1-v)$, payments in any efficient equilibrium can thus be described by a function $p^E:\R\rightarrow\R$ satisfying
%% HERE: details?
%\begin{align}  \label{eq:ex-eff}
%	\E[p^E(v)]
%	% = \sum_{s=1}^{2} \beta_s \int_0^v \frac{dP_s(z)}{dz} z \,dz
%	= \beta_1 \int_0^v \frac{dP_1(z)}{dz} z \,dz + \beta_2 \int_0^v \frac{dP_2(z)}{dz} z \,dz
%	= \frac{2}{3}\beta_1v^3 + \beta_2 v^2 - \frac{4}{3}\beta_2 v^3.
%\end{align}
\begin{align}  \label{eq:ex-eff}
	\E[p^E(v)]
	% = \sum_{s=1}^{2} \beta_s \int_0^v \frac{dP_s(z)}{dz} z \,dz
	&= \beta_1 \int_0^v \frac{dP_1(z)}{dz} z \,dz + \beta_2 \int_0^v \frac{dP_2(z)}{dz} z \,dz \notag \\
	&= \frac{2}{3}\beta_1v^3 + \beta_2 v^2 - \frac{4}{3}\beta_2 v^3. 
\end{align}

%% HEREF: explain here why efficient equilibrium must be symmetric?
A candidate equilibrium bidding function for the \aGFP mechanism can now be obtained by writing the expected payment in terms of bidding function~$b^{F}$, equating the resulting expression with~\eqref{eq:ex-eff}, and solving for $b^{F}$.
In the \aGFP mechanism a bidder with value~$v$ that is allocated position~$s$ pays $\alpha_s b^{F}(v)$, its expected payment therefore satisfies
\begin{align}  \label{eq:ex-gfp}
	\E[p^{F}(v)] 
	&= P_1(v) \alpha_1 b^{F}(v) + P_2(v) \alpha_2 b^{F}(v) \notag \\
	&= (\alpha_1 v^2 + 2\alpha_2v-2\alpha_2v^2) b^{F}(v).
\end{align}
By \lemref{lem:myerson} the expressions in~\eqref{eq:ex-eff} and~\eqref{eq:ex-gfp} must be the same. Equating them yields
% \footnote{Note that for $v = 0$ both numerator and denominator are zero. A single application of l'H\^opital's rule, shows $b^{F}(0) = 0$, which is consistent with our assumption.}
\begin{align*}  \label{eq:ex-gfp-cand}
	&b^{F}(v) = \frac{2/3 \cdot v^3 - 4/3 \cdot \beta_2 v^3 + \beta_2 v^2}{v^2-2\alpha_2v^2+2\alpha_2v}
\end{align*}
when $v>0$, and we can set $b^F(0)=0$ for convenience.\footnote{Application of l'Hospital's rule shows that $\lim_{v\rightarrow 0} b^{F}(v)=0$, so this choice makes $b^{F}$ increasing.}
Bidding below $b^F(0)=0$ is impossible, bidding above $b^F(\bar{v})$ is dominated,\footnote{Since equilibrium bidding functions must be increasing almost everywhere, bidding above $b^F(\bar{v})$ would not increase the probability of winning, and it would also not lead to a lower payment.} and~$b^F$ satisfies the second condition of \lemref{lem:myerson} by construction. 
%The \aGFP mechanism thus has an efficient equilibrium if and only if~$b^F$ induces a non-decreasing allocation, which happens precisely when $b^F$ itself is non-decreasing. 
The \aGFP mechanism thus has an efficient equilibrium if and only if~$b^F$ is increasing almost everywhere. 
Taking the derivative we obtain
%% ECONOMETRIC HERE
\begin{equation*}
	\frac{db^F(v)}{dv} =
	\frac{(\frac{4}{3}v-\frac{8}{3}\beta_2v+\beta_2)(v-2\alpha_2v+2\alpha_2)}{(v-2\alpha_2v+2\alpha_2)^2} -
	\frac{(1-2\alpha_2)(\frac{2}{3}v^2-\frac{4}{3}\beta_2v^2+\beta_2v)}{(v-2\alpha_2v+2\alpha_2)^2}.
	% \text{and} \\
	% &\frac{db^V(v)}{dv} = \frac{(4v-8\beta_2 v+2\beta_2)(2v-4\alpha_2v+2\alpha_2)-(2-4\alpha_2)(2v^2-4\beta_2 v^2+2\beta_2 v)}{(2v-4\alpha_2v+2\alpha_2)^2}.
\end{equation*}
%%
% \begin{multline*}
% 	\frac{db^F(v)}{dv} =
% 	\frac{(\frac{4}{3}v-\frac{8}{3}\beta_2v+\beta_2)(v-2\alpha_2v+2\alpha_2)}{(v-2\alpha_2v+2\alpha_2)^2} - {} \\
% 	\frac{(1-2\alpha_2)(\frac{2}{3}v^2-\frac{4}{3}\beta_2v^2+\beta_2v)}{(v-2\alpha_2v+2\alpha_2)^2}.
% 	% \text{and} \\
% 	% &\frac{db^V(v)}{dv} = \frac{(4v-8\beta_2 v+2\beta_2)(2v-4\alpha_2v+2\alpha_2)-(2-4\alpha_2)(2v^2-4\beta_2 v^2+2\beta_2 v)}{(2v-4\alpha_2v+2\alpha_2)^2}.
% \end{multline*}
%%
The sign of this expression is determined by the sign of its numerator, and it turns out that the numerator is positive at~$0$ and, depending on the value of $\beta_2$, either non-decreasing everywhere on $[0,1]$ or decreasing everywhere on $[0,1]$. Indeed, $db^F(v)/dv|_{v=0}=\beta_2/(2\alpha_2)>0$, and the derivative of the numerator, $(4/3-8/3\beta_2)(v-2\alpha_2v+2\alpha_2)$, is non-negative when $\beta_2\leq 1/2$ and negative when $\beta_2>1/2$. In the case where $\beta_2>1/2$ we need that
\begin{align*}
	&\left.\frac{db^F(v)}{dv} \right|_{\mathrlap{v=1}}\;\; = \left(\frac{4}{3}-\frac{5}{3} \beta_2\right) - (1-2\alpha_2) \left(\frac{2}{3}-\frac{1}{3}\beta_2\right) \geq 0,
\end{align*}
which holds when
\begin{equation*}  %\label{eq:gfp_eq}
	\alpha_2 \geq \frac{2 \beta_2-1}{2-\beta_2}.
\end{equation*}
We conclude that the \aGFP mechanism possesses an efficient equilibrium if and only if $\beta_2 \le 1/2$ or $\alpha_2 \ge (2 \beta_2-1)/(2-\beta_2).$

Analogously, in the \aVCG mechanism, the payment of a bidder with value~$v$ satisfies
%% ECONOMETRICA HERE
\begin{align} \label{eq:ex-vcg}
	\E[p^V(v)]
	%= \sum_{s=1}^{k} P_s(v) \sum_{\ell = s}^{k} (\alpha_\ell - \alpha_{\ell+1}) \int_{0}^{v} \frac{(n-s)f(t){n-s-1 \choose n-\ell-1}F^{n-\ell-1}(t)(F(v)-F(t))^{\ell-s}}{F^{n-s}(v)} b^V(t) \ dt
	&= P_1(v) \biggl[ (\alpha_1 - \alpha_2) \int_0^v \frac{2t}{v^2} b^V(t) \,dt + 
	\alpha_2 \int_0^v \frac{2(v-t)}{v^2} b^V(t) \,dt \biggr]
	+ P_2(v) \alpha_2 \int_0^v \frac{1}{v} b^V(t) \,dt \notag \\
	&= (2\alpha_1-4\alpha_2)\int_0^v t b^V(t) \,dt + 2\alpha_2 \int_0^v b^V(t) \,dt,
\end{align}
%%
% \begin{align} \label{eq:ex-vcg}
% 	\E[p^V(v)]
% 	%= \sum_{s=1}^{k} P_s(v) \sum_{\ell = s}^{k} (\alpha_\ell - \alpha_{\ell+1}) \int_{0}^{v} \frac{(n-s)f(t){n-s-1 \choose n-\ell-1}F^{n-\ell-1}(t)(F(v)-F(t))^{\ell-s}}{F^{n-s}(v)} b^V(t) \ dt
% 	&= P_1(v) \biggl[ (\alpha_1 - \alpha_2) \int_0^v \frac{2t}{v^2} b^V(t) \,dt + {} \notag \\
% 	&\qquad\qquad \alpha_2 \int_0^v \frac{2(v-t)}{v^2} b^V(t) \,dt \biggr]
% 	+ P_2(v) \alpha_2 \int_0^v \frac{1}{v} b^V(t) \,dt \notag \\
% 	&= (2\alpha_1-4\alpha_2)\int_0^v t b^V(t) \,dt + 2\alpha_2 \int_0^v b^V(t) \,dt,
% \end{align}
where $2t/v^2=2F(t)f(t)/F(v)^2$ and $2(v-t)/v^2=2F(v-t)f(t)/F(v)^2$ are the densities of the second and third highest values given that the bidder's value~$v$ is the highest, and $1/v = f(t)/F(v)$ is the density of the third highest value given that~$v$ is the second highest. 
By \lemref{lem:myerson} the expressions in~\eqref{eq:ex-eff} and~\eqref{eq:ex-vcg} must again be the same. Taking the derivatives of both and solving for $b^{V}(v)$ yields
\begin{align*}  \label{eq:ex-vcg-cand}
	b^V(v) = \frac{2v^2-4\beta_2 v^2+2\beta_2 v}{2v-4\alpha_2v+2\alpha_2}
\end{align*}
when $v<1$, and we can extend $b^V$ appropriately when $v=1$.\footnote{We have assumed that $\alpha_2>0$, so the denominator vanishes only when $v=\alpha_2=1$. If $\beta_2 < 1$, then $\lim_{v\rightarrow 1}b^V(v)=\infty$. If $\beta_2=1$, application of l'Hospital's rule shows that $\lim_{v\rightarrow 1}b^V(v)=1$.}
By the same argument as before, the \aVCG mechanism has an efficient equilibrium if and only if $b^V$ is increasing almost everywhere. Taking the derivative we obtain
%% ECONOMETRICA HERE
\begin{equation*}
	\frac{db^V(v)}{dv} =
	\frac{(4v-8\beta_2 v+2\beta_2)(2v-4\alpha_2v+2\alpha_2)}{(2v-4\alpha_2v+2\alpha_2)^2} -
	\frac{(2-4\alpha_2)(2v^2-4\beta_2 v^2+2\beta_2 v)}{(2v-4\alpha_2v+2\alpha_2)^2}.
\end{equation*}
%%
% \begin{multline*}
% 	\frac{db^V(v)}{dv} =
% 	\frac{(4v-8\beta_2 v+2\beta_2)(2v-4\alpha_2v+2\alpha_2)}{(2v-4\alpha_2v+2\alpha_2)^2} - {} \\
% 	\frac{(2-4\alpha_2)(2v^2-4\beta_2 v^2+2\beta_2 v)}{(2v-4\alpha_2v+2\alpha_2)^2}.
% \end{multline*}
%%
When $\alpha_2<1$ the sign of this expression is determined by its numerator, which is positive at~$0$ and, depending on the value of~$\beta_2$, either non-decreasing everywhere on $[0,1]$ or decreasing everywhere on $[0,1]$. Indeed, $db^F(v)/dv|_{v=0}=\beta_2/\alpha_2>0$, and the derivative of the numerator, $(4-8\beta_2)(2v-4\alpha_2v+2\alpha_2)$, is non-negative when $\beta_2\leq 1/2$ and negative when $\beta_2>1/2$. When $\beta_2>1/2$ we need that
\begin{align*}
	&\left.\frac{db^V(v)}{dv} \right|_{\mathrlap{v=1}}\;\; = \frac{(4-6\beta_2)(2-2\alpha_2) - (2-4\alpha_2)(2-2\beta_2)}{(2-2\alpha_2)^2} \geq 0,
\end{align*}
which for $\alpha_2<1$ holds when
\begin{equation*}  %\label{eq:vcg_eq}
	\alpha_2 \geq 2-\frac{1}{\beta_2}.
\end{equation*}
When $\alpha_2=1$ the above reasoning still applies as long as $v<1$, so $b^V(v)$ is increasing almost everywhere when
\[
	\lim_{v\rightarrow 1}\frac{db^V(v)}{dv} \geq 0.
\]
This is indeed the case, as $\lim_{v\rightarrow 1}db^V(v)/dv=\infty$ when $\beta_2<1$, and $\lim_{v\rightarrow 1}db^V(v)/dv=1$ when $\beta_2=1$ by applying l'Hospital's rule twice. We conclude that the \aVCG mechanism possesses an efficient equilibrium if and only if $\beta_2 \le 1/2$ or $\alpha_2 \geq 2-1/\beta_2$.

It is now not hard to see that the equilibrium condition for the \aGFP mechanism is easier to satisfy than that for the \aVCG mechanism. In fact, for the \aVCG mechanism, efficient equilibria may cease to exist even when $\alpha_2$ is very close to $\beta_2$. When $\beta_2=0.8$, for example, any value of $\alpha_2\ge 0.5$ would suffice for the \aGFP mechanism, while the \aVCG mechanism would require that $\alpha_2\ge 0.75$. An illustration is provided in \figref{fig:uniform}. \figref{fig:gfp-vs-vcg-incomplete} compares the derivatives of the respective bidding functions for~$\beta_2 = 0.8$ and varying values of~$\alpha_2$.
\begin{figure}[tb]
\centering
% \scalebox{0.9}{%  %% ECONOMETRICA HERE
%% ARXIV HERE
\includegraphics{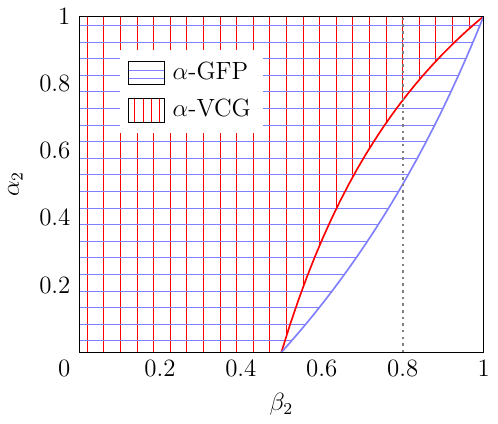}
\caption{Comparison of the \aGFP and \aVCG mechanisms under incomplete information, for a setting with two positions, three bidders, and valuations drawn independently and uniformly from~$[0,1]$. The hatched areas indicate the combinations of $\alpha_2$ and $\beta_2$ for which the mechanisms respectively possess an efficient equilibrium, when $\alpha_1=\alpha_2=1$. The dotted line at $\beta_2 = 0.8$ intersects the curve for the \aGFP mechanism at $\alpha_2 = 0.5$ and that for the \aVCG mechanism at $\alpha_2=0.75$. For all points between the intersection points the \aGFP mechanism has an efficient equilibrium and the \aVCG mechanism does not.}
\label{fig:uniform}
\end{figure}
\begin{figure}[tb]
% \scalebox{0.8}{  %% ECONOMETRICA HERE
%% ARXIV HERE
\includegraphics{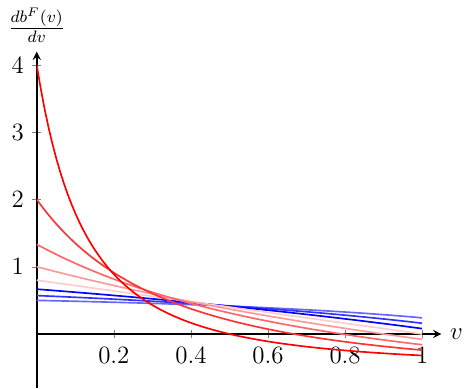}
\hspace*{\fill}
% \scalebox{0.8}{  %% ECONOMETRICA HERE
%% ARXIV HERE
\includegraphics{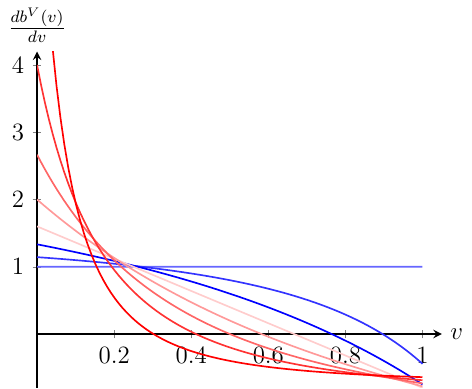}
\caption{Derivatives of the candidate bidding functions for the \aGFP and \aVCG mechanisms in a setting with three bidders with values distributed uniformly on $[0,1]$ and two positions with $\beta=(1,0.8)$, when $\alpha_1=1$ and $\alpha_2$ ranges from $0.8$ to $0.1$.} %%(light blue to dark red) 
\label{fig:gfp-vs-vcg-incomplete}
\end{figure}

\subsection{The General Case}

Superiority of the non-truthful mechanism in preserving an efficient equilibrium again holds in general. We proceed to establish a weak superiority for any number of positions and bidders and arbitrary symmetric valuation distributions, and note that examples showing a strict separation are straightforward to construct and have indeed been given for a specific setting.
\begin{theorem}  \label{thm:incomplete}
	Let $\alpha,\beta\in\R_{\ge}^k$. Let $v\in\R^n$, with components drawn independently from a continuous distribution with bounded support. Then the \aGFP mechanism possesses an efficient Bayes-Nash equilibrium for valuations given by~$\beta$ and~$v$ whenever the \aVCG mechanism does.
\end{theorem}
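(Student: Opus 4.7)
The plan is to use Myerson's lemma (Lemma~\ref{lem:myerson}) to derive closed-form candidate bidding functions for both mechanisms and then transfer monotonicity from the \aVCG candidate to the \aGFP candidate via a clean structural relationship. Write $P_s(v) = \binom{n-1}{s-1}(1-F(v))^{s-1}F(v)^{n-s}$ for the probability that an agent with value~$v$ is assigned position~$s$. By Myerson, any efficient symmetric equilibrium produces the expected payment
\[
N(v) := \sum_{s=1}^k \beta_s \int_0^v z\, P_s'(z)\,dz.
\]
In the \aGFP mechanism the expected payment under a symmetric profile equals $b^F(v) D(v)$, where $D(v) := \sum_{s=1}^k \alpha_s P_s(v)$, so I obtain $b^F = N/D$ immediately.

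For the \aVCG mechanism the expected externality of an agent with value $v$ works out to
\[
\E[p^V(v)] = \sum_{j=1}^k (\alpha_j - \alpha_{j+1}) \int_0^v b^V(z) f_{(j)}^{n-1}(z)\,dz,
\]
where $f_{(j)}^{n-1}$ denotes the density of the $j$-th highest of $n-1$ i.i.d.\ draws from $F$. Equating this with $N(v)$ naively produces an unwieldy differential equation. To sidestep it I would use the combinatorial identity $F_{(j)}^{n-1}(v) = \sum_{s=1}^j P_s(v)$, which holds because the $j$-th highest of $n-1$ values lies below $v$ precisely when fewer than $j$ of them exceed~$v$. Differentiating yields $f_{(j)}^{n-1} = \sum_{s=1}^j P_s'$, and summation by parts then collapses
\[
\sum_{j=1}^k (\alpha_j - \alpha_{j+1}) f_{(j)}^{n-1}(v) = \sum_{s=1}^k \alpha_s P_s'(v) = D'(v).
\]
Differentiating the VCG integral equation and inserting this identity gives the striking algebraic relation $b^V = N'/D'$. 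Thus the candidates stand in the parallel form $b^F = N/D$ and $b^V = N'/D'$.

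Following the pattern established in Section~\ref{sec:incomplete-example}, each mechanism admits an efficient Bayes-Nash equilibrium if and only if its candidate is strictly increasing almost everywhere. To transfer monotonicity, the assumption $n > k$ combined with $F(0) = 0$ forces $P_s(0) = 0$ for every $s \leq k$, and hence $N(0) = D(0) = 0$. This allows the representation
\[
b^F(v) = \frac{\int_0^v N'(z)\,dz}{\int_0^v D'(z)\,dz} = \frac{\int_0^v b^V(z) D'(z)\,dz}{\int_0^v D'(z)\,dz},
\]
a weighted average of $b^V$ on $[0,v]$ with positive weight $D'(z)$ (positivity of $D'$ on $(0,\bar{v})$ follows from the identity above together with $\alpha$ being non-increasing and not identically zero). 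If $b^V$ is strictly increasing almost everywhere, this average is strictly smaller than $b^V(v)$ at every $v\in(0,\bar{v}]$, i.e.\ $b^F(v) < b^V(v)$; this is exactly equivalent to $(b^F)'(v) = (N'D - ND')(v)/D(v)^2 > 0$, so $b^F$ is strictly increasing and the desired \aGFP equilibrium exists.

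The principal obstacle is spotting the combinatorial identity $f_{(j)}^{n-1} = \sum_{s=1}^j P_s'$; without it one is confronted with an ordinary differential equation for $b^V$. A secondary technicality is upgrading the weighted-average inequality to \emph{strict} monotonicity of $b^F$, which is exactly where the hypothesis $n > k$ enters, by way of $N(0) = D(0) = 0$.
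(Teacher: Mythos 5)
Your proposal is correct, and both of its main steps take genuinely different routes from the paper's, arriving at the same structural facts ($b^F=N/D$, $b^V=N'/D'$, which is the paper's \corref{cor:ab}). First, for the \aVCG candidate: the paper conditions on the agent's assigned position, which introduces conditional densities depending on $v$ and forces a lengthy binomial-cancellation argument in the proof of \lemref{lem:vcg} to collapse the resulting double sum. You instead take expectations term by term over the marginal increments $(\alpha_j-\alpha_{j+1})b_{(j+1)}$, observing that the $j$-th increment is charged exactly when the $j$-th highest competing value falls below $v$, and then use $F_{(j)}^{n-1}=\sum_{s\le j}P_s$ with Abel summation. This makes the paper's ``surprising combinatorial equivalence'' transparent and avoids the heavy computation; your compressed step ``works out to'' is the one place that deserves a sentence of justification (the event of occupying a position $\le j$ coincides almost surely with $\{V^{n-1}_{(j)}<v\}$, on which $b_{(j+1)}$ is the bid of the agent holding the $j$-th highest competing value). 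Second, for the monotonicity transfer: the paper argues by contraposition, locating a critical point $v^*$ of $b^F$ and propagating the sign of the second derivative to $(b^V)'$, which requires the boundary analysis of Lemmas~\ref{lem:aux1} and~\ref{lem:aux2} and the third-derivative assumption on $F$. Your weighted-average identity
\[
	(b^F)'(v) \;=\; \frac{D'(v)}{D(v)}\bigl(b^V(v)-b^F(v)\bigr),
	\qquad
	b^F(v)=\frac{\int_0^v b^V(z)\,D'(z)\,dz}{\int_0^v D'(z)\,dz},
\]
gives a direct implication, needs no second derivatives, and yields the stronger byproducts that $(b^F)'>0$ wherever $f>0$ and that $b^F\le b^V$ pointwise. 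You also correctly isolate where $n>k$ enters (it forces $D(0)=0$ so that the averaging representation is valid). The only points to make explicit in a polished write-up: positivity of $D'=\sum_j(\alpha_j-\alpha_{j+1})f^{n-1}_{(j)}\ge\alpha_k f^{n-1}_{(k)}$ off the null set where $f$ vanishes, and the observation that ``$b^V$ increasing almost everywhere'' rules out $b^V$ being constant on any interval, which is what makes the average strictly smaller than $b^V(v)$.
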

To obtain this general result we will follow the same basic strategy as in the special case, but will have to overcome two major difficulties on the way.

The first difficulty concerns the equilibrium bidding function for the \aVCG mechanism. Whereas deriving a bidding function for the \aGFP mechanism remains relatively straightforward even for an arbitrary number of positions and arbitrary valuation distributions, the situation becomes significantly more complex for the \aVCG mechanism due to the dependence of its payment rule on the bids for all lower positions. Specifically, when equating the two expressions for the expected payment in equilibrium, \eqref{eq:ex-eff} and \eqref{eq:ex-vcg} in the special case, and taking derivatives on both sides, the integrand in the latter no longer depends only on~$t$, the variable of integration. Instead, the conditional densities of the values of bidders assigned lower positions introduce a dependence on~$v$. When taking the derivative one would expect to obtain a differential equation, and a closed form solution to this differential equation would be required to continue with the rest of the argument. We take a different route and use a rather surprising combinatorial equivalence to obtain an alternative expression for the expected payment that only depends on~$t$.

A second difficulty arises when trying to show that~$b^F$ is increasing for a wider range of values of~$\alpha$ and~$\beta$ than~$b^V$. In the special case we could argue directly about the derivatives of the bidding functions, but this type of argument becomes infeasible rather quickly when increasing the number of positions or moving to general value distributions. The key insight that will allow us to generalize the result is that there exist functions $A:\R\rightarrow\R$ and $B:\R\rightarrow\R$ such that $b^F(v)=A(v)/B(v)$ and $b^V(v)=A'(v)/B'(v)$, where~$A'$ and~$B'$ respectively denote the derivatives of~$A$ and~$B$. This relationship is easily verified for~\eqref{eq:ex-eff} and~\eqref{eq:ex-vcg} but continues to hold in general. We use it to show that at the minimum value of~$v$ for which~$db^F(v)/dv$ is non-positive, should such a value exist,~$db^V(v)/dv$ is non-positive as well.

We begin by deriving candidate equilibrium bidding functions for the two mechanisms. Due to the more complicated structure of the payments, the case of the \aVCG mechanism is significantly more challenging.

\begin{lemma}  \label{lem:gfp}
	Let $\alpha,\beta\in\R_{\ge}^k$ with $\alpha_k>0$ and $\beta_k>0$. Suppose valuations are drawn from a distribution with support $[0,\bar{v}]$, probability density function $f$, and cumulative distribution function $F$. Then, an efficient equilibrium of the \aGFP mechanism must use a bidding function $b^F$ with
	\[
		b^F(v) = \frac{\sum_{s=1}^{k} \beta_s \int_{0}^{v}\! \frac{d P_s(t)}{dt} t \ dt}{\sum_{s=1}^{k} \alpha_s P_s(v)}.
	\]
	If $b^F$ is increasing almost everywhere, it constitutes the unique efficient equilibrium. Otherwise no efficient equilibrium exists.
\end{lemma}
%% APPENDIX

\begin{lemma}  \label{lem:vcg}
	Let $\alpha,\beta\in\R_{\ge}^k$ with $\alpha_k>0$ and $\beta_k>0$. Suppose valuations are drawn from a distribution with support $[0,\bar{v}]$, probability density function $f$, and cumulative distribution function $F$. Then, an efficient equilibrium of the \aVCG mechanism must use a bidding function $b^V$ with
	\[
		b^V(v) = \frac{\sum_{s=1}^{k} \beta_s \frac{d P_s(v)}{dv} v}{\sum_{s=1}^{k} \alpha_s \frac{d P_s(v)}{dv}}.
	\]
	If $b^V$ is increasing almost everywhere, it constitutes the unique efficient equilibrium. Otherwise no efficient equilibrium exists.
\end{lemma}
%% APPENDIX

Even with the candidate bidding functions $b^F$ and $b^V$ in hand, the cases where the \aGFP and \aVCG mechanisms respectively admit an efficient equilibrium seem difficult to compare. What will ultimately drive the proof of \thmref{thm:incomplete} is a rather curious relationship between the two bidding functions that is straightforward to verify given \lemref{lem:gfp} and \lemref{lem:vcg}: the numerator of $b^V$ is equal to the derivative of the numerator of $b^F$, and the denominator of $b^V$ is equal to the derivative of the denominator of $b^F$.
\begin{corollary}  \label{cor:ab}
	Let $b^F:\R\rightarrow\R$ and $b^V:\R\rightarrow\R$ be the candidate equilibrium bidding functions for the \aGFP and \aVCG mechanisms as defined in \lemref{lem:gfp} and \lemref{lem:vcg}. Then 
	\[
		b^F(v) = \frac{A(v)}{B(v)} \quad\text{and}\quad
		b^V(v) = \frac{A'(v)}{B'(v)},
	\]
	where $A(v)=\sum_{s=1}^{k}\beta_s\int_{0}^{v}\!\frac{d P_s(t)}{dt}t\;dt$ and $B(v)=\sum_{s=1}^{k}\alpha_s P_s(v)$.
\end{corollary}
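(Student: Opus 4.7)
The plan is to verify the two identities directly from the formulas provided by \lemref{lem:gfp} and \lemref{lem:vcg}. This is essentially a bookkeeping exercise; the point of stating the corollary separately is that it will underpin the proof of \thmref{thm:incomplete}, not that it requires any substantial new argument.

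First I would handle $b^F$. By \lemref{lem:gfp}, the numerator of $b^F(v)$ is $\sum_{s=1}^{k} \beta_s \int_{0}^{v} \frac{d P_s(t)}{dt}\,t\,dt$, which is precisely $A(v)$, and the denominator is $\sum_{s=1}^{k}\alpha_s P_s(v)$, which is precisely $B(v)$. Thus $b^F(v)=A(v)/B(v)$ holds by definition, with no computation needed.

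For $b^V$ I would apply the fundamental theorem of calculus termwise to differentiate $A$. Since each integrand $\frac{dP_s(t)}{dt}\,t$ is continuous under the assumed regularity of $F$, we obtain
\[
	A'(v) \;=\; \sum_{s=1}^{k} \beta_s \,\frac{d P_s(v)}{dv}\, v,
\]
which matches the numerator of $b^V$ from \lemref{lem:vcg}. Differentiating $B$ termwise gives
\[
	B'(v) \;=\; \sum_{s=1}^{k} \alpha_s \,\frac{d P_s(v)}{dv},
\]
matching the denominator of $b^V$. Taking the ratio yields $b^V(v)=A'(v)/B'(v)$ on the set where $B'(v)\neq 0$.

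There is no real obstacle here; the only thing to be slightly careful about is the domain on which the identity for $b^V$ makes sense, since $B'(v)$ could in principle vanish. However, under the assumptions of \lemref{lem:vcg} that $\alpha_j>0$ and the regularity of $F$, the denominator is nonzero on the interior of $[0,\vmax]$ where the relevant monotonicity analysis for \thmref{thm:incomplete} will take place, so the corollary suffices for its intended use.
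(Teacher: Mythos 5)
Your proposal is correct and matches the paper's intent exactly: the paper treats \corref{cor:ab} as ``straightforward to verify'' from \lemref{lem:gfp} and \lemref{lem:vcg}, with the identity $b^F=A/B$ holding by definition and $b^V=A'/B'$ following from termwise differentiation of $A$ and $B$. Your added remark about where $B'(v)$ is nonzero is a reasonable bit of extra care but does not change the argument.
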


\begin{figure}[tb]
\centering
% \scalebox{0.9}{  %% ECONOMETRICA HERE
%% ARXIV HERE
\includegraphics{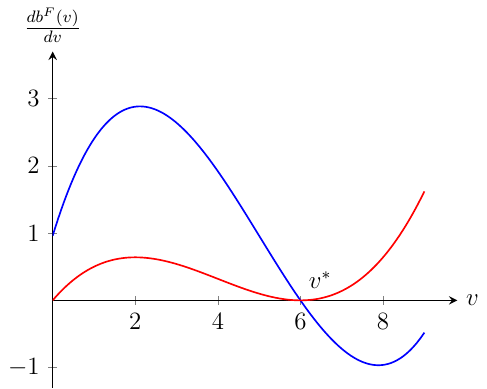}
% \begin{tikzpicture}
% \begin{axis}[axis lines=middle,axis line style=thick,samples=200,xmin=0,ymin=-1.3,xmax=9.8,ymax=3.7,xlabel={$v$},ylabel={$\frac{db^F(v)}{dv}$},ylabel style={at=(current axis.above origin), anchor=south},xlabel style={at=(current axis.right of origin), anchor=west}]
% \addplot[name path=f,blue,domain=0:9,thick] {0.04*x*(x-5)*(x-10)+.96};
% \addplot[name path=g,red,domain=0:9,thick] {0.02*x*x*x-0.24*x*x+0.72*x};
% % \addplot[name path=g,red,domain=0:9,thick] {0.02*x*x*x-0.24*x*x+0.01*(x-6)*(x-6)+0.72*x};
% \node[anchor=west] at (axis cs:6,0.3) {$v^*$};
% \end{axis}
% \end{tikzpicture}%}
\caption{Possible forms of the derivative of the candidate equilibrium bidding function $b^F$ when the \aGFP mechanism does not possess an equilibrium. Both the derivative and the second derivative are non-negative at zero, so if the former is non-positive anywhere on $(0,\bar{v}]$ there must be a value $v^*>0$ where it either touches or cuts the $x$-axis from above.}  \label{fig:derivative}
\end{figure}
To show that the \aGFP mechanism possesses an efficient equilibrium whenever the \aVCG mechanism does we recall that equilibrium existence is equivalent to a bidding function that is increasing almost everywhere. We first consider the candidate bidding function for the \aGFP mechanism and show that at $v=0$, either its derivative is positive or both its derivative and second derivative are non-negative. Failure to possess an equilibrium thus implies existence of a value $v^*>0$ where the derivative cuts the $x$-axis from above, or of a set of such values with positive measure where it touches the $x$-axis. In a second step we then show that the candidate bidding function for the \aVCG mechanism behaves roughly in the same way at these values. The situation is illustrated in \figref{fig:derivative}.

\begin{lemma}  \label{lem:aux1}
	Let $b^F:\R\rightarrow\R$ be the candidate equilibrium bidding function for the \aGFP mechanisms as defined in \lemref{lem:gfp}. Then,
\[
	 \left.\frac{db^F(v)}{dv}\right|_{\mathrlap{v=0}}\;\;  = \frac{n-k}{n-k+1} \cdot \frac{\beta_k}{\alpha_k}.
\]
\end{lemma}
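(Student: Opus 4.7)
My approach is to extract the leading-order behavior of $b^F(v)$ as $v\to 0$ directly from the quotient representation $b^F(v)=A(v)/B(v)$ provided by \corref{cor:ab}, where $A(v)=\sum_{s=1}^{k}\beta_s\int_{0}^{v}\frac{dP_s(t)}{dt}t\,dt$ and $B(v)=\sum_{s=1}^{k}\alpha_s P_s(v)$. The assumption $n>k$ forces every $P_s(v)$ with $s\leq k$ to contain a factor $F(v)^{n-s}$ with $n-s\geq 1$; hence $P_s(0)=0$ for all $s$ in the sum, and both $A(0)=0$ and $B(0)=0$. The derivative at $v=0$ therefore cannot be read off from $b^F(0)$ alone, and a careful asymptotic analysis is required.

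The core of the argument is to expand $A$ and $B$ in $v$ using $F(v)=f(0)v+O(v^2)$ and $(1-F(v))^{s-1}=1+O(v)$. Substituting into $P_s(v)=\binom{n-1}{s-1}(1-F(v))^{s-1}F(v)^{n-s}$ gives $P_s(v)=\binom{n-1}{s-1}f(0)^{n-s}v^{n-s}+o(v^{n-s})$. Among indices $s\in\{1,\dots,k\}$ the smallest power $n-s$ is attained uniquely at $s=k$, so only that term contributes to leading order:
\[
B(v)=\alpha_k\binom{n-1}{k-1}f(0)^{n-k}v^{n-k}+o(v^{n-k}).
\]
For $A(v)$, differentiating $P_s$ produces two terms; since $n>s$, the one with the lowest power of $t$ is $\binom{n-1}{s-1}(n-s)(1-F(t))^{s-1}F(t)^{n-s-1}f(t)$, which expands to $\binom{n-1}{s-1}(n-s)f(0)^{n-s}t^{n-s-1}+o(t^{n-s-1})$. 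Multiplying by $t$ and integrating, and again singling out the dominant index $s=k$, yields
\[
A(v)=\beta_k\binom{n-1}{k-1}\frac{n-k}{n-k+1}f(0)^{n-k}v^{n-k+1}+o(v^{n-k+1}).
\]

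Dividing the two asymptotics, the common factors $\binom{n-1}{k-1}f(0)^{n-k}v^{n-k}$ cancel, producing $b^F(v)=\frac{(n-k)\beta_k}{(n-k+1)\alpha_k}\,v+o(v)$ as $v\to 0$. Since the leading behavior is linear in $v$, the derivative at $v=0$ equals the coefficient of $v$, which is precisely the claimed value. No deep obstacle arises in this argument; the main point is to verify that the lowest-order contributions to both $A$ and $B$ come from the same index $s=k$, so no mixing of contributions occurs, after which all the binomial and $f(0)$ factors cancel to leave the clean ratio $\frac{n-k}{n-k+1}\cdot\frac{\beta_k}{\alpha_k}$. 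The regularity of $F$ assumed in the preliminaries (existence of bounded first few derivatives) justifies the asymptotic manipulations.
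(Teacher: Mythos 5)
Your proposal is correct, and it reaches the result by a more direct route than the paper. The paper also starts from $b^F=A/B$ (via \corref{cor:ab}) and also isolates the $s=k$ contribution using $\lim_{x\to 0}F(x)/x=f(0)$, but it gets at the derivative by writing it as the difference quotient $\lim_{\delta\to 0}\bigl(A(2\delta)/B(2\delta)-A(\delta)/B(\delta)\bigr)/\delta$, applying l'Hospital's rule to each of the two limits separately, and then extracting the surviving binomial terms; the factor $\frac{n-k}{n-k+1}$ emerges there as $\frac{2(n-k)}{n-k+1}-\frac{n-k}{n-k+1}$. You instead expand $P_s(v)=\binom{n-1}{s-1}f(0)^{n-s}v^{n-s}+o(v^{n-s})$, read off $A(v)\sim c_A v^{n-k+1}$ and $B(v)\sim c_B v^{n-k}$, and conclude $b^F(v)=cv+o(v)$ with $b^F(0)=0$, so the derivative at $0$ is $c$ by definition. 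This is cleaner and avoids the somewhat artificial difference-quotient construction; the identification of $s=k$, lowest-power terms as the only survivors is the same key step in both arguments. Two small caveats: your expansion needs $f(0)>0$ for the leading coefficients to be nonzero (the paper uses this implicitly as well), and your argument as written covers only $n>k$, whereas the paper invokes the lemma also at $n=k$ (in the proof of \lemref{lem:aux2}); that case is degenerate but easy, since there $P_k(0)=1$ gives $B(0)=\alpha_k>0$ while $A(0)=A'(0)=0$, so the quotient rule yields derivative $0$, consistent with the stated formula.
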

%% APPENDIX

\begin{lemma}\label{lem:aux2}
Let $b^F:\R\rightarrow\R$ be the candidate equilibrium bidding function for the \aGFP mechanisms as defined in \lemref{lem:gfp}. Then, for $n=k$,
\[
	 \left.\frac{d^2b^F(v)}{dv^2}\right|_{\mathrlap{v=0}}\;\;\ge 0.
\]
\end{lemma}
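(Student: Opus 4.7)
The plan is to write $b^F = A/B$ as in Corollary~\ref{cor:ab} and reduce the claim to a sign condition on $A''(0)$ via two applications of the quotient rule. Three observations do most of the work at $v=0$. First, $A(v) = \sum_{s=1}^k \beta_s \int_0^v P_s'(t)\, t\, dt$ satisfies $A(0)=0$. Second, $A'(v) = \sum_{s=1}^k \beta_s P_s'(v)\, v$, so the explicit factor of $v$ forces $A'(0)=0$ regardless of the values of $P_s'(0)$. Third, when $n=k$ the formula $P_s(v) = \binom{n-1}{s-1}(1-F(v))^{s-1}F(v)^{n-s}$ gives $P_s(0)=0$ for every $s<k$ while $P_k(0)=1$, so $B(0) = \alpha_k > 0$.

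With $A(0)=A'(0)=0$ and $B(0)>0$, two applications of the quotient rule collapse to
\[
	\left.\frac{d^2 b^F(v)}{dv^2}\right|_{v=0} = \frac{A''(0)}{B(0)} = \frac{A''(0)}{\alpha_k},
\]
so everything reduces to showing $A''(0)\ge 0$. Differentiating $A'(v)=\sum_{s} \beta_s P_s'(v)\, v$ once more yields $A''(0)=\sum_{s=1}^k \beta_s P_s'(0)$.

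The remaining step is to evaluate $P_s'(0)$ explicitly. Applying the product rule to $(1-F(v))^{s-1}F(v)^{n-s}$ produces two terms carrying factors $F(0)^{n-s}$ and $F(0)^{n-s-1}$ respectively, both of which vanish whenever $s<n-1$. When $n=k$ this leaves only $P_{k-1}'(0) = (k-1)f(0)$ and $P_k'(0) = -(k-1)f(0)$, so
\[
	A''(0) = (k-1)\, f(0)\, (\beta_{k-1}-\beta_k) \;\ge\; 0
\]
by monotonicity of $\beta$, which completes the argument. I anticipate no serious obstacle; the main insight is simply that the quotient-rule reduction isolates the monotonicity of $\beta$ at the two largest indices as the only condition that has to be checked, and the edge cases of $k=1$ or $f(0)=0$ make $A''(0)$ vanish, still satisfying the desired inequality.
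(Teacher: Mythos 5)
Your proof is correct and follows essentially the same route as the paper: write $b^F=A/B$ as in \corref{cor:ab}, use $A(0)=A'(0)=0$ and $B(0)=\alpha_k>0$ to collapse the second derivative at zero to $A''(0)/B(0)$, and compute $A''(0)=(k-1)f(0)(\beta_{k-1}-\beta_k)\ge 0$. The only (harmless) difference is that you obtain $A'(0)=0$ directly from the explicit factor of $v$ in $A'$, whereas the paper deduces the vanishing of $A'(v)B(v)-A(v)B'(v)$ at $v=0$ from \lemref{lem:aux1}.
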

%% APPENDIX

\begin{proof}[Proof of \thmref{thm:incomplete}]
	Assume that the \aGFP mechanism does not possess an equilibrium, and recall that this implies the existence of a set of values with positive measure where the derivative of~$b^F$ is not strictly increasing. By Lemmas~\ref{lem:aux1} and~\ref{lem:aux2}, there must thus exist a set of values $v^*>0$ with positive measure where
\begin{equation*}
	\left.\frac{db^F(v)}{dv}\right|_{\mathrlap{v=v^*}}\;\; = 0
	\qquad\text{and}\qquad
	\left.\frac{d^2b^F(v)}{dv^2}\right|_{\mathrlap{v=v^*}}\;\; \leq 0 ,
\end{equation*}
or one such value where the equality holds and the inequality is strict.

For an arbitrary value~$v$,
\begin{gather}
\begin{aligned}
	\frac{db^F(v)}{dv} &= \frac{A'(v)B(v)-B'(v)A(v)}{(B(v))^2}=0 \notag \\
\end{aligned}
	\intertext{requires that}
	A'(v)B(v)-B'(v)A(v)=0. \label{eq:condone}
\end{gather}
Assuming~\eqref{eq:condone},
\begin{gather}
	\begin{aligned}
	\frac{d^2b^F(v)}{dv^2} %&= \frac{[A'(v)B(v) - B'(v)A(v)]' (B(v))^2 + [A'(v)B(v) - B'(v)A(v)] 2 B'(v)}{(B(v))^4} \notag\\
		%&= \frac{[A'(v)B(v) - B'(v)A(v)]'}{(B(v))^2} \notag\\
		%&= \frac{A''(v)B(v) + A'(v)B'(v)  - B''(v)A(v) - B'(v)A'(v)}{(B(v))^2} \notag\\
		&= \frac{A''(v)B(v)  - B''(v)A(v)}{(B(v))^2} \le 0 \notag\\
	\end{aligned}
	\intertext{requires that}
	A''(v)B(v)  - B''(v)A(v) \le 0, \label{eq:condtwo}
\end{gather}
Consider any $v^*>0$, and observe that $A(v^*)>0$ and $A'(v^*)>0$. For $v=v^*$ we can thus rewrite~\eqref{eq:condone} as $B(v^*)=\frac{B'(v^*)A(v^*)}{A'(v)}$, and substitute this into~\eqref{eq:condtwo} to obtain
\begin{align*}
	& A''(v^*) \frac{B'(v^*)A(v^*)}{A'(v^*)}  - B''(v^*)A(v^*) \le 0.
\end{align*}
Dividing by $A(v^*)>0$ and multiplying by $A'(v^*)>0$ yields
\begin{align*}
		& A''(v^*) B'(v^*) - A'(v^*)B''(v^*) \le 0,
\end{align*}
and thus
\begin{align*}
	& \left.\frac{b^V(v)}{dv}\right|_{\mathrlap{v=v^*}}\;\; = \frac{A''(v^*) B'(v^*) - A'(v^*)B''(v^*)}{(B'(v^*))^2} \le 0.
\end{align*}
It is, moreover, easily verified that the inequality holds strictly when $d^2b^F(v)/dv^2|_{v=v^*}<0$. There thus exists a set of values $v^*$ with positive measure where $\frac{b^V(v)}{dv}\leq 0$, and the claim follows.
\end{proof}

\appendix

\section{Complete Information: Three Positions and Four Bidders}
\label{app:three_four}

Assume without loss of generality that $v_1\geq v_2\geq v_3 \geq v_4 > 0$, and in addition that $\beta_1>\beta_2>\beta_3>0$. 
Efficiency then requires that
\begin{align}
	b_1 \geq b_2 \geq b_3 \geq b_4. \label{eq:efficiency}
\end{align}
For~$b$ to be an equilibrium, none of the bidders may benefit from raising or lowering their respective bid and being assigned a different position, which for the \aVCG mechanism means that
\begingroup\allowdisplaybreaks
\begin{align}
	%% 1 -> 2
	\beta_1 v_1 - (\alpha_1-\alpha_2) b_2 - (\alpha_2-\alpha_3) b_3 - \alpha_3 b_4 &\geq \beta_2 v_1 - (\alpha_2-\alpha_3) b_3 - \alpha_3 b_4, \label{eq:vcg12a} \\
	%% 1 -> 3
	\beta_1 v_1 - (\alpha_1-\alpha_2) b_2 - (\alpha_2-\alpha_3) b_3 - \alpha_3 b_4 &\geq \beta_3 v_1 - \alpha_3 b_4, \label{eq:vcg13a} \\	
	%% 1 -> 4
	\beta_1 v_1 - (\alpha_1-\alpha_2) b_2 - (\alpha_2-\alpha_3) b_3 - \alpha_3 b_4 &\geq 0, \label{eq:vcg14a} \\
	%% 2 -> 1
	\beta_2 v_2 - (\alpha_2-\alpha_3) b_3 - \alpha_3 b_4 &\geq \beta_1 v_2 - (\alpha_1-\alpha_2) b_1 - (\alpha_2-\alpha_3) b_3 - \alpha_3 b_4, \label{eq:vcg21a} \\
	%% 2 -> 3
	\beta_2 v_2 - (\alpha_2-\alpha_3) b_3 - \alpha_3 b_4 &\geq \beta_3 v_2 -  \alpha_3 b_4, \label{eq:vcg23a} \\
	%% 2 -> 4
	\beta_2 v_2 - (\alpha_2-\alpha_3) b_3 - \alpha_3 b_4 &\geq 0, \label{eq:vcg24a} \\
	%% 3 -> 1
	\beta_3 v_3 - \alpha_3 b_4 &\geq \beta_1 v_3 - (\alpha_1-\alpha_2) b_1 - (\alpha_2-\alpha_3) b_2 - \alpha_3 b_4, \label{eq:vcg31a} \\
	%% 3 -> 2
	\beta_3 v_3 - \alpha_3 b_4  &\geq \beta_2 v_3 - (\alpha_2-\alpha_3) b_2 - \alpha_3 b_4, \label{eq:vcg32a} \\
	%% 3-> 4
	\beta_3 v_3 - \alpha_3 b_4  &\geq 0, \label{eq:vcg34a} \\
	%% 4-> 1
	0 &\geq \beta_1 v_4 - (\alpha_1-\alpha_2) b_1 - (\alpha_2-\alpha_3) b_2 - \alpha_3 b_3, \label{eq:vcg41a} \\
	%% 4-> 2
	0 &\geq \beta_2 v_4 - (\alpha_2-\alpha_3) b_2 - \alpha_3 b_3, \label{eq:vcg42a} \\
	%% 4-> 3
	0 &\geq \beta_3 v_4 - \alpha_3 b_3. \label{eq:vcg43a}
\end{align}
\endgroup
By~\eqref{eq:vcg21a}, $(\alpha_1-\alpha_2)b_1\geq(\beta_1-\beta_2)v_2$ and thus $\alpha_1>\alpha_2$. By~\eqref{eq:vcg32a}, $(\alpha_2-\alpha_3)b_2\geq(\beta_2-\beta_3)v_3$ and thus $\alpha_2>\alpha_3$. By~\eqref{eq:vcg43a}, $\alpha_3 b_3\geq\beta_3 v_4$ and thus $\alpha_3>0$.
There are no upper bounds on~$b_1$ and no lower bounds on $b_4$ except $b_4\geq 0$, and setting~$b_1$ to a large value and $b_4=0$ satisfies \eqref{eq:vcg21a}, \eqref{eq:vcg31a}, \eqref{eq:vcg34a}, and \eqref{eq:vcg41a}.
It is furthermore easy to see that \eqref{eq:vcg14a} is implied by \eqref{eq:vcg13a} and that \eqref{eq:vcg24a} is implied by \eqref{eq:vcg23a}.
Since $(\beta_1-\beta_3)v_1-(\alpha_2-\alpha_3)b_3\geq (\beta_1-\beta_3)v_1-(\beta_2-\beta_3)v_2\geq (\beta_1-\beta_3)v_1-(\beta_2-\beta_3)v_1=(\beta_1-\beta_2)v_1$, where the first inequality holds because, by~\eqref{eq:vcg23a}, $(\alpha_2-\alpha_3)b_3\leq(\beta_2-\beta_3)v_2$, and the second inequality because $v_1\geq v_2$, \eqref{eq:vcg13a} is implied by \eqref{eq:vcg12a}.
Since $\beta_2v_4-\alpha_3b_3\leq\beta_2v_4-\beta_3v_4=(\beta_2-\beta_3)v_4\leq(\beta_2-\beta_3)v_3$, where the first inequality holds because, by~\eqref{eq:vcg43a}, $\alpha_3b_3\geq\beta_3v_4$, and the second inequality because $v_3\geq v_4$, \eqref{eq:vcg42a} is implied by \eqref{eq:vcg32a}.
Since $\alpha_1-\alpha_2>0$, $\alpha_2-\alpha_3>0$, and $\alpha_3>0$, we can rewrite the remaining constraints \eqref{eq:vcg12a}, \eqref{eq:vcg23a}, \eqref{eq:vcg32a}, and \eqref{eq:vcg43a} as upper and lower bounds on~$b_3$ and~$b_4$ and conclude that the \aVCG mechanism possesses an \emph{efficient} equilibrium if and only if there exist bids $b_2$ and $b_3$ such that
\begin{equation}  \label{eq:vcgbx}
\begin{aligned}
	\frac{(\beta_1 - \beta_2) v_1}{\alpha_1-\alpha_2} &\geq b_2 \geq \max\biggl\{\frac{(\beta_2 - \beta_3) v_3}{\alpha_2-\alpha_3},b_3\biggr\}, \\ %\label{eq:vcgb1} \\
	\frac{(\beta_2 - \beta_3) v_2}{\alpha_2-\alpha_3} &\geq b_3 \geq \frac{\beta_3 v_4}{\alpha_3}. %\label{eq:vcgb2}
\end{aligned}
\end{equation}

Analogously, for the \aGSP mechanism, the equilibrium conditions require that
\begin{align}
	%% 1 -> 2
	\beta_1 v_1 - \alpha_1 b_2 &\geq \beta_2 v_1 - \alpha_2 b_3, \label{eq:gsp12a} \\
	%% 1 -> 3
	\beta_1 v_1 - \alpha_1 b_2 &\geq \beta_3 v_1 - \alpha_3 b_4,  \label{eq:gsp13a} \\
	%% 1 -> 4
	\beta_1 v_1 - \alpha_1 b_2 &\geq 0, \label{eq:gsp14a} \\
	%% 2 -> 1
	\beta_2 v_2 - \alpha_2 b_3 &\geq \beta_1 v_2 - \alpha_1 b_1, \label{eq:gsp21a} \\
	%% 2 -> 3
	\beta_2 v_2 - \alpha_2 b_3 &\geq \beta_3 v_2 - \alpha_3 b_4, \label{eq:gsp23a}\\
	%% 2 -> 4
	\beta_2 v_2 - \alpha_2 b_3 &\geq 0, \label{eq:gsp24a} \\
	%% 3 -> 1
	\beta_3 v_3 - \alpha_3 b_4  &\geq \beta_1 v_3 - \alpha_1 b_1, \label{eq:gsp31a} \\
	%% 3 -> 2
	\beta_3 v_3 - \alpha_3 b_4 &\geq \beta_2 v_3 - \alpha_2 b_2, \label{eq:gsp32a} \\
	%% 3-> 4
	\beta_3 v_3 - \alpha_3 b_4 &\geq 0, \label{eq:gsp34a} \\
	%% 4-> 1
	0 &\geq \beta_1 v_4 - \alpha_1 b_1, \label{eq:gsp41a} \\
	%% 4-> 2
	0 &\geq \beta_2 v_4 - \alpha_2 b_2, \label{eq:gsp42a} \\
	%% 4-> 3
	0 &\geq \beta_3 v_4 - \alpha_3 b_3. \label{eq:gsp43a}
\end{align}
For \eqref{eq:gsp41a}, \eqref{eq:gsp42a}, \eqref{eq:gsp43a} it must be the case that $\alpha_1>0$, $\alpha_2>0$, and $\alpha_3>0$, which is weaker than the corresponding condition for the \aVCG mechanism.
There are again no upper bounds on~$b_1$, and setting it to a large value satisfies \eqref{eq:gsp21a}, \eqref{eq:gsp31a}, and \eqref{eq:gsp41a}.
Since $\beta_3 v_1-\alpha_3 b_4\geq\beta_3 v_2-\alpha_3 b_4\geq\beta_3 v_3-\alpha_3 b_4\geq 0$, where the first two inequalities hold because $v_1\geq v_2\geq v_3$ and the third inequality by~\eqref{eq:gsp34a}, \eqref{eq:gsp14a} is implied by \eqref{eq:gsp13a} and \eqref{eq:gsp24a} by \eqref{eq:gsp23a}.
Since $\beta_2v_1-\alpha_2b_3\geq\beta_2v_1-(\beta_2-\beta_3)v_2-\alpha_3b_4\geq \beta_2v_1-(\beta_2-\beta_3)v_1-\alpha_3b_4=\beta_3v_1-\alpha_3b_4$, where the first inequality holds because, by~\eqref{eq:gsp23a}, $\alpha_2b_3\leq(\beta_2-\beta_3)v_2+\alpha_3b_4$, and the second inequality because $v_1\geq v_2$, \eqref{eq:gsp13a} is implied by \eqref{eq:gsp12a}.
Since $\alpha_1>0$, $\alpha_2>0$, and $\alpha_3>0$, we can rewrite the remaining constraints \eqref{eq:gsp12a}, \eqref{eq:gsp23a}, \eqref{eq:gsp32a}, \eqref{eq:gsp34a}, \eqref{eq:gsp42a}, and \eqref{eq:gsp43a} as upper and lower bounds on $b_2$, $b_3$, and $b_4$ and conclude that the \aGSP mechanism possesses an \emph{efficient} equilibrium if and only if there exist bids $b_2$, $b_3$, and $b_4$ such that
\begin{equation}  \label{eq:gspbx}
\begin{aligned}
	\frac{(\beta_1-\beta_2) v_1 + \alpha_2 b_3}{\alpha_1} &\geq b_2 \geq
	\max \biggl\{ \frac{(\beta_2 - \beta_3) v_3 + \alpha_3 b_4}{\alpha_2} , \frac{\beta_2 v_4}{\alpha_2}, b_3 \biggr\}, \\ %\label{eq:gspb1} \\
	\frac{(\beta_2 - \beta_3) v_2 + \alpha_3 b_4}{\alpha_2} &\geq b_3 \geq \max \biggl\{\frac{\beta_3 v_4}{\alpha_3}, b_4 \biggr\}, \\ %\label{eq:gspb2} \\
	\frac{\beta_3 v_3}{\alpha_3} &\geq b_4. %\label{eq:gspb3}
\end{aligned}
\end{equation}
It is not immediately obvious when these constraints can be satisfied, and why they should in fact be easier to satisfy than the constraints for the \aVCG mechanism. That~$b_2$ and~$b_3$ are each subject to more than one lower bound, and that~$b_4$ affects both the lower bound on~$b_2$ and the upper bound on~$b_3$, seems particularly unpleasant.

In \secref{sec:complete-general} we established that, even in the general case with an arbitrary number of positions and bidders, the \aGSP mechanism possesses an efficient Nash equilibrium whenever the \aVCG mechanism does. This is achieved by considering a particular, maximal, solution to the constraints for the \aVCG mechanism and mapping it to a solution to the constraints for the \aGSP mechanism. 
Instead of repeating the argument here, we show strict superiority of the \aGSP mechanism by focusing on the case where $\beta_1=\alpha_1=1$, $\beta_3=\alpha_3$, and $v_3=v_4$. By specializing~\eqref{eq:vcgbx} and~\eqref{eq:gspbx}, which requires some work and in the case of the \aVCG mechanism involves showing that one of the resulting lower bounds is always stronger than the other, we see that the \aVCG mechanism possesses an efficient equilibrium if and only if
\begin{align*}
	\frac{\beta_3(1-\beta_2)v_1+(\beta_2-\beta_3)v_3}{(1-\beta_2)v_1+(\beta_2-\beta_3)v_3} 
	&\leq \alpha_2 \leq
	\frac{(\beta_2-\beta_3)v_2+\beta_3v_3}{v_3}, %\label{eq:vcg-simple}
% \end{align}
\intertext{and the \aGSP mechanism if and only if}
% \begin{align}
	\frac{\beta_2v_3}{(1-\beta_2)v_1+\beta_2v_3} 
	&\leq \alpha_2 \leq
	\frac{(\beta_2-\beta_3)v_2+\beta_3v_3}{v_3}. %\label{eq:gsp-simple}
\end{align*}
% That the constraints for the \aVCG and the \aGSP mechanism indeed simplify to these one-liners requires some argument, and for \aVCG involves showing that one of the resulting lower bounds is dominated by the lower bound given above.
%%
The upper bounds are identical in both cases, and it is not difficult to see that the lower bound for the \aGSP mechanism is easier to satisfy than that for the \aVCG mechanism. We compare the bounds in \figref{fig:complete34}, and note that existence of an efficient equilibrium may fail due to over- as well as underestimation of the relative values of the positions.
%%
%% HEREF: I think it would be good to make all arguments precise.
%%
\begin{figure}[tb]
\centering
% \scalebox{0.9}{%  %% ECONOMETRICA HERE
%% ARXIV HERE
\includegraphics{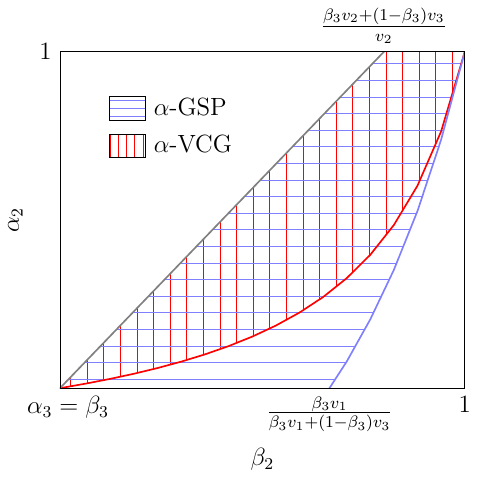}
 \caption{Comparison of the \aGSP and \aVCG mechanisms under complete information, for a setting with three positions and four bidders where $\beta_1=\alpha_1=1$, $\beta_3=\alpha_3$, and $v_3=v_4$. The hatched areas indicate the combinations of~$\alpha_2$ and~$\beta_2$ for which the mechanisms respectively possess an efficient equilibrium. The common upper bound on these areas always starts at the origin and reaches $\alpha_2=1$ at $\beta_2=(\beta_3v_2+(1-\beta_3)v_3)/v_2$. The lower bounds for both mechanisms end at the top-right corner. That for the \aGSP mechanism meets the horizontal axis at $\beta_2=\beta_3v_1/(\beta_3v_1+(1-\beta_3)v_3)$, whereas that for the \aVCG mechanism starts at the origin and curves more strongly toward the bottom-right corner as~$v_3$ decreases.}
	\label{fig:complete34}
\end{figure}

\section{Proof of~\lemref{lem:gfp}}

% \begin{proof}
Since efficient equilibria must be symmetric, we can write an efficient equilibrium of the \aGFP mechanism in terms of a bidding function $b^F:[0,\bar{v}]\rightarrow\R_{\ge 0}$. A bidder with value~$v$ who is allocated position~$s$ then pays $\alpha_s b^F(v)$, and we have that
\begin{align}  \label{eq:gfp}
	\E\left[p^F(v)\right] = \sum_{s=1}^{k} \alpha_s P_s(v) b^F(v).
\end{align}
The expected payment in an efficient equilibrium is given by \lemref{lem:myerson}, and by equating~\eqref{eq:gfp} with \eqref{eq:eff} and solving for $b^F(v)$ we obtain
\[
	b^F(v) = \frac{\sum_{s=1}^{k} \beta_s \int_{0}^{v}\! \frac{d P_s(t)}{dt} t\, dt}{\sum_{s=1}^{k} \alpha_s P_s(v)}.
\]
Bidding below $b^{F}(0)=0$ is impossible and bidding above $b^{F}(\bar{v})$ is dominated, so the claim follows from \lemref{lem:myerson}.
% \end{proof}

\section{Proof of~\lemref{lem:vcg}}

% \begin{proof}
	Efficiency again requires symmetry, so any efficient equilibrium of the \aVCG mechanism can be described by a bidding function $b^V:[0,\bar{v}]\rightarrow\R_{\ge 0}$. 
	
	Denote by $p^V(v)$ the payment in the \aVCG mechanism of a bidder with value $v$, and by $p_s^V(v)$ the same payment under the condition that the bidder has the $s$-highest value overall. These quantities are random variables that depend on the values of $n-1$ other bidders, and we have that
	\begin{align}  \label{eq:expay}
		\E[p^V(v)] &= \sum_{s=1}^k P_s(v)\cdot \E[p_s^V(v)],
	\end{align}
	where, as before, $P_s(v)$ is the probability that~$v$ is the $s$-highest of $n$ values drawn independently from~$F$.	
	The conditional payment $p_s^V(v)$ depends on the conditional densities of the valuations of bidders assigned lower positions, and on their resulting bids. For $s\in\{1,\dots,k\}$ and $\ell\in\{s,\dots,k\}$, denote by
	\begin{align*}
		I_{\ell,s}(v,t) &= \frac{(n-s)f(t)\binom{n-s-1}{n-\ell-1}F(t)^{n-\ell-1}(F(v)-F(t))^{\ell-s}}{F(v)^{n-s}}
	\end{align*}
	the density at~$t$ of the $(\ell+1)$-highest of $n$ values drawn independently from~$F$, under the condition that the $s$-highest value is equal to~$v$. Then
	\begin{align*}
		\E[p_s^V(v)] &= \sum_{\ell = s}^{k} (\alpha_\ell - \alpha_{\ell+1})\cdot \int_{0}^{v} I_{\ell,s}(v,t)\, b^V(t) \;dt,
	\end{align*}
	and by substituting into~\eqref{eq:expay}, exchanging the order of summation and integration, and grouping by coefficients of~$\alpha_s$, we obtain
	%% ECONOMETRICA HERE
	\begin{align}  \label{eq:payment}
		\E[p^V(v)] &= \sum_{s=1}^k P_s(v) \sum_{\ell=s}^k (\alpha_\ell-\alpha_{\ell+1}) \int_0^v I_{\ell,s}(v,t)\, b^V(t)\;dt \notag \\
		&= \int_0^v \sum_{s=1}^k \alpha_s \bigg[\sum_{\ell=1}^s P_\ell(v)\cdot I_{s,\ell}(v,t) - 
		\sum_{\ell=1}^{s-1} P_\ell(v)\cdot I_{s-1,\ell}(v,t)\bigg] b^V(t)\;dt. 
	\end{align}
	%%
	% \begin{multline}  \label{eq:payment}
	% 	\begin{aligned}[t]
	% 	\E[p^V(v)] &= \sum_{s=1}^k P_s(v) \sum_{\ell=s}^k (\alpha_\ell-\alpha_{\ell+1}) \int_0^v I_{\ell,s}(v,t)\, b^V(t)\;dt \\
	% 	&= \int_0^v \sum_{s=1}^k \alpha_s \bigg[\sum_{\ell=1}^s P_\ell(v)\cdot I_{s,\ell}(v,t) - {} %\notag \\
	% 	\end{aligned} \\
	% 	\sum_{\ell=1}^{s-1} P_\ell(v)\cdot I_{s-1,\ell}(v,t)\bigg] b^V(t)\;dt. %\notag
	% \end{multline}
	%%
	Note that the roles of~$s$ and~$\ell$ have been reversed, such that $s\geq\ell$ henceforth.
	We now recall that
	\begin{align*}
		P_\ell(v) &= \binom{n-1}{\ell-1}(1-F(v))^{\ell-1}F(v)^{n-\ell}
	\end{align*}
	and consider each of the two sums inside the parentheses in turn. 
	
	Denoting
	\begin{align*}
		J_{\ell,s} = \binom{n-1}{\ell-1} \binom{n-\ell-1}{n-s-1} (n-\ell),
	\end{align*}
	we have that
	%% ECONOMETRICA HERE
	\begin{align*}
		\sum_{\ell=1}^s P_\ell(v)\cdot I_{s,\ell}(v,t) &=
		%&= \sum_{\ell=1}^s \binom{n-1}{\ell-1} (1-F(v))^{\ell-1} (n-\ell) f(t) \binom{n-\ell-1}{n-s-1} F(t)^{n-s-1} (F(v)-F(t))^{s-\ell} \\
		\sum_{\ell=1}^s J_{\ell,s}\cdot (1-F(v))^{\ell-1} f(t) F(t)^{n-s-1} (F(v)-F(t))^{s-\ell} \\
		&= \sum_{\clapstack{1\leq\ell\leq s\\0\leq x\leq\ell-1\\0\leq y\leq s-\ell}} J_{\ell,s}
		 \binom{\ell-1}{x} \binom{s-\ell}{y} (-1)^{\ell+y-x-1} f(t) F(v)^{s-x-y-1} F(t)^{n+y-s-1},
	\end{align*}
	%%
	% \begin{align*}
	% 	&\sum_{\ell=1}^s P_\ell(v)\cdot I_{s,\ell}(v,t) = \\
	% 	%&= \sum_{\ell=1}^s \binom{n-1}{\ell-1} (1-F(v))^{\ell-1} (n-\ell) f(t) \binom{n-\ell-1}{n-s-1} F(t)^{n-s-1} (F(v)-F(t))^{s-\ell} \\
	% 	&\sum_{\ell=1}^s J_{\ell,s}\cdot (1-F(v))^{\ell-1} f(t) F(t)^{n-s-1} (F(v)-F(t))^{s-\ell} = \\
	% 	&\sum_{\clapstack{1\leq\ell\leq s\\0\leq x\leq\ell-1\\0\leq y\leq s-\ell}} J_{\ell,s}
	% 	 \binom{\ell-1}{x} \binom{s-\ell}{y} (-1)^{\ell+y-x-1} f(t) F(v)^{s-x-y-1} F(t)^{n+y-s-1},
	% \end{align*}
	%%
	where the second equality holds because by the binomial theorem
	\begin{align*}
		(1-F(v))^{\ell-1} &= \sum_{x=0}^{\ell-1} \binom{\ell-1}{x} (-F(v))^{\ell-x-1}
		\quad\text{and} \\
		(F(v)-F(t))^{s-\ell} &= \sum_{y=0}^{s-\ell} \binom{s-\ell}{y} F(v)^{s-\ell-y} (-F(t))^y.
	\end{align*}
	We claim that the terms with $x+y<s-1$ cancel out, \ie that
	%% ECONOMETRICA HERE
	\begin{align*}
		\sum_{\clapstack{1\leq\ell\leq s\\0\leq x\leq\ell-1\\0\leq y\leq s-\ell\\x+y<s-1}} J_{\ell,s}
		  & \binom{\ell-1}{x} \binom{s-\ell}{y} (-1)^{\ell+y-x-1} f(t) F(v)^{s-x-y-1} F(t)^{n+y-s-1} \\[-4ex]
		&= \sum_{\clapstack{0\leq z\leq s-2\\0\leq y\leq z\\z-y+1\leq\ell\leq s-y}} J_{\ell,s} \binom{\ell-1}{z-y} \binom{s-\ell}{y} (-1)^{\ell+2y-z-1} F(v)^{s-z-1} f(t) F(t)^{n+y-s-1} = 0.
	\end{align*}
	% \begin{align*}
	% 	&\sum_{\clapstack{1\leq\ell\leq s\\0\leq x\leq\ell-1\\0\leq y\leq s-\ell\\x+y<s-1}} J_{\ell,s} \binom{\ell-1}{x} \binom{s-\ell}{y} (-1)^{\ell+y-x-1} \\[-5ex]
	% 	 && \makebox[0pt][r]{$\displaystyle f(t) F(v)^{s-x-y-1} F(t)^{n+y-s-1}=\phantom{0.}$} \\[3ex]
	% 	&\sum_{\clapstack{0\leq z\leq s-2\\0\leq y\leq z\\z-y+1\leq\ell\leq s-y}} J_{\ell,s} \binom{\ell-1}{z-y} \binom{s-\ell}{y} (-1)^{\ell+2y-z-1} \\[-4ex]
	% 	 && \makebox[0pt][r]{$\displaystyle F(v)^{s-z-1} f(t) F(t)^{n+y-s-1}=0$.}
	% \end{align*}
	Indeed, the first equality follows by setting $z=x+y$ and observing that in both sums $\ell$ takes exactly the values between $x+1=z-y+1$ and $s-y$. The second equality holds because for any $z$ and $y$ with $0\leq z\leq s-2$ and~$0\leq y\leq z$,
	%% HERE: aligned doesn't break, so I put the number on the last line of the equation (again doing this below)
	%% ECONOMETRICA HERE
	\begingroup\allowdisplaybreaks
	\begin{align*}
		\sum_{\mathclap{\ell=z-y+1}}^{s-y} J_{\ell,s} & \binom{\ell-1}{z-y} \binom{s-\ell}{y} (-1)^{\ell+2y-z-1} \\
		&= \sum_{\mathllap{\ell=z-y+1}}^{s-y} \binom{n-1}{\ell-1} \binom{n-\ell-1}{n-s-1} (n-\ell) \binom{\ell-1}{z-y} \binom{s-\ell}{y} (-1)^{\ell+2y-z-1} \\
		%&= \sum_{\mathclap{\ell=z-y+1}}^{s-y} \frac{(n-1)!}{(n-\ell)!(\ell-1)!}\; \frac{(n-\ell-1)!}{(s-\ell)!(n-s-1)!}\; (n-\ell)\; \frac{(\ell-1)!}{(\ell-z+y-1)!(z-y)!}\; \frac{(s-\ell)!}{(s-\ell-y)!y!}\; (-1)^{\ell+2y-z-1} \\
		&= \frac{(n-1)!}{(n-s-1)!(z-y)!y!} \sum_{\ell=z-y+1}^{s-y} \frac{(-1)^{\ell+2y-z-1}}{(\ell-z+y-1)!(s-\ell-y)!} \\
		&= \frac{(n-1)!}{(n-s-1)!(z-y)!y!} \sum_{j=0}^{m} \frac{(-1)^{j+y}}{j!(m-j)!} \\
		&= \frac{(n-1)!(-1)^y}{(n-s-1)!(z-y)!y!m!} \sum_{j=0}^{m} (-1)^{j}\binom{m}{j} \\
		&= \frac{(n-1)!(-1)^y}{(n-s-1)!(z-y)!y!m!} (1+(-1))^m = 0, \tagthis\label{eq:binom}
	\end{align*}\endgroup
	% \begingroup\allowdisplaybreaks
	% \begin{align*}
	% 	&\sum_{\mathclap{\ell=z-y+1}}^{s-y} J_{\ell,s} \binom{\ell-1}{z-y} \binom{s-\ell}{y} (-1)^{\ell+2y-z-1} = \tagthis\label{eq:binom} \\
	% 	&\sum_{\mathllap{\ell=z-y+1}}^{s-y} \binom{n-1}{\ell-1} \binom{n-\ell-1}{n-s-1} (n-\ell) \binom{\ell-1}{z-y} \binom{s-\ell}{y} (-1)^{\ell+2y-z-1} = \\
	% 	%&= \sum_{\mathclap{\ell=z-y+1}}^{s-y} \frac{(n-1)!}{(n-\ell)!(\ell-1)!}\; \frac{(n-\ell-1)!}{(s-\ell)!(n-s-1)!}\; (n-\ell)\; \frac{(\ell-1)!}{(\ell-z+y-1)!(z-y)!}\; \frac{(s-\ell)!}{(s-\ell-y)!y!}\; (-1)^{\ell+2y-z-1} \\
	% 	&\frac{(n-1)!}{(n-s-1)!(z-y)!y!} \sum_{\ell=z-y+1}^{s-y} \frac{(-1)^{\ell+2y-z-1}}{(\ell-z+y-1)!(s-\ell-y)!} = \\
	% 	&\frac{(n-1)!}{(n-s-1)!(z-y)!y!} \sum_{j=0}^{m} \frac{(-1)^{j+y}}{j!(m-j)!} = \\
	% 	&\frac{(n-1)!(-1)^y}{(n-s-1)!(z-y)!y!m!} \sum_{j=0}^{m} (-1)^{j}\binom{m}{j} = \\
	% 	&\frac{(n-1)!(-1)^y}{(n-s-1)!(z-y)!y!m!} (1+(-1))^m = 0,
	% \end{align*}\endgroup
	where the third equality follows by setting $j=\ell-z+y-1$ and $m=s-z-1$ and the fifth equality holds by the binomial theorem.
	Thus, actually,
	%% ECONOMETRICA HERE
	\begingroup\allowdisplaybreaks
	\begin{align*}
		\sum_{\ell=1}^s P_\ell(v)\cdot I_{s,\ell}(v,t)
		&= \sum_{\clapstack{1\leq\ell\leq s\\0\leq x\leq\ell-1\\0\leq y\leq s-\ell\\x+y=s-1}} J_{\ell,s}
		  \binom{\ell-1}{x} \binom{s-\ell}{y} (-1)^{\ell+y-x-1} f(t) F(v)^{s-x-y-1} F(t)^{n+y-s-1} \\
		&= \sum_{\ell=1}^{s} J_{\ell,s} \binom{\ell-1}{\ell-1} \binom{s-\ell}{s-\ell} (-1)^{s-\ell} f(t) F(v)^0 F(t)^{n-\ell-1} \\
		&= \sum_{\ell=1}^{s}J_{\ell,s}\cdot (-1)^{s-\ell} F(t)^{n-\ell-1} f(t) \\
		&= \sum_{\ell=1}^{s}\binom{n-1}{s-1}(n-s)\binom{s-1}{\ell-1}(-1)^{s-\ell}F(t)^{n-\ell-1}f(t) \\
		&= \sum_{\ell=0}^{s-1}\binom{n-1}{s-1}(n-s)\binom{s-1}{\ell}(-1)^{s-\ell-1}F(t)^{n-\ell-2}f(t) \\
		% &= \binom{n-1}{s-1} \left[\sum_{\ell=0}^{s-1}\binom{s-1}{\ell}(-1)^{s-\ell-1}F(t)^{s-\ell-1}\right]F(t)^{n-s-1}f(t)(n-s) \\
		&= \binom{n-1}{s-1} (1-F(t))^{s-1}(n-s)F(t)^{n-s-1}f(t),  \tagthis\label{eq:aequal}
	\end{align*}\endgroup
	% \begingroup\allowdisplaybreaks
	% \begin{align*}
	% 	&\sum_{\ell=1}^s P_\ell(v)\cdot I_{s,\ell}(v,t) \tagthis\label{eq:aequal} = \\
	% 	&\sum_{\clapstack{1\leq\ell\leq s\\0\leq x\leq\ell-1\\0\leq y\leq s-\ell\\x+y=s-1}} J_{\ell,s}
	% 	  \binom{\ell-1}{x} \binom{s-\ell}{y} (-1)^{\ell+y-x-1} f(t) F(v)^{s-x-y-1} F(t)^{n+y-s-1} = \\
	% 	&\sum_{\ell=1}^{s} J_{\ell,s} \binom{\ell-1}{\ell-1} \binom{s-\ell}{s-\ell} (-1)^{s-\ell} f(t) F(v)^0 F(t)^{n-\ell-1} = \\
	% 	&\sum_{\ell=1}^{s}J_{\ell,s}\cdot (-1)^{s-\ell} F(t)^{n-\ell-1} f(t) = \\
	% 	&\sum_{\ell=1}^{s}\binom{n-1}{s-1}(n-s)\binom{s-1}{\ell-1}(-1)^{s-\ell}F(t)^{n-\ell-1}f(t) = \\
	% 	&\sum_{\ell=0}^{s-1}\binom{n-1}{s-1}(n-s)\binom{s-1}{\ell}(-1)^{s-\ell-1}F(t)^{n-\ell-2}f(t) = \\
	% 	% &= \binom{n-1}{s-1} \left[\sum_{\ell=0}^{s-1}\binom{s-1}{\ell}(-1)^{s-\ell-1}F(t)^{s-\ell-1}\right]F(t)^{n-s-1}f(t)(n-s) \\
	% 	&\binom{n-1}{s-1} (1-F(t))^{s-1}(n-s)F(t)^{n-s-1}f(t),
	% \end{align*}\endgroup
	where the third equality holds because
	%% ECONOMETRICA HERE
	\begin{align*}
		J_{\ell,s} &= \binom{n-1}{\ell-1} \binom{n-\ell-1}{n-s-1} (n-\ell)
		= \frac{(n-1)!}{(n-\ell)!(l-1)!}\; \frac{(n-\ell-1)!}{(s-\ell)!(n-s-1)!}\; (n-\ell) \\[1ex]
		&= \frac{(n-1)!}{(l-1)!(s-\ell)!(n-s-1)!}
		= \frac{(n-1)!}{(n-s)!(s-1)!}\; \frac{(s-1)!}{(s-\ell)!(\ell-1)!}\; (n-s) \\[1ex]
		&= \binom{n-1}{s-1} \binom{s-1}{\ell-1} (n-s)
	\end{align*}
	%   \begingroup\allowdisplaybreaks
	% \begin{align*}
	% 	J_{\ell,s} &= \binom{n-1}{\ell-1} \binom{n-\ell-1}{n-s-1} (n-\ell) \\
	% 	&= \frac{(n-1)!}{(n-\ell)!(l-1)!}\; \frac{(n-\ell-1)!}{(s-\ell)!(n-s-1)!}\; (n-\ell) \\[1ex]
	% 	&= \frac{(n-1)!}{(l-1)!(s-\ell)!(n-s-1)!} \\
	% 	&= \frac{(n-1)!}{(n-s)!(s-1)!}\; \frac{(s-1)!}{(s-\ell)!(\ell-1)!}\; (n-s) \\[1ex]
	% 	&= \binom{n-1}{s-1} \binom{s-1}{\ell-1} (n-s)
	% \end{align*}\endgroup
	and the fifth equality because by the binomial theorem
	\[
		\sum_{\ell=0}^{s-1}\binom{s-1}{\ell}(-1)^{s-\ell-1}F(t)^{s-\ell-1} = (1-F(t))^{s-1}.
	\]	
	
	Analogously, for the second term in the parentheses of~\eqref{eq:payment},
	%% ECONOMETRICA HERE
	\begin{align*}
		\sum_{\ell=1}^{s-1} P_\ell(v) & {} \cdot {} I_{s-1,\ell}(v,t) \\[-2ex]
		&= \sum_{\ell=1}^{s-1} J_{\ell,s-1}\cdot (1-F(v))^{\ell-1} f(t) F(t)^{n-s} (F(v)-F(t))^{s-\ell-1} \\
		&= \sum_{\clapstack{1\leq\ell\leq s-1\\0\leq x\leq\ell-1\\0\leq y\leq s-\ell-1}} J_{\ell,s-1}\cdot \binom{\ell-1}{x} \binom{s-\ell-1}{y} (-1)^{\ell+y-x-1} f(t) F(v)^{s-x-y-2} F(t)^{n+y-s},
	\end{align*}
	% \begin{align*}
	% 	&\sum_{\ell=1}^{s-1} P_\ell(v)\cdot I_{s-1,\ell}(v,t) = \\
	% 	&\sum_{\ell=1}^{s-1} J_{\ell,s-1}\cdot (1-F(v))^{\ell-1} f(t) F(t)^{n-s} (F(v)-F(t))^{s-\ell-1} = \\
	% 	&\sum_{\clapstack{1\leq\ell\leq s-1\\0\leq x\leq\ell-1\\0\leq y\leq s-\ell-1}} J_{\ell,s-1}\cdot \binom{\ell-1}{x} \binom{s-\ell-1}{y} (-1)^{\ell+y-x-1} \\[-4ex]
	% 	&& \makebox[0pt][r]{$\displaystyle f(t) F(v)^{s-x-y-2} F(t)^{n+y-s}$,}
	% \end{align*}
	where the second equality holds because by the binomial theorem
	\begin{align*}
		(1-F(v))^{\ell-1} &= \sum_{x=0}^{\ell-1} \binom{\ell-1}{x} (-F(v))^{\ell-x-1}
		\quad\text{and} \\
		(F(v)-F(t))^{s-\ell-1} &= \sum_{y=0}^{s-\ell-1} \binom{s-\ell-1}{y} F(v)^{s-\ell-y-1} (-F(t))^y.
	\end{align*}
	We claim that the terms with $x+y<s-2$ cancel out, \ie that
	%% ECONOMETRICA HERE
	\begin{align*}
		\sum_{\clapstack{1\leq\ell\leq s-1\\0\leq x\leq\ell-1\\0\leq y\leq s-\ell-1\\x+y<s-2}} J_{\ell,s-1}
		  & \binom{\ell-1}{x} \binom{s-\ell-1}{y} (-1)^{\ell+y-x-1} f(t) F(v)^{s-x-y-2} F(t)^{n+y-s} \\[-4ex]
			&= \sum_{\clapstack{0\leq z\leq s-3\\0\leq y\leq z\\ z-y+1\leq\ell\leq s-y-1}} J_{\ell,s-1}
			  \binom{\ell-1}{z-y} \binom{s-\ell-1}{y} (-1)^{\ell+2y-z-1} f(t) F(v)^{s-z-2} F(t)^{n+y-s} = 0.
	\end{align*}
	% \begin{align*}
	% 	&\sum_{\clapstack{1\leq\ell\leq s-1\\0\leq x\leq\ell-1\\0\leq y\leq s-\ell-1\\x+y<s-2}} J_{\ell,s-1} \binom{\ell-1}{x} \binom{s-\ell-1}{y} (-1)^{\ell+y-x-1} \\[-5ex]
	% 	&& \makebox[0pt][r]{$\displaystyle f(t) F(v)^{s-x-y-2} F(t)^{n+y-s}=\phantom{0.}$} \\[3ex]
	% 	&\sum_{\clapstack{0\leq z\leq s-3\\0\leq y\leq z\\ z-y+1\leq\ell\leq s-y-1}} J_{\ell,s-1} \binom{\ell-1}{z-y} \binom{s-\ell-1}{y} (-1)^{\ell+2y-z-1} \\[-4ex]
	% 	&& \makebox[0pt][r]{$\displaystyle f(t) F(v)^{s-z-2} F(t)^{n+y-s}=0$.}
	% \end{align*}
	Indeed, the first equality follows by setting $z=x+y$ and observing that in both sums $\ell$ takes exactly the values between $x+1=z-y+1$ and $s-y-1$. The second equality holds because for any $z$ and $y$ with $0\leq z\leq s-3$ and $0\leq y\leq z$,
	%% ECONOMETRICA HERE
	\begin{equation*}
		\sum_{\mathclap{\ell=z-y+1}}^{s-y-1} J_{\ell,s-1} \binom{\ell-1}{z-y} \binom{s-\ell-1}{y} (-1)^{\ell+2y-z-1} 
		= \sum_{\mathclap{\ell=z-y+1}}^{r-y} J_{\ell,r} \binom{\ell-1}{z-y} \binom{r-\ell}{y} (-1)^{\ell+2y-z-1} = 0,
	\end{equation*}
	% \begin{multline*}
	% 	\sum_{\mathclap{\ell=z-y+1}}^{s-y-1} J_{\ell,s-1} \binom{\ell-1}{z-y} \binom{s-\ell-1}{y} (-1)^{\ell+2y-z-1} \\
	% 	= \sum_{\mathclap{\ell=z-y+1}}^{r-y} J_{\ell,r} \binom{\ell-1}{z-y} \binom{r-\ell}{y} (-1)^{\ell+2y-z-1} = 0,
	% \end{multline*}
	where the first equality follows by setting $r=s-1$ and the second equality holds by~\eqref{eq:binom}.
	Thus, actually,
	%% ECONOMETRICA HERE
	\begingroup\allowdisplaybreaks
	\begin{align*}
		\sum_{\ell=1}^{s-1} P_\ell(v)\cdot I_{s-1,\ell}(v,t) 
		&= \sum_{\clapstack{1\leq\ell\leq s-1\\0\leq x\leq\ell-1\\0\leq y\leq s-\ell-1\\x+y=s-2}} J_{\ell,s-1}
		  \binom{\ell-1}{x} \binom{s-\ell-1}{y} (-1)^{\ell+y-x-1} f(t) F(v)^{s-x-y-2} F(t)^{n+y-s} \\
		&= \sum_{\ell=1}^{s-1} J_{\ell,s-1} \binom{\ell-1}{\ell-1} \binom{s-\ell-1}{s-\ell-1} (-1)^{s-\ell-1} f(t) F(v)^0 F(t)^{n-\ell-1} \\
		&= \sum_{\ell=1}^{s-1}J_{\ell,s-1}\cdot (-1)^{s-\ell-1} F(t)^{n-\ell-1} f(t) \\
		&= \sum_{\ell=1}^{s-1}\binom{n-1}{s-1}(s-1)\binom{s-2}{\ell-1}(-1)^{s-\ell-1}F(t)^{n-\ell-1}f(t) \\
		&= \sum_{\ell=0}^{s-2}\binom{n-1}{s-1}(s-1)\binom{s-2}{\ell}(-1)^{s-\ell-2}F(t)^{n-\ell-2}f(t) \\
		% &= \binom{n-1}{s-1} \left[\sum_{\ell=0}^{s-1}\binom{s-1}{\ell}(-1)^{s-\ell-1}F(t)^{s-\ell-1}\right]F(t)^{n-s-1}f(t)(n-s) \\
		&= \binom{n-1}{s-1} (1-F(t))^{s-2}(s-1)F(t)^{n-s}f(t),  \tagthis\label{eq:bequal}
	\end{align*}\endgroup
	% \begingroup\allowdisplaybreaks
	% \begin{align*}
	% 	&\sum_{\ell=1}^{s-1} P_\ell(v)\cdot I_{s-1,\ell}(v,t) = \tagthis\label{eq:bequal} \\
	% 	&\sum_{\clapstack{1\leq\ell\leq s-1\\0\leq x\leq\ell-1\\0\leq y\leq s-\ell-1\\x+y=s-2}} J_{\ell,s-1}
	% 	  \binom{\ell-1}{x} \binom{s-\ell-1}{y} (-1)^{\ell+y-x-1} \\[-6ex]
	% 	&& \makebox[0pt][r]{$\displaystyle f(t) F(v)^{s-x-y-2} F(t)^{n+y-s} =$} \\[2ex]
	% 	&\sum_{\ell=1}^{s-1} J_{\ell,s-1} \binom{\ell-1}{\ell-1} \binom{s-\ell-1}{s-\ell-1} (-1)^{s-\ell-1} f(t) F(v)^0 F(t)^{n-\ell-1} = \\
	% 	&\sum_{\ell=1}^{s-1}J_{\ell,s-1}\cdot (-1)^{s-\ell-1} F(t)^{n-\ell-1} f(t) = \\
	% 	&\sum_{\ell=1}^{s-1}\binom{n-1}{s-1}(s-1)\binom{s-2}{\ell-1}(-1)^{s-\ell-1}F(t)^{n-\ell-1}f(t) = \\
	% 	&\sum_{\ell=0}^{s-2}\binom{n-1}{s-1}(s-1)\binom{s-2}{\ell}(-1)^{s-\ell-2}F(t)^{n-\ell-2}f(t) = \\
	% 	% &= \binom{n-1}{s-1} \left[\sum_{\ell=0}^{s-1}\binom{s-1}{\ell}(-1)^{s-\ell-1}F(t)^{s-\ell-1}\right]F(t)^{n-s-1}f(t)(n-s) \\
	% 	&\binom{n-1}{s-1} (1-F(t))^{s-2}(s-1)F(t)^{n-s}f(t),
	% \end{align*}\endgroup
	where the third equality holds because
	%% ECONOMETRICA HERE
	\begin{align*}
		J_{\ell,s-1} &= \binom{n-1}{\ell-1} \binom{n-\ell-1}{n-s} (n-\ell)
		= \frac{(n-1)!}{(n-\ell)!(l-1)!}\; \frac{(n-\ell-1)!}{(s-\ell-1)!(n-s)!}\; (n-\ell) \\[1ex]
		&= \frac{(n-1)!}{(l-1)!(s-\ell-1)!(n-s)!}
		= \frac{(n-1)!}{(n-s)!(s-1)!}\; \frac{(s-2)!}{(s-\ell-1)!(\ell-1)!}\; (s-1) \\[1ex]
		&= \binom{n-1}{s-1} \binom{s-2}{\ell-1} (s-1)
	\end{align*}
	%   \begingroup\allowdisplaybreaks
	% \begin{align*}
	% 	J_{\ell,s-1} &= \binom{n-1}{\ell-1} \binom{n-\ell-1}{n-s} (n-\ell) \\
	% 	&= \frac{(n-1)!}{(n-\ell)!(l-1)!}\; \frac{(n-\ell-1)!}{(s-\ell-1)!(n-s)!}\; (n-\ell) \\
	% 	&= \frac{(n-1)!}{(l-1)!(s-\ell-1)!(n-s)!} \\
	% 	&= \frac{(n-1)!}{(n-s)!(s-1)!}\; \frac{(s-2)!}{(s-\ell-1)!(\ell-1)!}\; (s-1) \\
	% 	&= \binom{n-1}{s-1} \binom{s-2}{\ell-1} (s-1)
	% \end{align*}\endgroup
	and the fifth equality because by the binomial theorem
	\[
		\sum_{\ell=0}^{s-2}\binom{s-2}{\ell}(-1)^{s-\ell-2}F(t)^{s-\ell-2} = (1-F(t))^{s-2}.
	\]	
	
	By substituting~\eqref{eq:aequal} and~\eqref{eq:bequal} into~\eqref{eq:payment}, we conclude that
	%% ECONOMETRICA HERE
	\begin{align}  \label{eq:vcg}
		\E\left[p^V(v)\right] &=
		\int_0^v \sum_{s=1}^k \alpha_s \biggl(
		\binom{n-1}{s-1} (1-F(t))^{s-1}(n-s)F(t)^{n-s-1}f(t) - {} \notag \\
		&\hspace*{4cm} \binom{n-1}{s-1} (1-F(t))^{s-2}(s-1)F(t)^{n-s}f(t)
		\biggr) b^V(t)\;dt \notag \\
		&= \sum_{s=1}^{k} \alpha_s \int_0^v \frac{d P_s(t)}{dt} b^V(t) \;dt.
	\end{align}
	% \begin{equation}  \label{eq:vcg}
	% \begin{aligned}
	% 	\E\left[p^V(v)\right] &=
	% 	\int_0^v \sum_{s=1}^k \alpha_s \biggl[
	% 	\binom{n-1}{s-1} (1-F(t))^{s-1}(n-s)F(t)^{n-s-1}f(t) - {} \\
	% 	& && \makebox[0pt][r]{$\displaystyle\binom{n-1}{s-1} (1-F(t))^{s-2}(s-1)F(t)^{n-s}f(t) \biggr] b^V(t)\;dt$} \\
	% 	&= \sum_{s=1}^{k} \alpha_s \int_0^v \frac{d P_s(t)}{dt} b^V(t) \;dt.
	% \end{aligned}
	% \end{equation}
	
	The expected payment in an efficient equilibrium is again given by \lemref{lem:myerson}. We can thus equate~\eqref{eq:vcg} with~\eqref{eq:eff}, take derivatives on both sides, and solve for $b^{V}(v)$ to obtain
\[
	b^{V}(v) 
	= \frac{\sum_{s=1}^{k} \beta_s \frac{d P_s(v)}{dv} v}{\sum_{s=1}^{k} \alpha_s \frac{d P_s(v)}{dv}}.
\]
Bidding below $b^{V}(0)=0$ is impossible and bidding above $b^{V}(\bar{v})$ is dominated, so the claim follows from \lemref{lem:myerson}.
% \end{proof}

\section{Proof of~\lemref{lem:aux1}}

% \begin{proof}
By \corref{cor:ab}, $b^F(v)=A(v)/B(v)$ with $A(v)=\sum_{s=1}^{k}\beta_s\int_{0}^{v}\!\frac{d P_s(t)}{dt}t\;dt$ and $B(v)=\sum_{s=1}^{k}\alpha_s P_s(v)$. Writing the derivative as a limit of difference quotients, applying l'Hospital's rule to each of the two resulting terms, and respectively substituting~$x$ for $2\delta$ and $\delta$ we obtain
\begingroup\allowdisplaybreaks
\begin{align*}
	\left.\frac{db^F(v)}{dv}\right|_{\mathrlap{v=0}}\;\; &= \lim_{\delta \rightarrow 0} \left(\frac{A(2\delta)/B(2\delta) - A(\delta)/B(\delta)}{\delta}\right) \\
	&= \lim_{\delta\rightarrow 0} \frac{A(2\delta)}{\delta\cdot B(2\delta)} - \lim_{\delta\rightarrow 0} \frac{A(\delta)}{\delta\cdot B(\delta)} \\
	&= \lim_{\delta\rightarrow 0} \frac{A'(2\delta)\cdot 2}{\delta\cdot B'(2\delta)\cdot 2 + B(2\delta)} - 
	\lim_{\delta\rightarrow 0} \frac{A'(\delta)}{\delta\cdot B'(\delta) + B(\delta)} \\
	&= \lim_{x \rightarrow 0} \frac{\left(\sum_{s=1}^{k} \beta_s \frac{d P_s(x)}{dx} \cdot x \right) \cdot 2}{\left(\sum_{s=1}^{k} \alpha_s \frac{dP_s(x)}{dx} \cdot x\right) + \left(\sum_{s=1}^{k} \alpha_s P_s(x) \right)} - {} \\
	&& \makebox[0pt][r]{$\displaystyle\lim_{x \rightarrow 0} \frac{\left(\sum_{s=1}^{k} \beta_s \frac{d P_s(x)}{dx} \cdot x \right)}{\left(\sum_{s=1}^{k} \alpha_s \frac{dP_s(x)}{dx} \cdot x\right) + \left(\sum_{s=1}^{k} \alpha_s P_s(x) \right)}$.} 
\end{align*}
\endgroup
To analyze these limits we extend by $1=(F(x)^{n-k-1} \cdot x)^{-1}/(F(x)^{n-k-1}\cdot x)^{-1}$, factor $(F(x)^{n-k-1}\cdot x)^{-1}$ into the numerator and denominator, and consider each of the terms in the numerator and denominator in turn.

Using~$\gamma$ as a placeholder for~$\alpha$ or~$\beta$ and replacing $P_s(x)$ by its definition,
%% ECONOMETRICA HERE
\begin{align*}
\frac{\sum_{s=1}^{k} \gamma_s \cdot \frac{d P_s(x)}{dx} \cdot x}{F^{n-k-1}(x) \cdot x}
&= \sum_{s = 1}^{k} \gamma_s \bigg[ \binom{n-1}{s-1} (n-s) F^{k-s}(x) (1-F(x))^{s-1} f(x)\\
&\qquad\qquad\qquad- \binom{n-1}{s-1} (s-1) F^{k-s+1}(x) (1-F(x))^{s-2} f(x)\bigg]\\
&= \sum_{s=1}^{k} \sum_{\ell=0}^{s-1} \gamma_s  (-1)^\ell \binom{n-1}{s-1} (n-s) \binom{s-1}{\ell} F(x)^{k-s+\ell} f(x)\\
&\qquad\qquad\qquad- \sum_{s=1}^{k} \sum_{\ell=0}^{s-2} \gamma_s  (-1)^\ell \binom{n-1}{s-1} (s-1) \binom{s-2}{\ell} F(x)^{k-s+\ell+1} f(v) .
\end{align*}
% \begin{align*}
% &\frac{\sum_{s=1}^{k} \gamma_s \cdot \frac{d P_s(x)}{dx} \cdot x}{F^{n-k-1}(x) \cdot x} = \\
% &\sum_{s = 1}^{k} \gamma_s \bigg[ \binom{n-1}{s-1} (n-s) F^{k-s}(x) (1-F(x))^{s-1} f(x) - {} \\
% && \makebox[0pt][r]{$\displaystyle\binom{n-1}{s-1} (s-1) F^{k-s+1}(x) (1-F(x))^{s-2} f(x)\bigg]=$} \\
% & \sum_{s=1}^{k} \sum_{\ell=0}^{s-1} \gamma_s  (-1)^\ell \binom{n-1}{s-1} (n-s) \binom{s-1}{\ell} F(x)^{k-s+\ell} f(x) - {} \\
% && \makebox[0pt][r]{$\displaystyle\sum_{s=1}^{k} \sum_{\ell=0}^{s-2} \gamma_s  (-1)^\ell \binom{n-1}{s-1} (s-1) \binom{s-2}{\ell} F(x)^{k-s+\ell+1} f(v)$.}
% \end{align*}
Similarly,
%% ECONOMETRICA HERE
\begin{align*}
\frac{\sum_{s=1}^{k} \alpha_s P_s(x)}{F^{n-k-1}(x)\cdot x}
&= \frac{1}{x} \cdot \sum_{s=1}^{k}\alpha_s\binom{n-1}{s-1}F(x)^{k-s+1}(1-F(x))^{s-1}\\
&=\frac{F(x)}{x} \cdot \sum_{s=1}^{k} \sum_{\ell=0}^{s-1} \alpha_s (-1)^\ell \binom{n-1}{s-1} \binom{s-1}{\ell} F(x)^{k-s+\ell}.
\end{align*}
% \begin{multline*}
% 	\frac{\sum_{s=1}^{k} \alpha_s P_s(x)}{F^{n-k-1}(x)\cdot x} =
% 	\frac{1}{x} \cdot \sum_{s=1}^{k}\alpha_s\binom{n-1}{s-1}F(x)^{k-s+1}(1-F(x))^{s-1} = \\
% 	\frac{F(x)}{x} \cdot \sum_{s=1}^{k} \sum_{\ell=0}^{s-1} \alpha_s (-1)^\ell \binom{n-1}{s-1} \binom{s-1}{\ell} F(x)^{k-s+\ell}.
% \end{multline*}
Since $\lim_{v\rightarrow 0}F(x)^d=0$ for $d>0$, the only terms that survive in the limit are those where the exponent of $F(x)$ is zero. For $s\in\{1,\dots,k\}$ and $\ell\in\{0,\dots,s-1\}$, $k-s+\ell=0$ only if $s=k$ and $\ell=0$. For $s\in\{1,\dots,k\}$ and $\ell\in\{0,\dots,s-2\}$, $k-s+\ell-1\neq 0$. Using that $\lim_{x\rightarrow 0}F(x)/x=f(0)$, we thus obtain
%%
%% ECONOMETRICA HERE
\begin{align*}
	\left.\frac{db^F(v)}{dv}\right|_{\mathrlap{v=0}}\;\; &=
	\frac{\beta_k \binom{n-1}{k-1} (n-k) f(0) \cdot 2}{\alpha_k \binom{n-1}{k-1} (n-k) f(0) + \alpha_k \binom{n-1}{k-1}f(0)}
	- \frac{\beta_k \binom{n-1}{k-1} (n-k) f(0)}{\alpha_k \binom{n-1}{k-1} (n-k) f(0)+ \alpha_k \binom{n-1}{k-1}f(0)} \\
	&= \frac{2 (n-k) \beta_k}{(n-k+1) \alpha_k} - \frac{(n-k) \beta_k}{(n-k+1) \alpha_k}
	= \frac{n-k}{n-k+1} \cdot \frac{\beta_k}{\alpha_k}
\end{align*}
% \begin{align*}
% 	\left.\frac{db^F(v)}{dv}\right|_{\mathrlap{v=0}}\;\; &=
% 	\frac{\beta_k \binom{n-1}{k-1} (n-k) f(0) \cdot 2}{\alpha_k \binom{n-1}{k-1} (n-k) f(0) + \alpha_k \binom{n-1}{k-1}f(0)} - {} \\
% 	&& \makebox[0pt][r]{$\displaystyle\frac{\beta_k \binom{n-1}{k-1} (n-k) f(0)}{\alpha_k \binom{n-1}{k-1} (n-k) f(0)+ \alpha_k \binom{n-1}{k-1}f(0)}$} \\
% 	&= \frac{2 (n-k) \beta_k}{(n-k+1) \alpha_k} - \frac{(n-k) \beta_k}{(n-k+1) \alpha_k}
% 	= \frac{n-k}{n-k+1} \cdot \frac{\beta_k}{\alpha_k}
% \end{align*}
as claimed.
% \end{proof}

\section{Proof of~\lemref{lem:aux2}}

% \begin{proof}
By \corref{cor:ab}, $b^F(v)=A(v)/B(v)$ with $A(v)=\sum_{s=1}^{k}\beta_s\int_{0}^{v}\!\frac{d P_s(t)}{dt}t\;dt$ and $B(v)=\sum_{s=1}^{k}\alpha_s P_s(v)$. For $n=k$, by \lemref{lem:aux1},
\[
 \left.\frac{db^F(v)}{dv}\right|_{\mathrlap{v=0}} \;\;=  \left.\frac{A'(v) B(v) - A(v) B'(v)}{B(v)^2}\right|_{\mathrlap{v=0}} \;\; = 0.
 \]
Since
\[
	B(0) = \sum_{s=1}^{k} \alpha_s P_s(0) \geq \alpha_k P_k(0) = \alpha_k (1-F(0))^{n-1} = \alpha_k > 0,
\]
this implies that
\[
	\big(A'(v) B(v) - A(v) B'(v)\big)\Bigr|_{\mathrlap{v=0}} \;\;  = 0.
\]
Thus
%% ECONOMETRICA HERE
\begin{align*}
	\left.\frac{d^2b^F(v)}{dv^2}\right|_{\mathrlap{v=0}} \;\;
	&= \left.\frac{(A''(v)B(v)-A(v)B''(v))B(v)^2-(A'(v) B(v)}{B(v)^4}\right|_{\mathrlap{v=0}} - 
	\left.\frac{A(v) B'(v))2B(v)B'(v)}{B(v)^4}\right|_{\mathrlap{v=0}} \\[2ex]  % +A'(v)B'(v)-A'(v)B'(v)
	&= \left.\frac{A''(v)B(v)-A(v)B''(v)}{B(v)^2}\right|_{\mathrlap{v=0}} \;\;.
\end{align*}
%%
% \begin{align*}
% 	\left.\frac{d^2b^F(v)}{dv^2}\right|_{\mathrlap{v=0}} \;\;
% 	&= \left.\frac{(A''(v)B(v)-A(v)B''(v))B(v)^2-(A'(v) B(v)}{B(v)^4}\right|_{\mathrlap{v=0}} - {} \\[1ex]
% 	&\qquad\qquad\qquad \left.\frac{A(v) B'(v))2B(v)B'(v)}{B(v)^4}\right|_{\mathrlap{v=0}} \\[2ex]  % +A'(v)B'(v)-A'(v)B'(v)
% 	&= \left.\frac{A''(v)B(v)-A(v)B''(v)}{B(v)^2}\right|_{\mathrlap{v=0}} \;\;.
% \end{align*}
%%
% In this case
% \begin{align*}
% 	\left.\frac{d^2b^F(v)}{dv^2}\right|_{\mathrlap{v=0}} \;\;\;\;
% 	&= \left.\frac{[A'(v)B(v) - B'(v)A(v)]' (B(v))^2 + [A'(v)B(v) - B'(v)A(v)] 2 B'(v)}{(B(v))^4}\right|_{\mathrlap{v=0}}\\
% 	&= \left.\frac{[A'(v)B(v) - B'(v)A(v)]'}{(B(v))^2}\right|_{\mathrlap{v=0}}\\
% 	&= \left.\frac{A''(v)B(v) + A'(v)B'(v)  - B''(v)A(v) - B'(v)A'(v)}{(B(v))^2}\right|_{\mathrlap{v=0}}\\
% 	&= \left.\frac{A''(v)B(v)  - B''(v)A(v)}{(B(v))^2}\right|_{\mathrlap{v=0}}.
% \end{align*}
We have already seen that $B(0)>0$. Moreover, $A(0)=0$ by the definition of~$A$ and $B''(0)<\infty$ by assumption on the value distributions, so it suffices to show that
\[
	A''(0) = \left.\left( \sum_{s=1}^{k} \beta_s \frac{d^2P_s(v)}{dv^2} \cdot v + \sum_{s=1}^{k} \beta_s \frac{dP_s(v)}{dv} \right)\right|_{\mathrlap{v=0}} \;\; \geq 0. 
\]
Also by assumptions on the value distributions, $d^2P_s(v)/dv^2<\infty$ for all $v$, so the first term vanishes. The second term is
%% ECONOMETRICA HERE
\begin{align*}
	%\lim_{v\rightarrow 0} B'(v)
	% \lim_{v \rightarrow 0}
	\left.\sum_{s=1}^{k} \beta_s \frac{dP_s(v)}{dv}\right|_{\mathrlap{v=0}} \;\;
	&= \sum_{s=1}^{k} \beta_s \biggl(\binom{n-1}{s-1}(n-s)F(v)^{n-s-1} (1-F(v))^{s-1} f(v) \\
	&\qquad\qquad\qquad\qquad- \binom{n-1}{s-1}(s-1)F(v)^{n-s}(1-F(v))^{s-2} f(v) \biggr)\biggr|_{\mathrlap{v=0}} \\
	%%
	%Now the only terms that survive the limit are those where the exponent of
	%$F(v)$ is zero. This happens exactly twice for $s = k-1$ in the first term and
	%for $s = k$ in the second term.
	%%
	&= \beta_{k-1} (k-1) f(0) - \beta_k (k-1) f(0) \ge 0,
\end{align*}
% \begin{align*}
% 	%\lim_{v\rightarrow 0} B'(v)
% 	% \lim_{v \rightarrow 0}
% 	&\left.\sum_{s=1}^{k} \beta_s \frac{dP_s(v)}{dv}\right|_{\mathrlap{v=0}} = \\
% 	& \sum_{s=1}^{k} \beta_s \biggl[\binom{n-1}{s-1}(n-s)F(v)^{n-s-1} (1-F(v))^{s-1} f(v) - {} \\
% 	&& \makebox[0pt][r]{$\displaystyle\binom{n-1}{s-1}(s-1)F(v)^{n-s}(1-F(v))^{s-2} f(v) \biggr]\biggr|_{\mathrlap{v=0}}=$} \\
% 	%%
% 	%Now the only terms that survive the limit are those where the exponent of
% 	%$F(v)$ is zero. This happens exactly twice for $s = k-1$ in the first term and
% 	%for $s = k$ in the second term.
% 	%%
% 	& \beta_{k-1} (k-1) f(0) - \beta_k (k-1) f(0) \ge 0,
% \end{align*}
where we have used the definition of $P_s(v)$ and the fact that the only non-zero terms are those where the exponent of $F(v)$ is zero. Since $\beta_{k-1} \ge b_{k}$ and $f(0)>0$, this shows the claim.
% \end{proof}

% \bibliographystyle{plainnat}
% \bibliography{abb,simplified}

\begin{thebibliography}{44}
\providecommand{\natexlab}[1]{#1}
\providecommand{\url}[1]{\texttt{#1}}
\expandafter\ifx\csname urlstyle\endcsname\relax
  \providecommand{\doi}[1]{doi: #1}\else
  \providecommand{\doi}{doi: \begingroup \urlstyle{rm}\Url}\fi

\bibitem[Abrams et~al.(2007)Abrams, Ghosh, and Vee]{AGV07a}
Z.~Abrams, A.~Ghosh, and E.~Vee.
\newblock Cost of conciseness in sponsored search auctions.
\newblock In \emph{Proceedings of the 3rd International Workshop on Internet
  and Network Economics}, pages 326--334, 2007.

\bibitem[{Alphabet Inc.}(2018)]{abc}
{Alphabet Inc.}
\newblock Annual report 2017.
\newblock \url{https://abc.xyz/investor/pdf/20171231_alphabet_10K.pdf}, 2018.
\newblock Accessed June 1, 2018.

\bibitem[Ausubel and Milgrom(2006)]{AuMi06a}
L.~M. Ausubel and P.~Milgrom.
\newblock The lovely but lonely {Vickrey} auction.
\newblock In P.~Cramton, Y.~Shoham, and P.~Steinberg, editors,
  \emph{Combinatorial Auctions}, chapter~1, pages 17--40. MIT Press, 2006.

\bibitem[Babaioff and Roughgarden(2010)]{BaRo10a}
M.~Babaioff and T.~Roughgarden.
\newblock Equilibrium efficiency and price complexity in sponsored search
  auctions.
\newblock In \emph{Proceedings of the 6th Workshop on Ad Auctions}, 2010.

\bibitem[Babaioff et~al.(2014)Babaioff, Immorlica, Lucier, and
  Weinberg]{Babaioff14}
M.~Babaioff, N.~Immorlica, B.~Lucier, and S.~M. Weinberg.
\newblock A simple and approximately optimal mechanism for an additive buyer.
\newblock In \emph{Proceedings of the 55th Symposium on Foundations of Computer
  Science}, pages 21--30, 2014.

\bibitem[Bergemann and Morris(2005)]{BeMo05a}
D.~Bergemann and S.~Morris.
\newblock Robust mechanism design.
\newblock \emph{Econometrica}, 73\penalty0 (6):\penalty0 1771--1813, 2005.

\bibitem[Bergemann et~al.(2017)Bergemann, Brooks, and Morris]{BBM16a}
D.~Bergemann, B.~Brooks, and S.~Morris.
\newblock First price auctions with general information structures:
  {Implications} for bidding and revenue.
\newblock \emph{Econometrica}, 85\penalty0 (1):\penalty0 107--143, 2017.

\bibitem[Bhawalkar and Roughgarden(2011)]{Bhawa11}
K.~Bhawalkar and T.~Roughgarden.
\newblock Welfare guarantees for combinatorial auctions with item bidding.
\newblock In \emph{Proceedings of the 22nd Annual ACM-SIAM Symposium on
  Discrete Algorithms}, pages 700--709, 2011.

\bibitem[Blumrosen et~al.(2008)Blumrosen, Hartline, and Nong]{BHN08a}
L.~Blumrosen, J.~Hartline, and S.~Nong.
\newblock Position auctions and non-uniform conversion rates.
\newblock In \emph{Proceedings of the 4th Workshop on Ad Auctions}, 2008.

\bibitem[Box(1976)]{Box76a}
G.~E.~P. Box.
\newblock Science and statistics.
\newblock \emph{Journal of the American Statistical Association}, 71\penalty0
  (356):\penalty0 791--799, 1976.

\bibitem[Cai and Papadimitriou(2014)]{Cai14}
Y.~Cai and C.~H. Papadimitriou.
\newblock Simultaneous {Bayesian} auctions and computational complexity.
\newblock In \emph{Proceedings of the 15th ACM Conference on Economics and
  Computation}, pages 895--910, 2014.

\bibitem[Caragiannis et~al.(2015)Caragiannis, Kaklamanis, Kanellopoulos,
  Kyropoulou, Lucier, {Paes Leme}, and Tardos]{CKK+15a}
I.~Caragiannis, C.~Kaklamanis, P.~Kanellopoulos, M.~Kyropoulou, B.~Lucier,
  R.~{Paes Leme}, and {\'{E}}.~Tardos.
\newblock Bounding the inefficiency of outcomes in generalized second price
  auctions.
\newblock \emph{Journal of Economic Theory}, 156:\penalty0 343--388, 2015.

\bibitem[Carroll(2017)]{Carr17a}
G.~Carroll.
\newblock Robustness and separation in multidimensional screening.
\newblock \emph{Econometrica}, 85\penalty0 (2):\penalty0 453--488, 2017.

\bibitem[Chawla and Hartline(2013)]{ChHa13a}
S.~Chawla and J.~D. Hartline.
\newblock Auctions with unique equilibria.
\newblock In \emph{Proceedings of the 14th ACM Conference on Electronic
  Commerce}, pages 181--196, 2013.

\bibitem[Chawla et~al.(2010)Chawla, Hartline, Malec, and Sivan]{CHMS10a}
S.~Chawla, J.~D. Hartline, D.~L. Malec, and B.~Sivan.
\newblock Multi-parameter mechanism design and sequential posted pricing.
\newblock In \emph{Proceedings of the 42nd Annual ACM Symposium on Theory of
  Computing}, pages 311--320, 2010.

\bibitem[Chawla et~al.(2014)Chawla, Hartline, and Nekipelov]{ChHa14}
S.~Chawla, J.~Hartline, and D.~Nekipelov.
\newblock Mechanism design for data science.
\newblock In \emph{Proceedings of the 15th ACM Conference on Economics and
  Computation}, pages 711--712, 2014.

\bibitem[Christodoulou et~al.(2008)Christodoulou, Kov{\'{a}}cs, and
  Schapira]{Christ08}
G.~Christodoulou, A.~Kov{\'{a}}cs, and M.~Schapira.
\newblock Bayesian combinatorial auctions.
\newblock In \emph{Proceedings of the 35th International Colloquium on
  Automata, Languages and Programming}, pages 820--832, 2008.

\bibitem[Corp.(2018)]{Bing15a}
Microsoft Corp.
\newblock {B}ing {A}ds {H}elp, 2018.
\newblock URL \url{https://help.bingads.microsoft.com}.
\newblock Accessed June 18, 2018.

\bibitem[D\"utting et~al.()D\"utting, Fischer, and Parkes]{DFP17a}
P.~D\"utting, F.~Fischer, and D.~C. Parkes.
\newblock Expressiveness and robustness of first-price position auctions.
\newblock \emph{Mathematics of Operations Research}.
\newblock Forthcoming.

\bibitem[D{\"{u}}tting et~al.()D{\"{u}}tting, Henzinger, and
  Starnberger]{DHS13}
P.~D{\"{u}}tting, M.~Henzinger, and M.~Starnberger.
\newblock Valuation compressions in {VCG}-based combinatorial auctions.
\newblock \emph{ACM Transactions on Economics and Computation}.
\newblock Forthcoming.

\bibitem[D\"utting et~al.(2011)D\"utting, Fischer, and Parkes]{DFP11a}
P.~D\"utting, F.~Fischer, and D.~C. Parkes.
\newblock Simplicity-expressiveness tradeoffs in mechanism design.
\newblock In \emph{Proceedings of the 12th ACM Conference on Electronic
  Commerce}, pages 341--350, 2011.

\bibitem[D\"utting et~al.(2017{\natexlab{a}})D\"utting, Feldman, Kesselheim,
  and Lucier]{DFKL17a}
P.~D\"utting, M.~Feldman, T.~Kesselheim, and B.~Lucier.
\newblock Prophet inequalities made easy: {Stochastic} optimization by pricing
  non-stochastic inputs.
\newblock In \emph{Proceedings of the 58th Symposium on Foundations of Computer
  Science}, pages 540--551, 2017{\natexlab{a}}.

\bibitem[D\"utting et~al.(2017{\natexlab{b}})D\"utting, Gkatzelis, and
  Roughgarden]{DGR16a}
P.~D\"utting, V.~Gkatzelis, and T.~Roughgarden.
\newblock The performance of deferred-acceptance auctions.
\newblock \emph{Mathematics of Operations Research}, 42\penalty0 (4):\penalty0
  897--914, 2017{\natexlab{b}}.

\bibitem[D\"utting et~al.(2017{\natexlab{c}})D\"utting, Roughgarden, and
  Talgam-Cohen]{DRT17a}
P.~D\"utting, T.~Roughgarden, and I.~Talgam-Cohen.
\newblock Modularity and greed in double auctions.
\newblock \emph{Games and Economic Behavior}, 105:\penalty0 59--83,
  2017{\natexlab{c}}.

\bibitem[Edelman et~al.(2007)Edelman, Ostrovsky, and Schwartz]{EOS07a}
B.~Edelman, M.~Ostrovsky, and M.~Schwartz.
\newblock Internet advertising and the generalized second price auction:
  {S}elling billions of dollars worth of keywords.
\newblock \emph{American Economic Review}, 97\penalty0 (1):\penalty0 242--259,
  2007.

\bibitem[Feldman et~al.(2013)Feldman, Fu, Gravin, and Lucier]{Feld13}
M.~Feldman, H.~Fu, N.~Gravin, and B.~Lucier.
\newblock Simultaneous auctions are (almost) efficient.
\newblock In \emph{Proceedings of the 45th Annual ACM Symposium on Theory of
  Computing}, pages 201--210, 2013.

\bibitem[Feldman et~al.(2015)Feldman, Gravin, and Lucier]{FGL15a}
M.~Feldman, N.~Gravin, and B.~Lucier.
\newblock Combinatorial auctions via posted prices.
\newblock In \emph{Proceedings of the 26th Annual ACM-SIAM Symposium on
  Discrete Algorithms}, pages 123--135, 2015.

\bibitem[Gomes and Sweeney(2014)]{GoSw09a}
R.~Gomes and K.~S. Sweeney.
\newblock {Bayes-Nash} equilibria of the generalized second-price auction.
\newblock \emph{Games and Economic Behavior}, 86:\penalty0 421--437, 2014.

\bibitem[{Google, Inc.}(2018)]{Goog15a}
{Google, Inc.}
\newblock The {G}oogle {A}d{W}ords {H}elp {C}enter, 2018.
\newblock URL \url{https://support.google.com/adwords}.
\newblock Accessed June 18, 2018.

\bibitem[Graepel et~al.(2010)Graepel, Candela, Borchert, and
  Herbrich]{GraepelCBH10}
T.~Graepel, J.~Q. Candela, T.~Borchert, and R.~Herbrich.
\newblock Web-scale bayesian click-through rate prediction for sponsored search
  advertising in {M}icrosoft's {B}ing search engine.
\newblock In \emph{Proceedings of the 27th International Conference on Machine
  Learning}, pages 13--20, 2010.

\bibitem[Hartline and Roughgarden(2009)]{Hartline09}
J.~D. Hartline and T.~Roughgarden.
\newblock Simple versus optimal mechanisms.
\newblock In \emph{Proceedings of the 10th ACM Conference on Electronic
  Commerce}, pages 225--234, 2009.

\bibitem[Hoy et~al.(2013)Hoy, Jain, and Wilkens]{HJW13a}
D.~Hoy, K.~Jain, and C.~A. Wilkens.
\newblock A dynamic axiomatic approach to first-price auctions.
\newblock In \emph{Proceedings of the 14th ACM Conference on Electronic
  Commerce}, page 583, 2013.

\bibitem[Kaplan and Zamir(2012)]{Kaplan12}
T.~Kaplan and S.~Zamir.
\newblock Asymmetric first-price auctions with uniform distributions:
  {Analytic} solutions to the general case.
\newblock \emph{Economic Theory}, 50\penalty0 (2):\penalty0 269--302, 2012.

\bibitem[Leonard(1983)]{Leon83a}
H.~B. Leonard.
\newblock Elicitation of honest preferences for the assignment of individuals
  to positions.
\newblock \emph{Journal of Political Economy}, 91\penalty0 (3):\penalty0
  461--479, 1983.

\bibitem[Madar\'asz and Prat(2017)]{MaPr17a}
K.~Madar\'asz and A.~Prat.
\newblock Sellers with misspecified models.
\newblock \emph{Review of Economic Studies}, 84:\penalty0 790--815, 2017.

\bibitem[McMahan et~al.(2013)McMahan, Holt, Sculley, Young, Ebner, Grady, Nie,
  Phillips, Davydov, Golovin, Chikkerur, Liu, Wattenberg, Hrafnkelsson, Boulos,
  and Kubica]{McMahanHSYEGNPDGCLWHBK13}
H.~B. McMahan, G.~Holt, D.~Sculley, M.~Young, D.~Ebner, J.~Grady, L.~Nie,
  T.~Phillips, E.~Davydov, D.~Golovin, S.~Chikkerur, D.~Liu, M.~Wattenberg,
  A.~M. Hrafnkelsson, T.~Boulos, and J.~Kubica.
\newblock Ad click prediction: a view from the trenches.
\newblock In \emph{Proceedings of the 19th International Conference on
  Knowledge Discovery and Data Mining}, pages 1222--1230, 2013.

\bibitem[Milgrom(2010)]{Milg10a}
P.~Milgrom.
\newblock Simplified mechanisms with an application to sponsored-search
  auctions.
\newblock \emph{Games and Economic Behavior}, 70\penalty0 (1):\penalty0 62--70,
  2010.

\bibitem[Milgrom and Segal(2014)]{MiSe14}
P.~Milgrom and I.~Segal.
\newblock Deferred-acceptance auctions and radio spectrum reallocation.
\newblock In \emph{Proceedings of the 15th ACM Conference on Economics and
  Computation}, pages 185--186, 2014.

\bibitem[Myerson(1981)]{Myer81a}
R.~Myerson.
\newblock Optimal auction design.
\newblock \emph{Mathematics of Operations Research}, 6\penalty0 (1):\penalty0
  58--–73, 1981.

\bibitem[Rothkopf(2007)]{Roth07a}
M.~H. Rothkopf.
\newblock Thirteen reasons why the {Vickrey-Clarke-Groves} process is not
  practical.
\newblock \emph{Operations Research}, 55\penalty0 (2):\penalty0 191--197, 2007.

\bibitem[Roughgarden(2014)]{Rough14}
T.~Roughgarden.
\newblock Barriers to near-optimal equilibria.
\newblock In \emph{Proceedings of the 55th Symposium on Foundations of Computer
  Science}, pages 71--80, 2014.

\bibitem[Varian(2007)]{Vari07a}
H.~Varian.
\newblock Position auctions.
\newblock \emph{International Journal of Industrial Organization}, 25\penalty0
  (6):\penalty0 1163--1178, 2007.

\bibitem[Varian and Harris(2014)]{VaHa14a}
H.~R. Varian and C.~Harris.
\newblock The {VCG} auction in theory and practice.
\newblock \emph{American Economic Review}, 104\penalty0 (5):\penalty0 442--445,
  2014.

\bibitem[Vickrey(1961)]{Vick61a}
W.~Vickrey.
\newblock Counterspeculation, auctions, and competitive sealed tenders.
\newblock \emph{Journal of Finance}, 16\penalty0 (1):\penalty0 8--37, 1961.

\end{thebibliography}

\end{document}